\def\denseformat{
\setlength{\textheight}{9.5in}
\setlength{\textwidth}{6.9in}
\setlength{\evensidemargin}{-0.3in}
\setlength{\oddsidemargin}{-0.3in}
\setlength{\headsep}{10pt}
\setlength{\topmargin}{-0.44in}
\setlength{\columnsep}{0.375in}
\setlength{\itemsep}{0pt}
}
\newtheorem{theorem}{Theorem}[section]
\newtheorem{definition}[theorem]{Definition}
\newtheorem{claim}[theorem]{Claim}
\newtheorem{lemma}[theorem]{Lemma}
\newtheorem{corollary}[theorem]{Corollary}
\newtheorem{observation}[theorem]{Observation}
\def\boldhead#1:{\par\vskip 7pt\noindent{\bf #1:}\hskip 10pt}
\def\ithead#1:{\par\vskip 7pt\noindent{\it #1:}\hskip 10pt}
\def\inline#1:{\par\vskip 7pt\noindent{\bf #1:}\hskip 10pt}
\def\midinline#1:{\par\noindent{\bf #1:}\hskip 10pt}
\def\dnsinline#1:{\par\vskip -7pt\noindent{\bf #1:}\hskip 10pt}
\def\ddnsinline#1:{\newline{\bf #1:}\hskip 10pt}
\def\largeinline#1:{\par\vskip 7pt\noindent{\large\bf #1:}\hskip 10pt}
\long\def\comment #1\commentend{}
\long\def\commhide #1\commhideend{}
\long\def\commfull #1\commend{#1}
\long\def\commabs #1\commenda{}
\long\def\commtim #1\commendt{#1}
\long\def\commb #1\commbend{}
\long\def\commedit #1\commeditend{} 
\long\def\commB #1\commBend{}       
\long\def\commex #1\commexend{}     
\long\def\commsiena #1\commsienaend{}  
\long\def\commBI #1\commBIend{}  
\long\def\CProof #1\CQED{}
\def\blackslug{\hbox{\hskip 1pt \vrule width 4pt height 8pt
    depth 1.5pt \hskip 1pt}}
\def\QED{\quad\blackslug\lower 8.5pt\null\par}
\def\inQED{\quad\quad\blackslug}
\long\def\PPP#1{\noindent{\bf Proof:}{ #1}{\quad\blackslug\lower 8.5pt\null}}
\long\def\denspar #1\densend
\newif\ifnotesw\noteswtrue
\ifnotesw\marginpar[\hfill\(\top\)]{\(\top\)}\fi}%
\ifnotesw\marginpar[\hfill\(\bot\)]{\(\bot\)}\fi}
\newcommand{\mnote}[1]%
    {\ifnotesw\marginpar%
        [{\scriptsize\it\begin{minipage}[t]{\marginparwidth}
        \raggedleft#1%
                        \end{minipage}}]%
        {\scriptsize\it\begin{minipage}[t]{\marginparwidth}
        \raggedright#1%
                        \end{minipage}}%
    \fi}
\def\cA{{\cal A}}
\def\cB{{\cal B}}
\def\cC{{\cal C}}
\def\cE{{\cal E}}
\def\cF{{\cal F}}
\def\cG{{\cal G}}
\def\cH{{\cal H}}
\def\cI{{\cal I}}
\def\cJ{{\cal J}}
\def\cL{{\cal L}}
\def\cS{{\cal S}}
\def\MathF{\hbox{\rm I\kern-2pt F}}
\def\MathP{\hbox{\rm I\kern-2pt P}}
\def\MathR{\hbox{\rm I\kern-2pt R}}
\def\MathZ{\hbox{\sf Z\kern-4pt Z}}
\def\MathN{\hbox{\rm I\kern-2pt I\kern-3.1pt N}}
\def\MathC{\hbox{\rm \kern0.7pt\raise0.8pt\hbox{\footnotesize I}
\kern-4.2pt C}}
\def\MathQ{\hbox{\rm I\kern-6pt Q}}
\newsavebox{\ttop}\newsavebox{\bbot}
\def\eps{\epsilon}
\def\nin{{~\not \in~}}
\newcommand {\ignore} [1] {}
\begin{document}

\title{\vspace{-0.02in} Optimal Euclidean spanners: really short, thin and lanky\thanks{An earlier version of this paper can be found in \cite{ES12}.}}
\author{
Michael Elkin \thanks{Department of Computer Science,
        Ben-Gurion University of the Negev, POB 653, Beer-Sheva 84105, Israel.
          E-mail: {\tt elkinm@cs.bgu.ac.il}.
          This author is supported
         by the BSF grant No.\ 2008430, by the ISF grant No.\ 87209011,
        and by the Lynn and William Frankel Center for Computer Sciences.}
\and
Shay Solomon \thanks{Department of Computer Science and Applied Mathematics, The Weizmann Institute of Science, Rehovot 76100, Israel.
E-mail: {\tt shay.solomon@weizmann.ac.il}.
Part of this work was done while this author was a graduate student in the Department of Computer Science, Ben-Gurion University of the Negev,
under the support of the Clore Fellowship grant No.\ 81265410, the BSF grant No.\ 2008430, and the ISF grant No.\ 87209011.}}

\date{\empty}

\begin{titlepage}
\def\thepage{}
\maketitle

\begin{abstract}
 The   degree, the (hop-)diameter, and the weight are the most basic and well-studied parameters of geometric spanners.
 In a seminal STOC'95 paper, titled ``Euclidean spanners: short, thin and lanky'', Arya et al.\ \cite{ADMSS95} devised a construction
 of Euclidean $(1+\eps)$-spanners that achieves constant degree,
 diameter $O(\log n)$, weight $O(\log^2 n) \cdot \omega(MST)$,
 and has running time $O(n \cdot \log n)$. This construction applies to $n$-point constant-dimensional Euclidean spaces.
 Moreover, Arya et al.\ conjectured that the weight bound can be improved by a logarithmic factor, without increasing the degree
 and the diameter of the spanner, and within the same running time.

 This conjecture of Arya et al.\ became one of the most central open problems in the area of Euclidean spanners.
 Nevertheless, the only progress since 1995 towards   its resolution was achieved in the lower bounds front:
 Any spanner with diameter $O(\log n)$ must incur weight $\Omega(\log n) \cdot \omega(MST)$, and this lower bound holds
 regardless of the stretch or the degree of the spanner \cite{DES08,AWY05}.

 In this paper we resolve the long-standing conjecture of Arya et al.\ in the affirmative.
 We present a spanner construction with the same stretch, degree, diameter, and running time,
 as in Arya et al.'s result, but with \emph{optimal weight}
 $O(\log n) \cdot \omega(MST)$.
 So our spanners are as thin and lanky as those of Arya et al., but they are \emph{really} short!

 Moreover, our result is  more general in three ways.
 First, we demonstrate that the conjecture holds true not only in constant-dimensional Euclidean spaces, but also in  \emph{doubling metrics}.
 Second, we provide a general tradeoff between the three involved parameters, which is \emph{tight in the entire range}.
 Third, we devise a transformation that decreases the lightness of spanners in \emph{general metrics}, while keeping all their other parameters
 in check. Our main result is obtained as a corollary of this transformation.
\end{abstract}
\end{titlepage}

\pagenumbering {arabic}

\section{Introduction}

{\bf 1.1 ~Euclidean Metrics.}
~~Consider a set $P$ of $n$ points in $\mathbb R^d$, $d \ge 2$, and a real number $t \ge 1$.
A graph $G = (P,E,\omega)$ in which the weight $\omega(p,q)$ of each edge $e = (p,q) \in E$
is equal to the Euclidean distance $\|p-q\|$ between $p$ and $q$ is called a \emph{Euclidean graph}.
We say that the Euclidean graph $G$ is a \emph{$t$-spanner} for $P$ if for
every pair $p,q \in P$ of distinct points, there exists a path $\Pi(p,q)$ in $G$
between $p$ and $q$ whose weight (i.e., the sum of all edge weights in it)
is at most $t \cdot \|p-q\|$. The parameter $t$ is called the \emph{stretch}
of the spanner. The path $\Pi(p,q)$ is said to be a \emph{$t$-spanner path}
between $p$ and $q$. In this paper we focus on the regime $t = 1+\eps$,
for $\eps > 0$ being an arbitrarily small constant. We will also concentrate
on spanners with $|E| = O(n)$ edges.
Euclidean spanners were introduced by Chew \cite{Chew86} in 1986.
The first constructions of $(1+\eps)$-spanners with $O(n)$ edges were devised
soon afterwards \cite{Clark87,Keil88}, and the running time
of such constructions was improved to $O(n \cdot \log n)$
a few years later \cite{Vai91,Sal91}.

Euclidean spanners turned out to be a fundamental geometric construct,
with numerous applications. In particular, they were found useful
in geometric approximation algorithms
\cite{RS98,GLNS02,GLNS08}, geometric distance oracles
\cite{GLNS02,GNS05,GLNS08} and network design \cite{HP00,MP00}.
Various properties of Euclidean spanners are
a subject of intensive ongoing research effort \cite{KG92,CDNS92,ADDJS93,DHN93,DN94,
ADMSS95,DNS95,RS98,GLN02,AWY05,CG06,DES08}.
See also the book by Narasimhan and Smid \cite{NS07}, and the references therein.
This book is titled ``Geometric Spanner Networks'', and it is devoted
almost exclusively to Euclidean spanners and their numerous applications.

In addition to stretch ($t = 1+\eps$) and sparsity ($|E| = O(n)$),
other fundamental properties of Euclidean spanners include their
\emph{(maximum) degree}, their \emph{(hop-)diameter}, and
their \emph{lightness}.
The \emph{degree} $\Delta(G)$ of a spanner $G$ is the
maximum degree of a vertex in $G$.
The \emph{diameter} $\Lambda(G)$ of a
$(1+\eps)$-spanner $G$
is the smallest number $\Lambda$ such that for every pair of points $p,q
\in P$ there exists a $(1+\eps)$-spanner path between $p$ and $q$ in $G$
that consists of at most $\Lambda$ edges (or \emph{hops}).
The \emph{lightness} $\Psi(G)$ of a spanner $G$ is defined as the ratio
between the \emph{weight} $\omega(G) = \sum_{e \in E} \omega(e)$ of $G$ and
the weight $\omega(MST(P))$ of the minimum spanning tree $MST(P)$ for
the point set $P$.

In this section we may write ``spanner'' as
a shortcut for a ``$(1+\eps)$-spanner with $O(n)$ edges''.
Feder and Nisan devised a construction of spanners with bounded degree (see \cite{AS94,Salowe92,Vai91}).
In FOCS'94, Arya et al.\ \cite{AMS94} devised a construction of spanners with
logarithmic diameter. The diameter was improved to $O(\alpha(n))$,
where $\alpha(n)$ is the inverse-Ackermann function, by Arya et al.\
\cite{ADMSS95} in STOC'95. 
(Further work on the tradeoff between the diameter and number of
edges in spanners can be found in \cite{CG06,NS07,Sol11}.)

Also, in the beginning of the nineties researchers started to systematically
investigate
spanners
that \emph{combine} several  parameters (among degree, diameter
and lightness).
Arya and Smid \cite{AS94} devised a construction of spanners with constant degree
and lightness. The running time of their construction is $O(n \cdot \log^d n)$, where $d = O(1)$ stands for the Euclidean dimension.
Other spanner constructions with constant degree and lightness, 
but with running time of $O(n \cdot \log n)$,
were subsequently devised in \cite{DN94,GLN02}. 
Arya et al.\ \cite{ADMSS95}  devised a construction of spanners with logarithmic
diameter and logarithmic lightness. (This combination was shown to be
optimal by Dinitz et al.\ \cite{DES08} in FOCS'08; see also
\cite{LSW94,AWY05} for previous lower bounds on this problem.)
This construction of \cite{ADMSS95}
may have, however, an arbitrarily large degree.
On the other hand, Arya et al.\ \cite{ADMSS95} also devised a construction
of spanners with constant degree, logarithmic diameter and lightness $O(\log^2 n)$.
In the end of their seminal work
Arya et al.\ \cite{ADMSS95} conjectured that one can obtain
a spanner with constant degree, logarithmic diameter and logarithmic lightness within time $O(n \cdot \log n)$.
Specifically, they wrote: {\vspace{0.05in}\\{\bf Conjecture 1 (\cite{ADMSS95})} \emph{~For any $t > 1$, and any
dimension $d$, there is a $t$-spanner, constructible in $O(n \cdot \log n)$ time,
with bounded degree, $O(\log n)$ diameter, and weight $O(\omega(MST) \cdot \log n)$.}

\vspace{0.05in}
In this paper\footnote{An earlier version of this paper can be found in \cite{ES12}.} we prove the conjecture of Arya et al.\ \cite{ADMSS95}, and devise
a construction of $(1+\eps)$-spanners with bounded degree, and with
logarithmic diameter and lightness. The running time of our construction is
$O(n \cdot \log n)$, matching the time bound conjectured in \cite{ADMSS95}. Moreover, this running time
is optimal in the algebraic computation-tree model \cite{CDS01}.
(We remark that regardless of the running time, prior to our work it was unknown whether $(1+\eps)$-spanners with constant degree, and logarithmic
diameter and lightness exist, even for 2-dimensional point sets.)

In fact, our result is far more general than this.
Specifically, we provide a tradeoff parameterized by a degree parameter $\rho \ge 2$, summarized below.
\begin{theorem} \label{tm1}
For any set of $n$ points in Euclidean space of any constant dimension $d$,
any $\eps > 0$ and any parameter $\rho \ge 2$, there exists a $(1+\eps)$-spanner with  $O(n)$ edges, degree $O(\rho)$,
diameter $O(\log_\rho n + \alpha(\rho))$ and lightness $O(\rho \cdot \log_\rho n)$.
The running time of our   construction is   $O(n \cdot \log n)$.
\end{theorem}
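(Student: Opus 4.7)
The plan is to decouple the problem into two independent pieces: first construct a $(1+\eps)$-spanner with the desired degree and diameter --- but possibly suboptimal weight --- and then apply a general-metric \emph{lightness-reduction transformation} that improves the weight without harming the other parameters too much. The paper's abstract foreshadows this very structure, stating that the main theorem is obtained as a corollary of such a transformation in general metrics, and specializing to doubling / constant-dimensional Euclidean spaces.

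For the base spanner, I would build a hierarchical net tree on $P$ at geometric scales $2^i$; since constant-dimensional Euclidean space is doubling, such a tree, together with the standard cross-scale edges, can be computed in $O(n \cdot \log n)$ time. Using the $\rho$-ary recursion of Arya et al.~\cite{ADMSS95} on the tree, one obtains a spanner of degree $O(\rho)$ and diameter $O(\log_\rho n + \alpha(\rho))$, where the $\alpha(\rho)$ summand comes from the Ackermann-style amortization that saves hops when $\rho = O(1)$. Without any further effort, however, this initial construction incurs lightness of the form $O(\rho \cdot \log_\rho^2 n)$ (essentially matching Arya et al.'s bound of $O(\log^2 n)$ in the regime $\rho = O(1)$), since each of the $\Theta(\log_\rho n)$ scales contributes $O(\rho) \cdot \omega(MST)$ in a naive accounting. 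Our goal is then to strip away the extra $\log_\rho n$ factor.

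The lightness-reduction transformation would operate scale by scale. At scale $i$, any spanner edge of length $\Theta(2^i)$ is classified as \emph{heavy} if its length is substantially larger than the weight of the portion of $MST(P)$ locally available to it. Heavy edges are rerouted through a small ``backbone'' of light auxiliary trees anchored at the net points of nearby scales, using detours of constant hop-length (depending only on $\eps$). The $(1+\eps)$-stretch is preserved because each detour adds only an $O(\eps) \cdot 2^i$ overhead along an edge of length $\Theta(2^i)$, and the diameter inflates by at most a constant factor, preserving $O(\log_\rho n + \alpha(\rho))$. The weight analysis charges the rerouted edges against the backbone trees, so that scale $i$ contributes a total of $O(\rho) \cdot \omega(MST)$; summing over $O(\log_\rho n)$ scales yields the target lightness $O(\rho \cdot \log_\rho n)$, which matches the lower bound of \cite{DES08,AWY05} up to the factor $\rho$.

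The main obstacle will be controlling the degree after rerouting: hub points on the backbone risk absorbing many rerouted edges, which would destroy the $O(\rho)$ degree bound. This forces the transformation to spread the detour load across $\Theta(\rho)$ hubs per scale and to exploit the doubling property to ensure that at each hub only $O(1)$ rerouted edges per scale converge. A secondary difficulty is ensuring that the rerouting compositions do not introduce an extra multiplicative factor in the diameter; this would be handled by bounding each detour by a constant number of hops and by making the detours aligned with the hierarchical structure already used for the diameter bound. Finally, the $O(n \cdot \log n)$ runtime would follow because the net tree, the backbone trees, and the detour assignments can all be computed with total work $O(n)$ on top of the initial net-tree construction.
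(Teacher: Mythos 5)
Your high-level decomposition---build a base spanner with degree $O(\rho)$ and diameter $O(\log_\rho n + \alpha(\rho))$, then apply a lightness-reduction transformation---does match the paper's architecture, and you correctly infer from the abstract that the theorem is a corollary of a general-metric transformation. However, the mechanism you propose for that transformation is both different from the paper's and not developed far enough to be verified.

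The paper's transformation does not work by rerouting heavy edges of the base spanner through backbone trees. It is a refinement of the Chandra--Das--Narasimhan--Soares transformation: it takes the preorder Hamiltonian traversal $\cL$ of (an approximate) MST, builds a 1-dimensional 1-spanner for $\cL$ to control diameter, and then, for $\ell = O(\log_\rho n)$ scales, partitions $\cL$ into $n_j \approx c n / \rho^{j-1}$ intervals, picks one representative per non-empty interval, invokes the base spanner construction afresh on the representative set $Q_j$, and prunes edges longer than a threshold $\tau_j \approx \rho^j \cdot L/n$. Lightness follows because $|Q_j| \cdot \tau_j = O(\rho) \cdot L$ at every scale, and the $|Q_j|$ decay geometrically. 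There is no edge classification or rerouting step at all; the weight bound is obtained by never inserting long edges in the first place, not by repairing them afterwards. Your weight claim that ``scale $i$ contributes $O(\rho)\cdot\omega(MST)$'' is essentially the right target quantity, but your rerouting-and-charging scheme as stated does not obviously achieve it while also preserving stretch and diameter.

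The more fundamental gap is degree control. You flag ``controlling the degree after rerouting'' as the main obstacle and propose to ``spread the detour load across $\Theta(\rho)$ hubs per scale,'' but this does not address the real difficulty, which the paper explicitly identifies as its heart. Even in the paper's interval-based scheme, the representatives cannot simply be spread: if one point $p$ is the only point of $Q$ in an interval $I_j$ and also the only point in the intervals $I_{j+1} \supset I_j$, $I_{j+2} \supset I_{j+1}$, etc., then $p$ is forced to be the representative at every one of those levels and accumulates degree $\Omega(\ell) = \Omega(\log_\rho n)$. No local load-spreading fixes this, because there is nothing nearby on $\cL$ to spread the load to. The paper's solution is structural: it replaces the fixed interval hierarchy $\cF$ by a modified hierarchy $\hat\cF$ of ``bags'' that are allowed to be \emph{re-parented}. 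A small lonely bag that becomes active is \emph{attached} to a nearby bag via the zombie/incubator mechanism; after a suspension period of $\gamma = \Theta(\log_\rho(t/\eps))$ levels the bags merge, which injects new points into the kernel set $K(v)$ from which representatives are drawn, so that no point is chosen too often. Controlling the interaction between these merges, the diameter (bags must remain small in weighted diameter), and the stretch is precisely what the zombie/incubator bookkeeping and the kernel-set definitions are for. Your proposal never confronts the possibility of an unavoidably isolated representative, so it would fail on the instances that motivated the paper's construction.
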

Due to lower bounds by \cite{CG06,DES08},
this tradeoff is \emph{optimal in the entire range} of the parameter $\rho$.
See Table \ref{tab2} for a concise summary of previous and our results for low-dimensional Euclidean metrics.
\begin{table*}
\begin{center}
\footnotesize
\begin{tabular}{|c|c|c|c|c|c|}
\hline  Reference & Degree &  Diameter & Lightness   \\
\hline Gudmundsson et al.\ \cite{GLN02} & $O(1)$ & unspecified & $O(1)$     \\
\hline Arya et al.\ \cite{ADMSS95} & unspecified & $O(\log n)$ & $O(\log n)$    \\
\hline Arya et al.\ \cite{ADMSS95}  & $O(1)$ & $O(\log n)$ & $O(\log^2 n)$    \\
\hline Arya et al.\ \cite{ADMSS95}  & unspecified & $O(\alpha(n))$ & unspecified   \\
\hline Solomon and Elkin \cite{SE10} & $O(\rho)$ &  $O(\log_\rho n + \alpha(\rho))$ & $O(\rho \cdot \log_\rho n \cdot \log n)$    \\
\hline \hline {\bf New} &   {\boldmath $O(1)$} & {\boldmath $O(\log n)$} &  {\boldmath $O(\log n)$}   \\
\hline {\bf New} &   {\boldmath $O(\rho)$} & {\boldmath $O(\log_\rho n + \alpha(\rho))$} &  {\boldmath $O(\rho \cdot \log_\rho n)$}  \\
\hline
\end{tabular}
\end{center}
\caption[]{ \label{tab2} \footnotesize A  comparison of previous and new
constructions of $(1+\epsilon)$-spanners with $O(n)$ edges for  low-dimensional Euclidean  metrics.
All these constructions have the same running time $O(n \cdot \log n)$.}
\end{table*}
\ignore{
\\In particular, by setting $\rho = 2$, we get the positive answer for Arya et al.'s
conjecture.
\\For $\rho = n^{1/\alpha(n)}$ we get $O(n)$ edges, diameter $O(\alpha(n))$,
degree $O(n^{1/\alpha(n)})$ and lightness $O(n^{1/\alpha(n)} \cdot \alpha(n))$.
In the corresponding result from \cite{ADMSS95},
which was achieved by a separate construction,
the spanner has the same number of edges and diameter, but its
degree and lightness may be arbitrarily large.}
\vspace{0.12in}
$\\$
{\bf 1.2 ~Doubling Metrics.}
~~Our result extends in another direction as well.
Specifically, it applies to any \emph{doubling metric}.\footnote{
The \emph{doubling dimension} of a metric
is the smallest value $d$
such that every ball $B$ in the metric can be covered by at most
$2^{d}$ balls of half the radius of $B$.
This generalizes the Euclidean dimension, because the doubling dimension
of Euclidean space $\mathbb R^d$ is proportional to $d$.
 A metric
is called \emph{doubling}
if its doubling dimension is constant.}
Doubling metrics, implicit in the works of Assoud \cite{Ass83} and Clarkson \cite{Clark99}, were explicitly defined
by Gupta et al.\ \cite{GKL03}. They were subject of intensive research since then
\cite{KL04,Tal04,HPM06,CG06,ABN11,BGK12}.

Spanners for doubling metrics were also intensively studied \cite{GGN04,CGMZ05,HPM06,Rod07,GR081,GR082,Smid09}.
They were also found useful for approximation algorithms \cite{BGK12},
and for machine learning \cite{GKK12}. In SODA'05 Chan et al.\ \cite{CGMZ05} showed that
for any doubling metric there exists a spanner with constant
degree. In SODA'06, Chan and Gupta \cite{CG06} devised  a construction of
spanners with diameter $O(\alpha(n))$.
Smid \cite{Smid09} showed that in doubling metrics a greedy
construction produces spanners with logarithmic lightness. (The greedy spanner can be constructed within time $O(n^2 \cdot \log  n)$
in doubling metrics \cite{BCFMS10}.)
Gottlieb et al.\ \cite{GKK12} devised a construction of spanners
with constant degree and logarithmic diameter, within $O(n \cdot \log n)$ time.
To the best of our knowledge, prior to our work, there were no known constructions of
spanners for doubling metrics that
provide logarithmic diameter and lightness simultaneously (even allowing arbitrarily large degree).\footnote{On the other hand,
as was mentioned in Section 1.1, for Euclidean metrics such a construction was devised by
Arya et al.\ \cite{ADMSS95}. However, the degree in the latter construction is unbounded.}

We show that our construction extends to doubling metrics without incurring any overhead (beyond constants)
in the degree, diameter, lightness, and running time. In other words, Theorem \ref{tm1} applies to doubling metrics.
See Table \ref{tab1} for a   summary of previous and our results for doubling metrics.
\begin{table*}
\begin{center}
\footnotesize
\begin{tabular}{|c|c|c|c|c|c|}
\hline  Reference & Degree &  Diameter & Lightness & Running Time  \\
\hline Chan et al.\ \cite{CGMZ05}  & $O(1)$ & unspecified & unspecified & unspecified  \\
\hline Gottlieb and Roditty \cite{GR082}  & $O(1)$ & unspecified & unspecified & $O(n \cdot \log n)$ \\
\hline Chan and Gupta \cite{CG06} & unspecified & $O(\alpha(n))$ & unspecified & $O(n \cdot \log n)$  \\
\hline Smid \cite{Smid09} & unspecified & unspecified & $O(\log n)$ & $O(n^2 \cdot \log n)$  \\
\hline Gottlieb et al.\ \cite{GKK12} &$O(1)$ & $O(\log n)$ & unspecified & $O(n\cdot \log n)$   \\
\hline \hline {\bf New} &   {\boldmath $O(1)$} & {\boldmath $O(\log n)$} &  {\boldmath $O(\log n)$} &  {\boldmath $O(n \cdot \log n)$}  \\
\hline {\bf New} &   {\boldmath $O(\rho)$} & {\boldmath $O(\log_\rho n + \alpha(\rho))$} &  {\boldmath $O(\rho \cdot \log_\rho n)$} &  {\boldmath $O(n \cdot \log n)$}  \\
\hline
\end{tabular}
\end{center}
\caption[]{ \label{tab1} \footnotesize A  comparison of previous and new
constructions of $(1+\epsilon)$-spanners with $O(n)$ edges for   doubling metrics.}
\end{table*}
\vspace{0.12in}
$\\$
{\bf 1.3 ~Our and Previous Techniques.}
~~Our starting point is the paper of Chandra et al.\ \cite{CDNS92} from SoCG'92 (see also \cite{CDNS95}).
In this paper  the authors devised a general
transformation: given a  construction of spanners with certain stretch
and number of edges their transformation returns a construction with
roughly the same stretch and number of edges, but with a logarithmic
lightness. The drawback of their transformation is that it blows up the
degree and the diameter of the original spanner.

In this paper we devise a much more refined transformation.
Our transformation    enjoys  all the useful properties of the transformation
of \cite{CDNS92}, but, in addition, it preserves (up to constant factors) the degree
and the diameter of the original construction. We then compose our refined
transformation on top of known constructions of spanners with constant
degree and logarithmic diameter (due to Arya et al.\ \cite{ADMSS95} in
the Euclidean case, and due to Gottlieb et al.\ \cite{GKK12} in the case
of doubling metrics).
As a result we obtain a construction of spanners with constant degree,
logarithmic diameter and logarithmic lightness.
The latter proves the conjecture of Arya et al.\ \cite{ADMSS95}.

We remark that our transformation can be applied not only for  Euclidean
or doubling metrics, but rather in much more general scenarios.
In fact, in \cite{ES13} we have already  obtained some improved results for spanners in general
graphs that are based on a variant of  this transformation.

Next, we provide a schematic overview of
the two transformations (the one due to  \cite{CDNS92},
and our refined one). The transformation of \cite{CDNS92} starts with constructing
an MST $T$ of the input metric. Then it constructs the preorder traversal
path $\cL$ of $T$. The path $\cL$ is then partitioned into  $c \cdot n$
intervals of length $\frac{|\cL|}{c \cdot n}$ each, for a constant $c > 1$.
This is the bottom-most level $\cF_1$ of the hierarchy $\cF$ of intervals that the
transformation constructs.
Pairs of consecutive intervals are grouped together; this gives rise to
$c \cdot n/2$   intervals of length $2 \cdot \frac{|\cL|}{c \cdot n}$  each.
The hierarchy $\cF$ consists of $\ell = \log n$ levels, with $c$
 intervals of length $\frac{|\cL|}{c}$ each in the last level $\cF_\ell$.

In each level $j \in [\ell]$ of the hierarchy each non-empty interval
is represented by a point of the original metric (henceforth, its \emph{representative}). Let $Q_j$ denote
the set of $j$-level representatives. The transformation then invokes
its input black-box construction of spanners on each point set $Q_j$ separately.
Each of those $\ell$ auxiliary spanners is then pruned, i.e., ``long'' edges are removed
from it. The remaining edges in all the auxiliary spanners, together with the MST $T$, form the output spanner.

Intuitively, the pruning step ensures that the resulting spanner is reasonably light.
The stretch remains roughly intact, because each distance is taken care ``on its
own scale''. The number of edges does not grow by much, because the sequence
$|Q_1|,|Q_2|,\ldots,|Q_\ell|$ decays geometrically. However, the diameter is blown
up, because within each interval the MST-paths (which may contain   many edges) are used to reach points that
do not serve as representatives. Also, the degree is blown up because the same
point may serve as a representative in many different levels.

To fix the problem with the diameter we
 use a construction of 1-dimensional spanners to shortcut
the traversal path $\cL$. We remark that $(1+\eps)$-spanners with $O(n)$
edges, constant degree, logarithmic diameter and logarithmic lightness
for sets of $n$ points on a line (1-dimensional case) were devised already in 1995
by Arya et al.\ \cite{ADMSS95}.
Plugging\footnote{In fact, we use our own more recent construction
\cite{SE10} of 1-spanners for 1-dimensional spaces with the above properties.
Having stretch 1 instead of $(1+\eps)$ simplifies the analysis.}
this 1-dimensional spanner construction  into the transformation of Chandra et al.\ \cite{CDNS92} gives
rise to an improved transformation that keeps the diameter in check, but still
blows up the degree.

To fix the problem with the degree, it is natural to try
distributing the degree load evenly between ``nearby'' points along $\cL$.
Alas, if one sticks with the original
hierarchy  $\cF$ of partitions of $\cL$ into intervals,
this turns out to be impossible. 
The problem is that the same point
may well be the only eligible representative for many   levels of the hierarchy.
Overcoming this hurdle is the heart of our paper.
Instead of intervals we divide the point set into a different hierarchy $\hat \cF$ of sets,
which we call \emph{bags}. On the lowest level of the hierarchy the bags and the intervals
coincide. 
As our algorithm proceeds it carefully
moves points between bags so as to guarantee that no point will ever be overloaded.
At the same time we never put points that are far away from one another in the original metric into the same bag.
Indeed, if remote points
end up in the same bag,
then  the auxiliary spanners for the sets of representatives, as well as the 1-dimensional spanner for $\cL$, cease  providing
short
$(1+\eps)$-spanner paths for the
original point set.
On the other hand, degree constraints may force our algorithm to relocate points arbitrarily far
away from their initial position on $\cL$.
Our construction balances carefully 
between these two contradictory requirements.  
\vspace{-0.02in}
\\
{\bf 1.4 ~Related Work.}
~~Most of the related work was already discussed above.
One more relevant result is the ESA'10 paper \cite{SE10} by the authors of the current paper.
There we devised a construction of spanners that trades gracefully between
the degree, diameter and lightness. That construction, however, could only
match the previous  suboptimal bounds of Arya et al.\ \cite{ADMSS95}, but not improve them.
In particular, the lightness of the construction of \cite{SE10} is $\Omega(\log^2 n)$,
regardless of the other parameters.
\vspace{0.12in}
\\
{\bf 1.5 ~Consequent Work.}
~~A preliminary version of this paper started to circulate in April 2012 \cite{ES12}.
It sparked a number of follow-up papers. First, in \cite{ES13} we used the technique developed in this paper
to devise an efficient construction of light spanners for general graphs.
Second, in \cite{CLN12} Chan et al.\ came up with an alternative construction of  spanners for doubling
metrics with constant degree, and logarithmic diameter and lightness. Their construction
and analysis are arguably simpler than ours.
In addition, they   extended this result to the fault-tolerant setting. 
A yet alternative construction of fault-tolerant spanners with the same properties and with running time $O(n \cdot \log n)$
was devised recently by Solomon \cite{Sol12}.
However, while our construction provides an optimal tradeoff
between the diameter and lightness ($O(\log_\rho n + \alpha(\rho))$ versus $O(\rho \cdot \log_\rho n)$ for the entire range of the parameter
$\rho \ge 2$), the constructions of \cite{CLN12,Sol12} do the job only for $\rho = O(1)$.
As far as we know they cannot be extended to provide the general tradeoff.
Finally, the constructions of \cite{CLN12,Sol12} do not provide a transformation for converting
spanners into light spanners in general metrics. 

Finally, we stress that both constructions \cite{CLN12,Sol12} are consequent to our work.
These constructions build upon ideas
and techniques that we present in the current paper.
\vspace{0.12in}
\\
{\bf 1.6 ~Structure of the Paper.}
~~In Section  \ref{section2} we describe our construction (Algorithm $LightSp$). The description of the algorithm is provided in Sections  \ref{sec21}-\ref{s:jlevel}. 
A detailed outline of Section \ref{section2} appears in the paragraph preceding Section \ref{sec21}.
We analyze the properties of the spanners produced by our algorithm in Section \ref{section3}. 
The most elaborate and technically involved parts of the analysis concern 
the stretch and diameter (Section \ref{stchdiam}) and the degree (Section \ref{deg:app}) of the produced spanners.
\vspace{0.12in}
\\
{\bf 1.7 ~Preliminaries.}
~~The following theorem provides optimal spanners for \emph{1-dimensional} Euclidean metrics with respect to all three parameters
(degree, diameter and lightness).
%
\begin{theorem} [\cite{ADMSS95,SE10}] \label{1span}
For any $n$-point 1-dimensional space $M$
and any   $\rho \ge 2$,
there
exists a 1-spanner $H$ with $|H| = O(n)$, $\Delta(H) = O(\rho)$,
$\Lambda(H) = O(\log_\rho n + \alpha(\rho))$ and $\Psi(H) = O(\rho \cdot \log_\rho n)$.
The running time of this construction is $O(n)$.
\end{theorem}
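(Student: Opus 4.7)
The plan is to build $H$ by a hierarchical shortcut scheme on the sorted point set $p_1 < p_2 < \cdots < p_n$. The key observation that makes the $1$-dimensional case much easier than the general Euclidean case is that stretch $1$ reduces to a purely combinatorial condition: for every $i<j$ the spanner must contain a monotone path from $p_i$ to $p_j$, and any such path has weight exactly $p_j - p_i$. So the task is to design a sparse ``monotone-connecting'' graph on $p_1, \ldots, p_n$ with low diameter, low degree, and low total weight.

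First I would build a balanced $\rho$-ary tree $T$ whose leaves, in left-to-right order, are $p_1, \ldots, p_n$, so $T$ has height $O(\log_\rho n)$. For each internal node $v$ with children $c_1, \ldots, c_\rho$, choose a representative leaf $r(c_i)$ in the subtree of $c_i$ and add to $H$ the chain of $\rho-1$ edges $(r(c_1), r(c_2)), (r(c_2), r(c_3)), \ldots, (r(c_{\rho-1}), r(c_\rho))$. A $1$-spanner path from $p_i$ to $p_j$ is then obtained by ascending $T$ from $p_i$ to the LCA of $p_i, p_j$ and descending to $p_j$, using $O(1)$ chain hops per level, for a total of $O(\log_\rho n)$ hops. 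Counting the edges level by level yields $|H| = O(n)$ by a geometric sum, and the chains at a single level of $T$ form disjoint monotone segments whose total weight is at most $p_n - p_1 \le \omega(MST(P))$, giving total weight $O(\log_\rho n) \cdot \omega(MST(P))$ before the degree-fix, and $O(\rho \log_\rho n) \cdot \omega(MST(P))$ after it.

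The main obstacle is controlling the degree. If the representative is always ``the leftmost leaf'', then $p_1$ is a representative at \emph{every} level of $T$ and has degree $\Omega(\rho \log_\rho n)$. To enforce $\Delta(H) = O(\rho)$, I would distribute the representative role across distinct leaves inside each block (for instance by a cyclic rotation as we ascend, or by pairing each promotion with a fresh ``token'' leaf), so that every leaf is chosen as a representative at only $O(1)$ ancestors in $T$. Verifying that monotone connectivity (hence stretch $1$) is preserved, that the per-level diameter overhead stays $O(1)$, and that the degree interaction between different levels remains $O(\rho)$, is the most delicate step, and it is precisely this step that introduces the extra factor of $\rho$ in the lightness.

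Finally, to obtain the additive $\alpha(\rho)$ term in the diameter — which is only nontrivial in the regime where $\log_\rho n$ is already constant — I would replace the top of $T$ with an Ackermann-style shortcut hierarchy in the spirit of the classical Alon--Schieber/Chazelle constructions for path shortcuts on a line, where successive layers shrink the residual hop count according to the inverse Ackermann function while each layer adds only $O(\rho)$ edges per representative and $O(\omega(MST(P)))$ weight. All stated bounds are thus preserved. Since the input is already sorted along the line, each level of the construction can be processed in linear time in the number of points it contains, so the recursion runs in $O(n)$ total time.
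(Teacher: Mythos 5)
The paper does not supply its own proof of Theorem~\ref{1span}: it is imported as a black box from \cite{ADMSS95,SE10} (Appendix~\ref{appB} only proves Theorem~\ref{twoprop}, which carries no lightness guarantee). That said, your sketch has a concrete gap that prevents it from giving the claimed diameter bound.

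The problem is the chain. If you link the $\rho$ child-representatives of an internal node by the monotone chain $(r(c_1),r(c_2)),\ldots,(r(c_{\rho-1}),r(c_\rho))$, then ascending one level does \emph{not} cost $O(1)$ hops. To move from the representative of $p_i$'s level-$(h-1)$ interval to the representative of the enclosing level-$h$ interval you must walk along the chain, and this can take $\Theta(\rho)$ hops (for instance when each internal node inherits its leftmost child's representative and $p_i$ sits in the rightmost child). So the hop-diameter of the construction as you describe it is $\Theta(\rho\log_\rho n)$, not $O(\log_\rho n)$. The structure that actually yields $O(1)$ hops per level is a \emph{fan}: attach the level-$h$ representative directly to each of its $\rho$ child-representatives. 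With a fan, a single level contributes weight $O(\rho)\cdot\omega(MST)$ (the $\rho-1$ edges around a center are nested, each at most the interval span), so over $O(\log_\rho n)$ levels one gets lightness $O(\rho\log_\rho n)$ -- this is where the $\rho$ factor in the weight genuinely comes from. Your account attributes that factor to the representative rotation, but with chains the rotation does not increase per-level weight at all (a chain's weight is always bounded by the interval span), and it would also leave the degree $O(1)$ rather than $O(\rho)$; both observations show that the chain model is not internally consistent with the degree, diameter and lightness bounds you are trying to prove.

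Two smaller remarks. The inverse-Ackermann stage you sketch for the top of the tree is in the right spirit and is indeed where the $\alpha(\rho)$ term comes from. And the $O(n)$ running time is fine under the standing assumption that the points come pre-sorted along the line, which they do in the application (they arise from a preorder traversal). If you replace the chain by a fan and use the rotation rule only to decide which child-representative serves as the fan's center at each level, the outline becomes consistent with the stated bounds and with the cited construction.
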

The following theorem provides spanners for doubling metrics with an optimal tradeoff between
the degree and diameter. Note, however, that this tradeoff does not involve lightness.
\begin{theorem} [\cite{ADMSS95,GR082,SE10}] \label{twoprop}
For any $n$-point doubling metric $M = (P,\delta)$,
any $\eps > 0$ and any   $\rho \ge 2$,
there exists a $(1+\eps)$-spanner $H$ with
$|H| = O(n)$, $\Delta(H) = O(\rho)$ and $\Lambda(H) = O(\log_\rho n + \alpha(\rho))$.
The running time of this construction is $O(n \cdot \log n)$.
\end{theorem}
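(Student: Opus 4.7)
The plan is to combine a net-tree-based spanner for doubling metrics, which provides $(1+\eps)$-stretch with constant per-scale degree, together with the one-dimensional shortcut spanner of Theorem \ref{1span}, which yields the required degree-diameter tradeoff.

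First, I would build a hierarchical net-tree $\cT$ of $M = (P,\delta)$. Let $N_0 = P \supseteq N_1 \supseteq \cdots \supseteq N_L$ be a sequence of nets, with $N_i$ a $2^i$-net of $N_{i-1}$, and assign to each $x \in N_{i-1}$ a parent $p(x) \in N_i$ with $\delta(x,p(x)) \le 2^{i+1}$. Using a standard aspect-ratio reduction (pre-clustering along an approximate MST), we may take $L = O(\log n)$ and build the entire tree in $O(n \log n)$ time. At each level $i$, connect every pair $(x,y) \in N_i \times N_i$ with $\delta(x,y) \le c \cdot 2^i / \eps$. By the doubling property this contributes $O(1)$ edges per net point per level, hence $O(n)$ edges in total, and the resulting ancestor-to-ancestor paths realize stretch $(1+\eps)$ between all pairs.

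Second, to reduce the hop-diameter and to bound the degree globally by $O(\rho)$, I would overlay a one-dimensional shortcut spanner on top of $\cT$. Perform an Euler tour of $\cT$ to obtain a linear ordering $\pi$ of $P$; for each level $i$ apply Theorem \ref{1span} with parameter $\rho$ to the subsequence of $N_i$-points induced by $\pi$. This yields, per level, $O(|N_i|)$ shortcut edges of degree $O(\rho)$ and hop-diameter $O(\log_\rho |N_i| + \alpha(\rho))$. Since $|N_i|$ decays geometrically in $i$, the totals over all levels remain $O(n)$ edges and degree $O(\rho)$ per point.

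The main obstacle is to verify that stretch, degree and diameter hold simultaneously. For stretch, given $p,q \in P$, I would follow the standard template: ascend to a level $i^\star \approx \lceil \log_2 \delta(p,q) \rceil$, cross at level $i^\star$ via a local edge, and descend symmetrically to $q$. The two ascents are carried out through the level-$i^\star$ shortcut spanner from Theorem \ref{1span} rather than through $\cT$ itself, compressing each traversal into $O(\log_\rho n + \alpha(\rho))$ hops while the telescoping of shortcut-edge weights remains within the $(1+\eps)$-stretch budget. For degree, the key observation is that each point of $P$ is ``active'' at only $O(1)$ consecutive scales before being absorbed into its parent's cluster, so its global contribution from the net edges is $O(1)$; the $O(\rho)$-degree of the one-dimensional overlay at its active levels then yields the claimed $O(\rho)$ bound. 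The $\alpha(\rho)$ term is inherited unchanged from the one-dimensional construction of Theorem \ref{1span}, and the running time is dominated by the net-tree construction and the $O(n \log n)$-time invocations of Theorem \ref{1span}.
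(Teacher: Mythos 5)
The key gap is in your degree argument. You assert that ``each point of $P$ is `active' at only $O(1)$ consecutive scales before being absorbed into its parent's cluster,'' and you use this to conclude that the net edges contribute $O(1)$ to the degree and that the per-level 1-D shortcut overlays contribute $O(\rho)$ in total. This is not true for a plain hierarchical net-tree. A point $x$ can be a net point at \emph{every} level (indeed, if $x \in N_i$ then nothing forces $x$ to leave $N_{i+1}$; it may simply be chosen as its own parent), and at each such level it can have a cross-edge to some other net point within the $c\cdot 2^i/\eps$ threshold. Take, for instance, points at positions $0,1,2,4,8,\ldots,2^k$ on the real line (a doubling metric with $k = \Theta(\log n)$ after aspect-ratio reduction): the point $0$ survives into $N_i$ for all $i$, and at every level $i$ the nearest surviving net point is at distance $\Theta(2^i)$, well inside the cross-edge radius. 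So $0$ accumulates $\Theta(\log n)$ cross edges, and, since you apply Theorem~\ref{1span} separately at each level, it would also accumulate $\Theta(\rho \log n)$ shortcut edges. Bounding the degree to a constant (or $O(\rho)$) in a doubling-metric spanner is genuinely the hard part, and is the content of the Chan et al.\ and Gottlieb--Roditty constructions; it does not follow from the naive ``absorbed after $O(1)$ levels'' picture.

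A secondary issue is the hop-diameter argument. You say the ascent from $p$ to the level-$i^\star$ ancestor is ``carried out through the level-$i^\star$ shortcut spanner,'' but that shortcut spanner only connects \emph{level-$i^\star$ net points to each other}; it does not give a short-hop path from a leaf to a high-level ancestor. What you want is a shortcut structure that handles arbitrary ancestor--descendant (and, more generally, arbitrary tree-metric) pairs with $O(\log_\rho n + \alpha(\rho))$ hops. This is exactly what the paper obtains by taking the Gottlieb--Roditty spanner \emph{together with its constant-degree tree-skeleton} (Theorem~\ref{GR}), and then applying the tree-metric $1$-spanner of Theorem~\ref{SE} once to that skeleton, which yields $O(n)$ edges, degree $\Delta(T) + 2\rho = O(\rho)$, and hop-diameter $O(\log_\rho n + \alpha(\rho))$ in a single shot; translating tree vertices to their representatives only shortens edges (by the triangle inequality) and thus preserves stretch. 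The paper's route is a clean black-box composition and sidesteps both of your gaps: the constant degree of $H$ and of $T$ are inherited from \cite{GR082}, and the ascent/descent compression is handled by the tree $1$-spanner rather than per-level 1-D overlays.
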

For the sake of completeness we provide a proof of Theorem \ref{twoprop}
in Appendix \ref{appB}.


Our transformation theorem is formulated below.

\begin{theorem} \label{ourresult}
Let $M = (P,\delta)$ be an arbitrary metric. Let $t \ge 1,\rho \ge 2$ be arbitrary
 parameters. Suppose that
for any subset $Q \subseteq P$, $|Q| = n$,
there exists an algorithm (henceforth, Algorithm $BasicSp$) which builds a $t$-spanner $H$
for the sub-metric $M[Q]$ of $M$ induced by the point set $Q$, so that $|H| \le SpSz(n)$, $\Delta(H) \le \Delta(n)$,
$\Lambda(H) \le \Lambda(n)$. Moreover, Algorithm $BasicSp$ requires at most
$SpTm(n)$ time. Suppose also that all the functions $SpSz(n),\Delta(n),\Lambda(n)$ and $SpTm(n)$
are monotone non-decreasing, while the functions $SpSz(n)$ and $SpTm(n)$
are also convex and vanish at zero.

Then there is an algorithm (henceforth, Algorithm $LightSp$) which builds, for every subset $Q \subseteq P$,
$|Q| = n$, and any $\eps > 0$, a $(t+\eps)$-spanner $H'$ for $M[Q]$
with $|H'| = O(SpSz(n) \cdot \log_{\rho}(t/\eps))$,
$\Delta(H') = O(\Delta(n) \cdot \log_{\rho}(t/\eps) + \rho)$,
$\Lambda(H') = O(\Lambda(n) + \log_{\rho}n + \alpha(\rho))$,
$\Psi(H') = O(\frac{SpSz(n)}{n} \cdot \rho \cdot \log_{\rho}n \cdot (t^3/\eps))$.
The running time of Algorithm $LightSp$ is
$O(SpTm(n) \cdot \log_{\rho}(t/\eps) + n \cdot \log n)$.
\end{theorem}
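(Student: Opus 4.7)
The plan is to follow the roadmap outlined in Section 1.3, building on Chandra et al.'s scheme and refining it with a dynamic ``bag'' hierarchy. First I would compute an MST $T$ of the input submetric $M[Q]$, record its preorder traversal $\cL = (p_1,\ldots,p_n)$, and apply Theorem \ref{1span} to $\cL$ (viewed as a 1-dimensional metric) to obtain a 1-spanner $H_\cL$ of degree $O(\rho)$, diameter $O(\log_\rho n + \alpha(\rho))$, and lightness $O(\rho \log_\rho n)$. Next I would build a bag hierarchy $\hat\cF_1, \hat\cF_2, \ldots, \hat\cF_\ell$ with $\ell = \Theta(\log_\rho n)$ levels. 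At the bottom the bags coincide with consecutive intervals of $\cL$ of size $\Theta(1)$; each successive level is obtained by merging $\rho$ sibling bags, with the crucial twist that individual points may be reassigned to a different sibling during the merge. For each bag $B \in \hat\cF_j$ a representative $r(B)$ is elected, and the set $Q_j$ of level-$j$ representatives is passed to Algorithm $BasicSp$ to produce a $t$-spanner $\tH_j$. Within each of $O(\log_\rho(t/\eps))$ logarithmic length-scales we prune $\tH_j$ to keep only edges whose weight falls into the window corresponding to that scale. The output $H'$ is the union of $T$, $H_\cL$, and all pruned $\tH_j$'s.

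For stretch, given a pair $p,q \in Q$, I would identify the level $j$ at which the bag diameter $L_j$ best matches $\|p-q\|$, connect $p$ to the representative of its bag via $H_\cL$ (and similarly $q$), concatenate with a $t$-spanner path in the pruned $\tH_j$, and show that the residual slack contributes at most $\eps$ overall, yielding stretch $t+\eps$. The edge-count bound $O(SpSz(n) \cdot \log_\rho(t/\eps))$ follows from the geometric decay of $|Q_j|$ together with convexity and monotonicity of $SpSz$; the diameter bound follows by concatenating a single path through some $\tH_j$ (contributing $\Lambda(n)$) with $O(1)$ traversals through $H_\cL$ and $T$ (contributing $O(\log_\rho n + \alpha(\rho))$). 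For lightness, since each edge of the pruned $\tH_j$ has length $O(L_j)$ and $|\tH_j| \le SpSz(|Q_j|) \le (SpSz(n)/n) \cdot |Q_j|$ by convexity of $SpSz$, the level-$j$ contribution at a single scale is $O((SpSz(n)/n) \cdot |Q_j| \cdot L_j) = O((SpSz(n)/n) \cdot \omega(MST))$; summing over $O(\log_\rho n)$ levels and $O(\log_\rho(t/\eps))$ scales, and accounting for the $t^3/\eps$ factor that arises from the scale windows and a careful geometric summation, gives the stated lightness.

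The central obstacle, and the heart of the construction, is keeping the degree bounded while preserving locality on $\cL$. With the naive interval hierarchy of Chandra et al., a single point may be the unique feasible representative at essentially every level, accumulating degree $\Theta(\Delta(n) \cdot \log_\rho n)$ from the auxiliary spanners. The remedy is the dynamic bag hierarchy $\hat\cF$: when a point would otherwise be elected as representative too many times, it is gently relocated into a neighboring sibling bag so that the representative role rotates among many candidates. I would maintain the invariant that no point serves as a representative at more than $O(\log_\rho(t/\eps))$ levels simultaneously, and that no point is ever relocated more than $O(\eps L_j / t)$ in the underlying metric $M$ during any level-$j$ reassignment, so that both the degree budget $O(\Delta(n) \cdot \log_\rho(t/\eps) + \rho)$ and the $(1+\eps)$ stretch perturbation are respected. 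Designing and verifying this balancing invariant, simultaneously with the scale-window pruning and the assignment of $H_\cL$ to bridge the residual local distances, is where essentially all the technical difficulty of the proof resides.
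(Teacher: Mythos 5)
Your high-level roadmap matches the paper's: start from the preorder traversal $\cL$, apply the 1-dimensional shortcutting spanner of Theorem~\ref{1span} to control diameter, build a level hierarchy of bags over $\cL$, run Algorithm $BasicSp$ on the representatives of each level, prune, and overcome the degree blowup by letting bags (and thereby points) migrate between siblings so the representative role rotates. That identification is correct, and you are also right that ``keeping the degree bounded while preserving locality on $\cL$'' is exactly where essentially all the difficulty resides. But the proposal stops at naming that difficulty; it does not resolve it, and the specific invariants you posit do not work as stated.

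Two concrete problems. First, your invariant that ``no point is ever relocated more than $O(\eps L_j/t)$ in the underlying metric $M$ during any level-$j$ reassignment'' is neither what the construction needs nor what it can guarantee. When a lonely $j$-level bag $z$ must be merged with some other bag $u$, the only available guarantee is that $\delta(r(z),r(u)) \le \tau_j = \Theta(\rho^j (L/n) \cdot t)$ --- this is \emph{not} an $O(\eps)$-fraction of the $j$-level interval length $\mu_j = \Theta(\rho^{j-1}(L/n)/c)$. Absorbing this relocation distance without blowing up the weighted diameter of bags (and hence the stretch) is precisely why the paper introduces a $\gamma = \Theta(\log_\rho(t/\eps))$-level \emph{incubation period}: the bag $z$ is marked as destined to merge into an ancestor of $u$, but the actual merge is delayed $\gamma$ levels so that the relocation distance shrinks to an $\exp(-\Omega(\gamma))$-fraction of the level-$(j+\gamma)$ interval length. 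Without this delay (or an equivalent device) the stretch analysis collapses. Second, even granting the relocation, your claim that the representative role ``rotates'' so each point represents only $O(\log_\rho(t/\eps))$ levels requires a mechanism, and the obvious one fails: it is not enough to pick the least-loaded point in a bag, because a small bag may keep the same single point for arbitrarily many levels before anyone else joins it. The paper's machinery --- kernel sets $K(v)$ distinguished from point sets $Q(v)$, base sets $B(v)$ with the base edge set $\cB$ providing short paths among them, the safe/risky classification and Procedure $Attach$, and the separate counters ($CTR$, $single\_ctr$, $plain\_ctr$) together with Lemmas~\ref{crow}--\ref{toim:l} --- is what converts ``rotation'' from a wish into a proof. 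Your proposal does not contain these ideas even in embryonic form, so as written it is a restatement of the goal rather than an argument.

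A smaller issue: your lightness accounting ``over $O(\log_\rho n)$ levels and $O(\log_\rho(t/\eps))$ scales'' double-counts. There is only one pruned auxiliary spanner per level, cut at a single threshold $\tau_j$ from above (not a window); the $\log_\rho(t/\eps)$ factor in the edge count arises because $|Q_j|$ does not start to decay until $j \gtrsim \log_\rho c$, and the lightness sum $\sum_j |Q_j|\cdot\tau_j$ telescopes to $O(\rho\log_\rho n)$ times $L$ independent of any second ``scale'' index (Corollary~\ref{basic22}).
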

\ignore{
{\bf Remark:} If one is also given an algorithm (henceforth, Algorithm $LightTree$)
which builds a spanning tree $T$ for $M[Q]$ with lightness $O(1)$
in $TrTm(n)$ time, then Algorithm $LightSp$ performs a little bit better.
Specifically, in this case it constructs a spanner $H'$ with the same guarantees
on the stretch, maximum degree and diameter as the ones listed above.
However, its lightness becomes smaller by a factor of $t$, i.e.,
$\Psi(H') = O(\frac{SpSz(n)}{n} \cdot \rho \cdot \log_{\rho}n \cdot (t^2/\eps))$.
In addition, the running time of the algorithm becomes
$O(SpTm(n) \cdot \log_{\rho}(t/\eps)  + TrTm(n))$.
In low-dimensional Euclidean and doubling metrics,
it is well known that one can construct a spanning tree with lightness $O(1)$ within time $TrTm(n) = O(n \cdot \log n)$.
(This can be done, for example, by computing an MST using Prim's Algorithm over any $O(1)$-spanner of the metric; see
Section \ref{time:app} for more detail.)
Hence the running time in these cases is the same, up to constant factors,
as the one cited in Theorem \ref{ourresult}. [[S: remove remark?]]
}

Given this theorem we derive our main result by instantiating the algorithm from
Theorem \ref{twoprop} as Algorithm $BasicSp$ in Theorem \ref{ourresult}. As a result we obtain a
construction of $(1+\eps)$-spanners $H$ for doubling metrics with $|H| = O(n)$,
$\Delta(H) = O(\rho)$, $\Lambda(H) = O(\log_\rho n + \alpha(\rho))$,
$\Psi(H) = O(\rho \cdot \log_{\rho} n)$, in time $O(n \cdot \log n)$.
(We substituted $t = 1+\eps$, and $\eps > 0$ is a   constant.)
In Appendix \ref{appA} we explicate the dependencies on $\eps$
and the doubling dimension $d$ on various parameters of the spanner constructed by Theorem \ref{ourresult}.

For a pair of non-negative integers $i,j,i \le j$, we denote $[i,j] = \{i,i+1,\ldots,j\}, [i] = \{1,2,\ldots,i\}$.

For paths $\Pi,\Pi'$ connecting vertices $v$ and $u$ and $u$ and $w$, respectively,
denote by $\Pi \circ \Pi'$ the concatenation of these paths.


\section{Algorithm $LightSp$} \label{section2}
Let $M = (P,\delta)$ be an arbitrary metric, and let
$Q \subseteq P$ be an arbitrary subset of $n$ points from $P$.

Algorithm $LightSp$ starts with computing an MST, or an approximate
MST, $T$, for the metric $M[Q]$.
In low-dimensional Euclidean and doubling metrics an $O(1)$-approximate MST can be computed within $O(n \cdot \log n)$ time.
In general, a 
$t$-approximate MST can be computed within time $O(SpTm(n) + n \cdot \log n)$ 
by running Prim's MST
Algorithm over the $t$-spanner produced
by Algorithm $BasicSp$.  

Let $\cL$ be the Hamiltonian path of $M[Q]$ obtained by taking the preorder traversal of $T$.
Define $L = \omega(\cL)$; it is well known (\cite{CLRS90}, ch.\ 36) that $L \le 2 \cdot \omega(T)$, and so $L = O(t \cdot \omega(MST(M[Q])))$.
Write $\cL = (q_1,q_2,\ldots,q_n)$, and let $M_\cL = (Q,\delta_\cL)$ be the 1-dimensional space induced by the path $\cL$,
where $\delta_\cL$ is the distance in $\cL$ (henceforth, \emph{path distance}), i.e.,
$\delta_\cL(v_k,v_{k'}) = \sum_{i=k}^{k'-1} \delta(v_i,v_{i+1})$,
for every pair $k,k'$ of indices, $1 \le k < k' \le n$.
We employ Theorem \ref{1span} to build in $O(n)$ time
a 1-spanner $H_\cL$ for $M_\cL$ with $|H_\cL| = O(n)$, $\Delta(H_\cL) = O(\rho)$, $\Lambda(H_\cL) = O(\log_\rho n + \alpha(\rho))$
and $\Psi(H_\cL) = O(\rho \cdot \log_\rho n)$.
Let $H = (Q,E_{H})$  be the graph obtained from $H_\cL$ by assigning weight $\delta(p,q)$
to each edge $(p,q) \in H_\cL$. Since edge weights in $H$ are no greater than the corresponding edge weights in $H_\cL$,
we have (i) $\omega(H) \le \omega(H_\cL) = O(\rho \cdot \log_\rho n) \cdot L$, and (ii) for any pair $p,q \in Q$ of points,
there is a path $\Pi_{H}(p,q)$ in $H$ that has weight at most
$\delta_\cL(p,q)$ and $O(\log_\rho n + \alpha(\rho))$ edges. We henceforth call $H$ the \emph{path-spanner}.
We also define an order relation $\prec_\cL$ on the point set $Q$.
Specifically, we write $q_i \prec_\cL q_j$ (respectively, $q_i \preceq_\cL q_j$)
iff $i < j$ (resp., $i \le j$).

Let $\ell = \lceil \log_\rho n \rceil$.
Define $Q_0 = Q$, let $n_0 = |Q_0| = n$,
and define the \emph{0-level threshold} $\tau_0 = 2 \cdot \frac{L}{n} \cdot t \cdot (1+\frac{1}{c})$,
where $c = \lceil \frac{4\cdot(t+1)}{\eps} \rceil = \Theta(t/\eps)$ is a constant ($t$ and $\eps$
will be set as constants).
For $j \in [\ell]$,
we define $\xi_j = \rho^{j-1} \cdot \frac{L}{n}$. Divide the path $\cL$ into $n_j = \lceil \frac{c \cdot L}{\xi_j} \rceil = \lceil \frac{c\cdot n}{\rho^{j-1}} \rceil$
intervals of length $\mu_j = \frac{\xi_j}{c}$ each (except for maybe one interval of possibly shorter length). 
From now on we assume that each $n_j$ is equal to $\frac{c \cdot L}{\xi_j} = \frac{c\cdot n}{\rho^{j-1}}$, because non-integrality of this expression has no effect whatsoever on the analysis. 
Define also the \emph{$j$-level threshold} $\tau_j = 2\mu_j \cdot \rho \cdot t \cdot (c+1) = 2 \cdot \frac{L}{n} \cdot t \cdot (1+\frac{1}{c}) \cdot \rho^j$.
These intervals induce a partition of the
point set $Q$ in the obvious way; denote these intervals
and the corresponding
point sets by
$I^{(1)}_j,I^{(2)}_j,\ldots,I^{(n_j)}_{j}$
and $Q^{(1)}_j,Q^{(2)}_j,\ldots,Q^{(n_j)}_{j}$, respectively.

We define ${\cal I}_j = \{I_j^{(1)},\ldots,I_j^{(n_j)}\}$, and ${\cal I} = \bigcup_{j=1}^\ell {\cal I}_j$.
Note that, for each $j \in [2,\ell]$, every $j$-level interval $I$ is a union of $\rho$ consecutive $(j-1)$-level
intervals. (Similarly to above, we may assume that $\rho$ is an integer.) 
The interval $I$ is called the \emph{parent}
of these $(j-1)$-level intervals, and they are called its \emph{children}.
This nested hierarchy of intervals defines in a natural way a forest $\cal F$
of $\rho$-ary trees, whose vertices (henceforth, \emph{bags}) correspond to intervals from $\cal I$.
With a slight abuse of notation we denote by $\cI$ also the set of bags in $\cF$, and by $\cF_j = \cI_j$ the set of $j$-level bags in $\cF$,
for each $j \in [\ell]$.
Each of the trees in $\cF$ is rooted at an $\ell$-level interval.
Thus, the number of trees in $\cal F$ is equal to the number $|\cF_\ell| = |{\cal I}_\ell| = n_\ell$
of $\ell$-level intervals. Specifically, $n_\ell = \frac{c \cdot n}{\rho^{\ell -1}}$,
and so  
$c < n_\ell \le c \cdot \rho$.
Denote the interval that corresponds to a bag $v$ of $\cal F$ by $I(v)$,
and denote the point set of $I(v)$ by $N(v)$. We call
the point set $N(v)$ the \emph{native point set} of $v$.
For an inner bag $v$ in $\cal F$
with $\rho$ children $c_1(v),\ldots,c_{\rho}(v)$, we have $I(v) = \bigcup_{i=1}^\rho
I(c_i(v))$, and $N(v) = \bigcup_{i=1}^\rho N(c_i(v))$.
Note that
$\bigcup_{v \in {\cal F}_j} I(v) = [q_1,q_n]$, and $\bigcup_{v \in {\cal F}_j} N(v) = Q$.
Also, for any pair of distinct  bags  $u,v \in \cF_j$, $I(u) \cap I(v) = N(u) \cap N(v) = \emptyset$.

In Algorithm $LightSp$ we (implicitly) maintain
another forest $\hat \cF$ over the same bag set $\cal I$.
Specifically, a $j$-level
bag $v$, for some index $j \in [\ell-1]$, may become a child of
some $(j+1)$-level bag $u$, other than the parent $\pi(v)$ of $v$ in $\cal F$.
If this happens we say that $u$ becomes a \emph{step-parent} of $v$ in $\cal F$ (and $u$ is a parent of $v$ in $\hat \cF$),
and $v$ becomes a \emph{step-child} of $u$ in $\cal F$ (and $v$ is a child of $u$ in $\hat \cF$).
As a result the points associated with the bag $v$ become associated with $u$. We will soon provide more details on this.

Observe that in $\cF$ each bag $v$ corresponds to a specific interval $I(v) \in \cI$, and contains only points that lie
within this interval (i.e., the points of $N(v)$). On the other hand, each bag $v \in \hat \cF$ may contain points from many different intervals of $\cI$.
We denote by $\hat \cF_j$ the set of $j$-level bags of $\hat \cF$. If a bag $v \in \cF_j$ becomes a child
in $\hat \cF$ of a bag $u$, then it will hold that $u \in \cF_{j+1}$.
This guarantees that $\cF_j = \hat \cF_j$, for every $j \in [\ell]$. 

The rest of this section is organized as follows. In Section \ref{sec21} we describe point sets which are associated with bags of $\hat \cF$.
In Section \ref{disregard} we describe an important subset of edges of the ultimate spanner that the algorithm constructs. This subset
is called the \emph{base edge set}. During the execution of the algorithm some bags are labeled as \emph{zombies} or \emph{incubators}.
These notions are discussed in Section \ref{thealg}. In Section \ref{sec24} we describe how our algorithm selects representatives of different bags.
Section \ref{app:attachalg} is devoted to Procedure $Attach$, which is a subroutine of our algorithm. The algorithm itself is described in Section
\ref{s:jlevel}

\subsection{Point Sets} \label{sec21}
In addition to the native point set $N(v)$,
the algorithm will also maintain for each bag $v$
three more point sets: the \emph{base point set} $B(v)$, the \emph{kernel set}
$K(v)$, and the \emph{point set} $Q(v)$. These sets will satisfy
$B(v) \subseteq K(v) \subseteq Q(v)$. It will also hold that $B(v) \subseteq N(v)$.
A bag $v$ is called \emph{empty} if $Q(v) = \emptyset$.

Algorithm $LightSp$ processes the forest $\cF$ bottom-up. In other words, it starts
with processing bags of $\cF_1$, then it proceeds to processing bags
of $\cF_2$, and so on. At the last iteration the algorithm processes bags of $\cF_\ell$.
We refer to the processing of bags of $\cF_j$ as the \emph{$j$-level processing},
for each index $j \in [\ell]$. (It will be described in Section \ref{s:jlevel}.) The algorithm maintains the point sets $B(v),K(v)$ and $Q(v)$
of all bags $v \in \cF_j$ during the $j$-level processing
 in the following way.
For a bag $v \in {\cal F}_1$, we set
$B(v) = K(v) = Q(v) =  N(v)$.

A non-empty $(j-1)$-level bag $z$, $j \in [2,\ell]$, may become a step-child
of some $j$-level bag $v$, other than the parent $\pi(z)$ of $z$
in $\cF$. If this happens, we say that \emph{$z$ is disintegrated from $\pi(z)$},
and also that \emph{$z$ joins $v$}.
Denote by $\cJ(v)$ the set of bags $z$ that join the bag $v$. They will be   referred to as the \emph{joining step-children} (or shortly, \emph{step-children}) of $v$.
Denote also by $\cS(v)$ the set of
\emph{surviving children} of $v$, i.e., the non-empty bags $z$ with $v = \pi(z)$
that did not join some other $j$-level bag $v'$,
$v' \ne v$ ($v' \in \cF_j$). Let $\chi(v) = \cS(v) \cup \cJ(v)$ be
the set
of \emph{extended children} of $v$.
Observe that $\chi(v) \subseteq \cF_{j-1}$,
and that all bags in $\chi(v)$ are non-empty.

Each bag $z$ will be a step-child of at most one bag $v$. Also, for any bag $v$, each non-empty child $u$ of $v$
which is not surviving will necessarily be
a step-child of some other bag $v' \ne v$. (The bags $v$ and $v'$ are of the same level.) Hence, for each level $j \in [\ell]$,
the collection $\{Q(v) ~\vert~ v \in \cF_j\}$ is a partition of $Q$. In particular, for distinct $u,v \in \cF_j$,
$Q(u) \cap Q(v) = \emptyset$.

The \emph{base point set} $B(v)$ (respectively, \emph{point set} $Q(v)$) of a bag $v \in \cF_j, j \in [2,\ell]$, is
defined as the union of the base point sets (resp., point sets) of its surviving (resp., extended) children,
i.e.,
$
B(v) ~=~ \bigcup_{z \in \cS(v)} B(z),~Q(v) ~=~  \bigcup_{z \in \chi(v)} Q(z).
$
\ignore{The \emph{point set} $Q(v)$ of $v$ is
defined recursively in the following way.
%
For $v \in \cF_1, Q(v) = N(v)$.
For $v \in \cF_j, j \in [2,\ell]$, we first
define the \emph{surviving point set} $Q'(v) \subseteq Q(V)$ of $v$
to be
\begin{equation} \label{justadded} Q'(v) ~=~ \bigcup_{z \in \cS(v)} Q(z).
\end{equation}
(Note that the surviving point set $Q'(v)$ of $v$ is defined as the union of the point sets of its surviving children, and
not of their surviving point sets.)
The point set $Q(v)$ of $v$ is now given by
\begin{equation}\label{threevertast}
Q(v) ~=~ Q'(v) \cup \bigcup_{z \in \cJ(v)} Q(z) ~=~ \bigcup_{z \in (\cS(v) \cup \cJ(v))} Q(z).
\end{equation}
}
The \emph{kernel set} $K(v)$ of $v$ is an intermediate set, in the sense that
$B(v) \subseteq K(v) \subseteq Q(v)$. We will soon specify which of the points of
$Q(v)$ are included into $K(v)$. Intuitively, all points of $K(v)$
will always be pretty close to the base point set $B(v)$, both in terms of the metric
distance in $M$, and in terms of the hop-distance. This will guarantee that points of $K(v)$ 
provide good substitutes for points of $B(v)$. Consequently, the points of $K(v)$ will be used to alleviate 
the degree load from the points of $B(v)$. 

The algorithm will assign to every bag $v$ a representative point $r(v)$.
As discussed in the introduction, if one selects representatives only from the native point set $N(v)$, then large
maximum degree of the resulting spanner may be \emph{inevitable}, regardless
of the specific way in which representatives are selected. This may happen, for example, if there
is a point $p$ which is far away in the path metric $M_\cL$ from any other point
of $M$, but close to many points of $M$ in the original metric.
This point may be the
only point in the point set of some bag $v = v^{(0)}$, as well as in the point sets of many
of its ancestors
$v^{(1)} = \pi(v),
v^{(2)} = \pi(\pi(v)),\ldots$ in $\cF$.
In this case $p$ will necessarily serve as a
representative of all these bags, and will accumulate a large degree.
Instead, we will pick $r(v)$ from the kernel set $K(v)$.

The kernel set $K(v)$ of a bag $v \in \cF_j, j \in [2,\ell]$, is defined as follows.
The \emph{surviving kernel set} $K'(v)$
is given by
$K'(v) ~=~ \bigcup_{z \in \cS(v)} K(z)$.
If $|K'(v)| \ge \ell$ then the kernel set of $v$ is set to be equal to
its surviving kernel, i.e., $K(v) = K'(v)$. Otherwise (if $|K'(v)| < \ell$), we
set
$K(v) ~=~ K'(v) \cup \bigcup_{z \in \cJ(v)} K(z) ~=~ \bigcup_{z \in \chi(v)} K(z).$


The intuition behind increasing the kernel set $K(v)$ beyond its surviving kernel set
$K'(v)$ (i.e., setting $K(v) = K'(v) \cup \bigcup_{z \in \cJ(v)} K(z)$)
in the case that $|K'(v)| < \ell$ is that in this case the surviving kernel set is
too small. Hence one needs to add to it more points to alleviate
the degree load.

In the complementary case $(|K'(v)| \ge \ell)$, 
one can distribute the load of the $O(\ell)$ auxiliary spanners that Algorithm
$LightSp$ constructs (see Section \ref{s:jlevel}) among the points of $K'(v) (=K(v))$ in such a way that no kernel point
is overloaded.

\begin{definition} \label{smalllarge}
A bag $v$ is called \emph{small} if $|Q(v)| < \ell$, and \emph{large} otherwise.
\end{definition}
The next lemma follows from these definitions.
\begin{lemma} \label{kv}
Fix an arbitrary index $j \in [\ell]$, and let $v$
be a $j$-level bag.
~(1) If $v$ is small, then $K(v) = Q(v)$.
~(2) If $v$ is large,  then $|K(v)| \ge \ell$.
\end{lemma}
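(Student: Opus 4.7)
\emph{Proof plan.} I would prove both claims simultaneously by induction on the level $j$ of the bag $v$, using (i) the recursive definitions of $Q(v)$ and $K(v)$, (ii) the containment $B(z) \subseteq K(z) \subseteq Q(z)$ for every bag $z$, and (iii) the fact noted above that the sets $\{Q(z) : z \in \cF_{j-1}\}$ form a partition of $Q$, so in particular the sets $\{Q(z) : z \in \chi(v)\}$ are pairwise disjoint. Consequently, so are $\{K(z) : z \in \chi(v)\}$.

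The base case $j=1$ is immediate: by definition $K(v) = Q(v) = N(v)$, hence (1) holds trivially, and if $|Q(v)| \ge \ell$ then $|K(v)| \ge \ell$ as well. For the inductive step, assume the lemma for all bags of levels $< j$, and let $v \in \cF_j$.

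For claim (1), suppose $v$ is small, i.e.\ $|Q(v)| < \ell$. Every extended child $z \in \chi(v)$ satisfies $|Q(z)| \le |Q(v)| < \ell$, so $z$ is small and by the inductive hypothesis $K(z) = Q(z)$. In particular, $K'(v) = \bigcup_{z \in \cS(v)} K(z) = \bigcup_{z \in \cS(v)} Q(z) \subseteq Q(v)$, and by disjointness $|K'(v)| = \sum_{z \in \cS(v)}|Q(z)| \le |Q(v)| < \ell$. Thus the second clause of the kernel definition applies, and $K(v) = \bigcup_{z \in \chi(v)} K(z) = \bigcup_{z \in \chi(v)} Q(z) = Q(v)$, as desired.

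For claim (2), suppose $v$ is large, i.e.\ $|Q(v)| \ge \ell$. If $|K'(v)| \ge \ell$, then by definition $K(v) = K'(v)$ and we are done. Otherwise $K(v) = \bigcup_{z \in \chi(v)} K(z)$, and by disjointness $|K(v)| = \sum_{z \in \chi(v)} |K(z)|$. Now apply the inductive hypothesis to each $z \in \chi(v)$: if some child $z$ is large, then $|K(z)| \ge \ell$ and hence $|K(v)| \ge \ell$; if every child $z \in \chi(v)$ is small, then $|K(z)| = |Q(z)|$ for every such $z$, and so $|K(v)| = \sum_{z \in \chi(v)} |Q(z)| = |Q(v)| \ge \ell$. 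Either way $|K(v)| \ge \ell$, completing the induction.

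I do not foresee a real obstacle here; the only point that needs care is to invoke the disjointness of the point sets of distinct bags of the same level (quoted just above the lemma) in order to turn set unions into sums of sizes, and thereby chain the inductive hypothesis applied to the various children of $v$.
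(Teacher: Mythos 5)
Your proof is correct and follows essentially the same inductive argument as the paper's: base case $j=1$ is immediate from $K(v)=Q(v)=N(v)$, and the inductive step splits on whether $|K'(v)| \ge \ell$ and, in the complementary case, on whether some extended child is large. The only cosmetic difference is that to show $|K'(v)| < \ell$ for small $v$, the paper simply observes $K'(v) \subseteq Q(v)$ directly (without first invoking the inductive hypothesis on the children), and in the large case uses the containment $|K(v)| \ge |K(z)|$ rather than disjointness-and-summing; both shortcuts are minor and your version is equally valid.
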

\begin{proof}
We prove both assertions of the lemma by induction on $j$.
\\\emph{Basis: $j = 1$.} In this case $B(v) = K(v) = Q(v)$.
Also, if $v$ is large, then $|K(v)| = |Q(v)| \ge \ell$.
\\\emph{Induction Step: Assume the correctness of the statement for all smaller
values of $j, j \in [2,\ell]$, and prove it for $j$.}

We start with proving the first assertion of the lemma, i.e., we assume that $v$ is small
and show that $K(v) = Q(v)$.
Recall that $Q(v)$  is given by
$Q(v) = \bigcup_{z \in \chi(v)} Q(z)$.
Moreover, $K'(v) \subseteq Q(v)$, and thus $|K'(v)| \le |Q(v)| <\ell$.
By definition,
$K(v) ~=~ \bigcup_{z \in \chi(v)} K(z).$
Observe that for every $z \in \chi(v)$, $Q(z) \subseteq Q(v)$.
Hence all bags $z \in \chi(v)$ are small as well.
The first assertion of the induction hypothesis implies that $K(z) = Q(z)$, for each $z \in \chi(v)$. Hence $K(v) = Q(v)$.

Next, we prove the second assertion of the lemma, i.e., we assume that $v$ is large
and show that $|K(v)| \ge \ell$.

Suppose first that $|K'(v)| = |\bigcup_{z \in \cS(v)} K(z)| \ge \ell$.
In this case $K(v) = K'(v)$, and so $|K(v)| \ge \ell$.

We are now left with the case that $|K'(v)| < \ell$.
In this case $K(v) = \bigcup_{z \in \chi(v)} K(z)$.


If there exists a large bag $z \in \chi(v)$, 
then the second assertion of the induction hypothesis yields $|K(z)| \ge \ell$,
which implies that $|K(v)| \ge |K(z)| \ge \ell$.

Otherwise,   all bags $z \in \chi(v)$ are small. The first assertion
of the induction hypothesis yields $K(z) = Q(z)$, for all bags $z \in \chi(v)$,
and thus $$K(v) ~=~ \bigcup_{z \in \chi(v)} K(z) ~=~ \bigcup_{z \in \chi(v)} Q(z) = Q(v).$$
Hence $|K(v)| = |Q(v)| \ge \ell$.
\QED
\end{proof}

As  mentioned above, for every index $j \in [\ell]$,
$Q = \bigcup_{v \in \cF_j} Q(v)$, and for any pair $u,v$ of distinct $j$-level bags,
$Q(u) \cap Q(v) = \emptyset$.
It can also be readily verified that $Q(v) = \emptyset$ iff  $B(v) = \emptyset$.

\subsection{The Base Edge Set} \label{disregard}
The algorithm will also maintain a set of edges $\cal B$, which we
call the \emph{base edge set} of the spanner.

For each non-empty bag $v \in {\cal F}$, the base edge set $\cB$ will connect
the base point set $B(v)$ of $v$ via a simple path $P(v)$. That is, if we denote the points
of $B(v)$ from left to right (w.r.t.\ the order relation $\prec_{\cal L}$)
by $p_1,\ldots,p_k$, then $P(v) = (p_1,\ldots,p_k)$.
We will show
that $\Delta(\cB) \le 2$ and $\Psi(\cB) = O(\ell)$.


Fix an index $j, j \in [\ell]$. For each non-empty bag $v \in \cF_j$,
let $x(v)$ (respectively, $y(v)$)
denote the leftmost (resp., rightmost) (with respect to $\prec_{\cal L}$) point in
the base point set $B(v)$ of $v$.
The next observation, which follows easily from the definition of $B(v)$ ($B(v) = \bigcup_{z \in \cS(v)} B(z))$,
implies that the order relation $\prec_{\cal L}$ can be used in the obvious way to define a total order on the non-empty bags of ${\cal F}_j$.
\begin{observation} \label{obviousorder}
For any pair $u,v$ of distinct non-empty bags in
${\cal F}_j$, either $x(u) \preceq_{\cal L} y(u) \prec_{\cal L} x(v) \preceq_{\cal L} y(v)$
or $x(v) \preceq_{\cal L} y(v) \prec_{\cal L} x(u) \preceq_{\cal L} y(u)$ must hold.
With a slight abuse of notation, we will write $u \prec_{\cal L} v$ in the former case and $v \prec_{\cal L} u$ in the latter.
\end{observation}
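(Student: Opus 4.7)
The plan is to establish the observation as an immediate consequence of a simple containment: for every $v \in \cF_j$, the base point set $B(v)$ is a subset of the native point set $N(v)$. Once this containment is in hand, the observation follows because the native point sets $\{N(v) : v \in \cF_j\}$ form a partition of $Q$ into sets lying in pairwise disjoint intervals $\{I(v) : v \in \cF_j\}$ along the Hamiltonian path $\cL$, and a total order on these intervals is inherited from $\prec_\cL$.

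First I would prove the containment $B(v) \subseteq N(v)$ by induction on $j$. For the base case $j=1$ we have $B(v) = K(v) = Q(v) = N(v)$ by definition, so the containment holds with equality. For the inductive step, recall that $B(v) = \bigcup_{z \in \cS(v)} B(z)$, where $\cS(v)$ consists of surviving children of $v$, i.e., non-empty bags $z \in \cF_{j-1}$ with $\pi(z) = v$ (in $\cF$) that did not disintegrate and join some other $j$-level bag. Crucially, $\cS(v)$ excludes the joining step-children in $\cJ(v)$, so every $z \in \cS(v)$ satisfies $I(z) \subseteq I(v)$ and hence $N(z) \subseteq N(v)$. By the induction hypothesis $B(z) \subseteq N(z)$ for each such $z$, and therefore $B(v) = \bigcup_{z \in \cS(v)} B(z) \subseteq \bigcup_{z \in \cS(v)} N(z) \subseteq N(v)$, completing the induction.

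Now fix distinct non-empty bags $u, v \in \cF_j$. From the containment just established, $B(u) \subseteq N(u)$ and $B(v) \subseteq N(v)$, and since $N(u) \cap N(v) = \emptyset$ (the native point sets at level $j$ are disjoint), the points of $B(u)$ all lie in the interval $I(u)$ while the points of $B(v)$ all lie in the disjoint interval $I(v)$. The intervals $I(u)$ and $I(v)$ are disjoint sub-intervals of the path $\cL$, and hence one entirely precedes the other in the path order $\prec_\cL$. Without loss of generality suppose $I(u)$ precedes $I(v)$; then every point of $B(u)$ precedes every point of $B(v)$ under $\prec_\cL$, which gives exactly $x(u) \preceq_\cL y(u) \prec_\cL x(v) \preceq_\cL y(v)$. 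The symmetric case is identical, so one of the two alternatives in the observation must hold.

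There is no genuine obstacle here; the only care needed is to observe that $\cS(v)$ consists of children of $v$ in $\cF$ (not in $\hat \cF$), so that the recursive definition of $B(v)$ only aggregates points from sub-intervals of $I(v)$ and never pulls in points from disintegrated step-children whose native intervals could lie elsewhere on $\cL$. This is precisely the design choice that makes $B(v)$ well-localized, and it is what lets the level-$j$ base point sets inherit a clean total order from the partition of $\cL$ into level-$j$ intervals.
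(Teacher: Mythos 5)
Your proof is correct and follows exactly the route the paper implicitly intends: the paper states without proof that $B(v) \subseteq N(v)$ and hints that the observation "follows easily from the definition $B(v) = \bigcup_{z \in \cS(v)} B(z)$," and your induction plus the disjointness of the level-$j$ intervals $I(u), I(v)$ along $\cL$ is precisely what fills in that gap.
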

We may henceforth assume without loss of generality that,
for each bag
$v \in {\cal F}_j$, with $j \ge 2$,
its surviving children $c^{(1)}(v),c^{(2)}(v),\ldots,c^{(h)}(v)$
are ordered such that $c^{(1)}(v) \prec_{\cal L} c^{(2)}(v) \prec_{\cal L} \ldots \prec_{\cal L} c^{(h)}(v)$.

Next, we turn to a detailed description of the way
that the base edge set $\cal B$ is constructed.

On the bottom-most level ($j=1$), for each bag $v \in {\cal F}_1$,
we order all points of $B(v) = N(v)$ from left to right, according to their respective order
in $\cal L$. In other words, write $B(v) = (p_1,p_2,\ldots,p_{|B(v)|})$,
where $p_1 \prec_{\cal L} p_2 \prec_{\cal L} \ldots \prec_{\cal L} p_{|B(v)|}$.
The $(|B(v)| -1)$ edges $(p_1,p_2),\ldots,(p_{|B(v)|-1},p_{|B(v)|})$ form the \emph{base edge set} ${\cal B}(v)$
of the bag $v$.
The union ${\cal B}_1 = \bigcup_{v \in {\cal F}_1} {\cal B}(v)$
is the \emph{1-level base edge set}.

For $j \ge 2$, the \emph{base edge set} ${\cal B}(v)$
of a $j$-level bag $v$ is
formed
in the following way.
Recall that $c^{(1)}(v),c^{(2)}(v),\ldots,c^{(h)}(v)$ denote the surviving children of $v$ from left to right (w.r.t.\ $\prec_{\cal L}$),
and denote by
$x^{(i)}(v)$ (respectively, $y^{(i)}(v))$  the left-most (resp., right-most)
point in the base point set $B(c^{(i)}(v))$ of $c^{(i)}(v)$, for each index $i \in [h]$.
Then the \emph{base edge set} ${\cal B}(v)$ of $v$ will be the edge set
${\cal B}(v) = \{(y^{(1)}(v),x^{(2)}(v)),(y^{(2)}(v),x^{(3)}(v)),\ldots,(y^{(h-1)}(v),x^{(h)}(v))\}$.
Given the base edge sets of all $j$-level bags $v \in \cF_j$,
the \emph{$j$-level base edge set} ${\cal B}_j$ is formed as their union,
i.e., ${\cal B}_j = \bigcup_{v \in {\cal F}_j} {\cal B}(v)$.
Finally, the \emph{base edge set} $\cal B$ is formed as the union ${\cal B} = \bigcup_{j=1}^\ell {\cal B}_j$.
(See Figure \ref{fig9} for an illustration.)
\begin{figure*}[htp]
\begin{center}
\begin{minipage}{\textwidth} 
\begin{center}
\setlength{\epsfxsize}{5.5in} \epsfbox{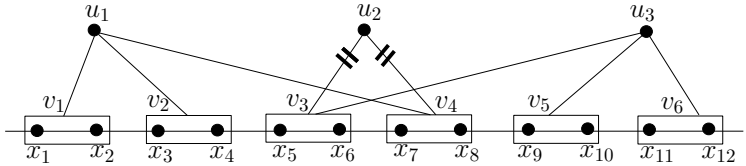}
\end{center}
\end{minipage}
\caption[]{ \label{fig9} \sf In this example $\cL = (x_1,x_2,\ldots,x_{12})$.
In level 1 there are 6 bags, $v_1,\ldots,v_6$, with $v_i = (x_{2i-1},x_{2i})$,
for $i \in [6]$.
In the forest $\cF$, $u_i$ is the parent of $v_{2i-1}$ and $v_{2i}$, for $i \in [3]$.
In the forest $\hat \cF$, $v_4 \in \cJ(u_1)$ (i.e., $v_4$ is a step-child of $u_1$),
and $v_3 \in \cJ(u_3)$. The bag $u_2$ becomes empty. The base edge sets are
$\cB_1 = \{(x_1,x_2),(x_3,x_4),\ldots,(x_{11},x_{12})\}$ and
$\cB_2 = \{(x_2,x_3),(x_{10},x_{11})\}$.}
\end{center}
\end{figure*}

We also define the \emph{recursive base edge set} $\hat \cB(v)$ 
of a $j$-level
bag $v$ in the following way.
For $j = 1$, $\hat \cB(v) = \cB(v)$.
For $j \in [2,\ell]$, the recursive base edge set $\hat \cB(v)$ of $v$ is
defined as the union of the recursive base edge sets
$\hat \cB(c^{(1)}(v)),\ldots,\hat \cB(c^{(h)}(v))$
of its surviving children $c^{(1)}(v),\ldots,c^{(h)}(v)$, respectively,
union with the base edge set $\cB(v)$ of $v$.
In other words, $\hat \cB(v) = \cB(v) \cup \bigcup_{i=1}^h \hat \cB(c^{(i)}(v))$.
The following lemma follows from the construction by a straightforward induction.
\begin{lemma} \label{recur}
Fix an arbitrary index $j \in [\ell]$, and let $v$ be an arbitrary non-empty $j$-level bag.
Let $B(v) = (p_1,\ldots,p_{|B(v)|})$ be the base point set of $v$, ordered according to
$\prec_\cL$. (In other words, $p_1 \prec_\cL p_2 \prec_\cL \ldots \prec_\cL p_{|B(v)|}$.)
Then the recursive base edge set $\hat \cB(v)$ is the edge set given by
$\hat \cB(v) = \{(p_1,p_{2}),\ldots,(p_{|B(v)|-1},p_{|B(v)|})\}$.
\end{lemma}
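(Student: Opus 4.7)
The plan is to proceed by induction on the level index $j$, following the recursive structure in the very definition of $\hat \cB(v)$.

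For the base case $j=1$, the recursive base edge set is defined to coincide with $\cB(v)$, and by construction $\cB(v)$ is exactly the set of $|B(v)|-1$ edges connecting consecutive points of $B(v) = N(v)$ in the $\prec_\cL$-order. So there is nothing to prove.

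For the inductive step, I would fix a $j$-level bag $v$ with surviving children $c^{(1)}(v) \prec_\cL c^{(2)}(v) \prec_\cL \ldots \prec_\cL c^{(h)}(v)$. The key structural observation I would invoke is Observation~\ref{obviousorder}, applied within level $j-1$: for each $i < h$, every point of $B(c^{(i)}(v))$ precedes every point of $B(c^{(i+1)}(v))$ in $\prec_\cL$. Combined with $B(v) = \bigcup_{i=1}^h B(c^{(i)}(v))$, this means that listing the points of $B(v)$ in $\prec_\cL$-order is the same as concatenating the $\prec_\cL$-ordered lists of the $B(c^{(i)}(v))$'s. The induction hypothesis, applied to each $c^{(i)}(v)$, then says that $\hat \cB(c^{(i)}(v))$ is precisely the path through $B(c^{(i)}(v))$ in $\prec_\cL$-order, going from $x^{(i)}(v)$ to $y^{(i)}(v)$. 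The remaining $h-1$ edges of $\cB(v)$, namely $\{(y^{(i)}(v), x^{(i+1)}(v)) : 1 \le i \le h-1\}$, connect the right endpoint of the $i$-th path to the left endpoint of the $(i+1)$-th. Since $\hat \cB(v) = \cB(v) \cup \bigcup_{i=1}^h \hat \cB(c^{(i)}(v))$, these pieces together form exactly the path $(p_1, p_2, \ldots, p_{|B(v)|})$, as required.

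There is essentially no obstacle to this argument; the work has already been done in setting up $\cB(v)$ so that it splices together the children's paths at their extreme points. The only thing to be slightly careful about is that empty surviving children contribute nothing and can be ignored (but since $v$ is non-empty, at least one surviving child contains $B$-points; if $h = 0$ then $B(v) = \emptyset$, contradicting non-emptiness, and if $h = 1$ then $\cB(v) = \emptyset$ and the claim reduces to the induction hypothesis applied to $c^{(1)}(v)$).
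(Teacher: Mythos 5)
Your proof is correct, and it is precisely the ``straightforward induction'' that the paper alludes to without writing out: the paper states only that the lemma ``follows from the construction by a straightforward induction,'' and your inductive step — using Observation~\ref{obviousorder} to order the children's base point sets, the induction hypothesis to see each $\hat\cB(c^{(i)}(v))$ as a consecutive path, and the $h-1$ edges of $\cB(v)$ to stitch adjacent endpoints — is exactly what that omitted proof would say. The only cosmetic remark is that surviving children are non-empty by definition, so the caveat about ``empty surviving children'' is vacuous; otherwise the edge-case handling ($h=0$ impossible since $Q(v)\ne\emptyset\Leftrightarrow B(v)\ne\emptyset$, $h=1$ reducing to the hypothesis) is fine.
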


Consider the path  $P(v) = ((p_1,p_2),\ldots,(p_{|B(v)|-1},p_{|B(v)|}))$.
By Lemma \ref{recur},
the edge set of the path $P(v)$ is equal to the recursive base
edge set $\hat \cB(v)$ of $v$.

For a point $p$ and an index $j \in [\ell]$, we say that a bag $v \in \cF_j$ (if exists) is
the \emph{$j$-level base bag} of $p$ if $p \in B(v)$.
Recall that for a pair $u,v \in \cF_j$ of distinct bags, $B(u) \cap B(v) = \emptyset$.
Hence for any point $p$ and index $j \in [\ell]$, there is at most one $j$-level base bag.
Moreover, for any point $p$ there exists a 1-level base bag. However,
on subsequent levels the base bag of $p$ may not exist;
this happens when the base bag $v$ of $p$ becomes a step-child of some other bag $u$ different from its parent $\pi(v)$ in $\cF$.
In other words, for any point $p$, there exists an index $j = j(p) \in [\ell]$
such that there exist $i$-level base bags for $p$, for all indices $1 \le i \le j$,
and there are no $i$-level base bags for $p$, for all indices $j+1 \le i \le \ell$.
We will say that the base bags of $p$ in levels $1,\ldots,j-1$ are \emph{surviving},
and the base bag of $p$ in level $j$ is \emph{disappearing}.  


\ignore{
Consider again the Hamiltonian path $\cal L$
and the hierarchy $\cal I$ of intervals, which we defined
in the beginning of Section \ref{first}.
Specifically, we defined $n_0 = | {Q}  | = n$.
Also, for each index $j \in [\ell]$, we defined $\xi_j = \rho^{j-1} \cdot \frac{L}{n}$.
(Recall that $L = \omega({\cal L})$.)
For each index $j \in [\ell]$ we partitioned the path $\cal L$ into
$n_j = \frac{c \cdot L}{\xi_j} = \frac{c \cdot n}{\rho^{j-1}}$ intervals
of length $\mu_j = \frac{\xi_j}{c}$ each, denoted from left to right
by $I_j^{(1)},\ldots,I_j^{(n_j)}$; we called these intervals the $j$-level intervals.
}

Next, we argue that the maximum degree $\Delta({\cal B})$ of the base edge set $\cal B$ is at most 2,
and that its lightness $\Psi(\cal B)$ is $O(\ell)$. 

We start with analyzing $\Delta(\cal B)$.
For each point $p \in {Q}$ and any index $j \in [\ell]$, we say that a point $q \in {Q}$ is a \emph{left neighbor}
(respectively, \emph{right neighbor}) of $p$ in $\cB_j$ if the edge $(p,q)$
belongs to $\cB_j$ and $q \prec_{\cal L} p$ (resp., $p \prec_{\cal L} q$).
In addition, we will say that $q$ is a left neighbor (respectively, right neighbor)
of $p$ in $\cB$, if there exists an index $j \in [\ell]$, such that $q$ is is a left neighbor
(resp., right neighbor) of $p$ in $\cB_j$.
The \emph{left degree} (resp., \emph{right degree}) of $p$ in $\cal B$, denoted
$leftdeg_{\cal B}(p)$ (resp., $rightdeg_{\cal B}(p)$) is the number of left neighbors (resp., right neighbors)
$q$ of $p$ in $\cal B$.

Next, we argue that for every point $p \in {Q}$, $rightdeg_{\cal B}(p) \le 1$. Symmetrically,
it also holds that $leftdeg_{\cal B}(p) \le 1$. We will conclude that $deg_{\cal B}(p) = leftdeg_{\cal B}(p) + rightdeg_{\cal B}(p) \le 2$,
and thus $\Delta({\cal B}) \le 2$.
\begin{lemma}
For every point $p \in {Q}$, $rightdeg_{\cal B}(p) \le 1$.
\end{lemma}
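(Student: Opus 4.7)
The plan is to characterize precisely when a point $p \in Q$ has a right neighbor in $\cB_j$ at each level $j \in [\ell]$, and then to rule out the existence of right neighbors at two distinct levels. Let $v_1, v_2, \ldots, v_{j(p)}$ denote the chain of surviving base bags of $p$, so that $v_j$ is the $j$-level base bag of $p$ and, for $j \ge 2$, $v_{j-1}$ is a surviving child of $v_j$ in $\cF$.

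First I would establish the per-level characterization. For $j = 1$, the edge set $\cB(v_1)$ is precisely the path through $B(v_1) = N(v_1)$ in $\prec_\cL$-order, so $p$ has a (unique) right neighbor in $\cB_1$ iff $p$ is not the $\prec_\cL$-rightmost point of $B(v_1)$. For $j \ge 2$, the only edges of $\cB(v_j)$ directed rightward out of a vertex have the form $(y^{(i)}(v_j), x^{(i+1)}(v_j))$; since the $y^{(i)}(v_j)$ lie in the pairwise disjoint base point sets $B(c^{(i)}(v_j))$, the point $p$ can coincide with $y^{(i)}(v_j)$ for at most one index $i$. Hence $p$ has a (unique) right neighbor in $\cB(v_j)$ iff $p$ is the rightmost point of $B(v_{j-1})$ and $v_{j-1}$ is not the last surviving child of $v_j$. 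Moreover, within all of $\cB_j$ the bag $v_j$ is the only $j$-level bag whose base edge set can contribute a rightward edge from $p$: any contributing bag $v$ would force $p \in B(c^{(i)}(v))$ for some $i$, whence $c^{(i)}(v) = v_{j-1}$ by the disjointness of base point sets at level $j-1$, and so $v = v_j$.

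The structural fact that drives the argument is that $B(v_j) = \bigsqcup_{z \in \cS(v_j)} B(z)$ is a decomposition into $\prec_\cL$-consecutive blocks, because the $\cF$-children of $v_j$ correspond to pairwise disjoint consecutive subintervals of $\cL$. Iterating this along the surviving-children chain yields the following monotonicity observation: $p$ is the $\prec_\cL$-rightmost point of $B(v_j)$ iff $p$ is the rightmost point of $B(v_1)$ and $v_i$ is the last surviving child of $v_{i+1}$ for every $i < j$.

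To conclude, suppose towards contradiction that $p$ has right neighbors at two distinct levels; let $j_0$ be the smaller and $j$ the larger. Using the characterization at level $j_0$, I would exhibit a point $q \in B(v_{j_0})$ with $p \prec_\cL q$: when $j_0 = 1$, take $q$ to be the $\prec_\cL$-successor of $p$ inside $B(v_1)$; when $j_0 \ge 2$, take $q$ to be the leftmost point of the next surviving child's base set, which exists precisely because $v_{j_0-1}$ is not the last surviving child of $v_{j_0}$. Since $B(v_{j_0}) \subseteq B(v_{j-1})$ along the chain $v_{j_0} \subseteq \cdots \subseteq v_{j-1}$ of surviving containments, the point $q$ also witnesses that $p$ is not the $\prec_\cL$-rightmost point of $B(v_{j-1})$, contradicting the level-$j$ characterization. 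I expect the only mildly delicate step to be pinning down the level-$j \ge 2$ characterization (the ``not the last surviving child'' clause and the uniqueness of the witnessing index $i$); once that is in place, the contradiction follows immediately from the nested structure of the base point sets.
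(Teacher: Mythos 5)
Your proposal is correct and rests on the same structural facts as the paper's proof: the per-level characterization of when $p$ acquires a right neighbor (namely, $p$ must be $\prec_\cL$-rightmost in $B(v_{j-1})$ and $v_{j-1}$ must not be the last surviving child of $v_j$), and the nestedness $B(v_1) \subseteq B(v_2) \subseteq \cdots$ along the chain of surviving base bags. The paper argues directly (locate the maximal level $h$ at which $p$ is rightmost, observe that $p$ gets no right neighbor up to level $h$, at most one at level $h+1$, and none afterwards because it ceases to be rightmost), whereas you package the same observations as a proof by contradiction; this is essentially the same argument phrased differently.
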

\begin{proof}
Suppose first that $p$ is not the rightmost point of a base point set $B(v)$,
for some 1-level bag $v \in F_1$.     
Denote by $p'$ the right neighbor of $p$ in ${\cal B}_1$.
In this case, by construction, for every bag $v \in {\cal F}$ such
that $p \in B(v)$, it also holds that $p' \in B(v)$. Therefore,
$p$ will not be the rightmost point of $B(v)$, for any bag $v \in {\cal F}$.
Hence $p$ will not have any right neighbor in $\bigcup_{j=2}^\ell {\cal B}_j$, and so $rightdeg_\cB(p) = 1$.

Suppose now that $p$ is the right-most point of a base point set $B(v)$,
for a 1-level bag $v \in {\cal F}_1$. Let $h, h \in [\ell]$, denote the
maximum level such that $p$ is the right-most point of a base point
set $B(v)$, for an $h$-level bag $v \in {\cal F}_h$. By construction,
$p$ will not have any right neighbor in $\bigcup_{j=1}^h {\cal B}_j$.
Denote by $v_i \in {\cal F}$ the base bag
of $p$ on level $i$ (if exists), for each index $i \in [\ell]$.
(In other words,
$p \in B(v_i)$, for each index $i$ as above.)
Recall that there exists an index $j \in [\ell]$ such that the bags $v_1,v_2,\ldots,v_{j-1}$
are surviving, but the bag $v_{j}$ is disappearing.
It holds, however, that $h \le j$.
If the bag $v_{h}$ is disappearing (i.e., if $h = j$), then the point $p$ acquires no
right degree on levels $h+1,h+2,\ldots,\ell$.
Hence, in this case $rightdeg_{\cal B}(p) = 0$. Otherwise the point $p$ acquires exactly
one right neighbor on level $h+1$. From that moment on, however, $p$ will no
longer be the rightmost point of the base point sets $B(v_i)$ of its host bags.
Hence it acquires no additional right neighbors on subsequent levels.
In this case $rightdeg_{\cal B}(p) = 1$.
\QED
\end{proof}

\begin{corollary} \label{degbase}
$\Delta({\cal B}) \le 2$ and $|\cB| \le n$.
\end{corollary}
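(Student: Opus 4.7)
The plan is to derive both claims of the corollary directly from the preceding lemma, using its symmetric counterpart. First I would note that the argument just given for $rightdeg_{\cal B}(p) \le 1$ is entirely symmetric under reversal of the order $\prec_{\cal L}$: the base edge set is defined purely in terms of the left-to-right ordering of the points of each $B(v)$, so swapping ``leftmost'' with ``rightmost'' and ``left neighbor'' with ``right neighbor'' throughout the proof yields $leftdeg_{\cal B}(p) \le 1$ for every $p \in Q$ by the identical case analysis (whether $p$ is the leftmost point of some $B(v)$, and if so, tracking the levels at which this persists).

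Given these two symmetric bounds, every point $p \in Q$ satisfies
\[
\deg_{\cal B}(p) \;=\; leftdeg_{\cal B}(p) + rightdeg_{\cal B}(p) \;\le\; 1 + 1 \;=\; 2,
\]
which proves $\Delta({\cal B}) \le 2$. For the edge count, I would use the handshake identity: since $\cB$ is an (undirected, simple) edge set on $Q$, summing the degree bound over all $n$ points gives
\[
2 \cdot |\cB| \;=\; \sum_{p \in Q} \deg_{\cal B}(p) \;\le\; 2n,
\]
hence $|\cB| \le n$, completing the proof. No step here is particularly delicate; the main content has already been handled by the previous lemma, and the only thing to verify is that its argument is genuinely symmetric in the $\prec_{\cal L}$ direction, which it is because the construction of each ${\cal B}(v)$ is defined purely through consecutive pairs in the left-to-right ordering of $B(v)$.
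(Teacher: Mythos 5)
Your proposal is correct and matches the paper's intended argument: the degree bound follows from the preceding lemma together with the (explicitly noted) symmetric statement $leftdeg_{\cal B}(p) \le 1$, and the edge count then follows by the handshake identity since no vertex has degree exceeding two. This is exactly the reasoning the paper sets up in the text immediately preceding the lemma.
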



Next we analyze the lightness $\Psi(\cB)$ of the base edge set $\cB$.
Observe that for each index $j \in [\ell]$, the $j$-level base edge set ${\cal B}_j$ is a collection of vertex-disjoint paths.
By the triangle inequality, the weight $\omega({\cal B}_j)$ of
${\cal B}_j$
is bounded above by the weight $L = \omega({\cal L}) = O(t \cdot \omega(MST(M[Q])))$ 
of the  Hamiltonian path $\cL$,
for each index $j \in [\ell]$.
We conclude that the weight $\omega(\cB)$ of the base edge set $\cB = \bigcup_{j=1}^\ell {\cal B}_j$ satisfies $\omega(\cB) ~=~ \omega(\bigcup_{j=1}^\ell {\cal B}_j)
~\le~ \sum_{j=1}^\ell \omega({\cal B}_j) ~\le~ \ell \cdot O(t \cdot \omega(MST(M[Q]))) ~=~ O(\ell \cdot t) \cdot \omega(MST(M[Q]))$.
\begin{corollary} \label{weightbase}
$\Psi(\cB) = O(\ell \cdot t)$.
\end{corollary}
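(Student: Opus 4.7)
The plan is to bound $\omega(\cB)$ level by level and then divide by $\omega(MST(M[Q]))$. The main observation to establish first is that for each fixed $j \in [\ell]$, the $j$-level base edge set $\cB_j$ decomposes into a vertex-disjoint collection of simple paths, one per non-empty bag $v \in \cF_j$. Indeed, by construction (cf.\ Lemma \ref{recur}), the edge set $\cB(v)$ together with the recursive base edges inside the surviving children trace a single $\prec_\cL$-monotone path on $B(v)$; isolating the contribution $\cB(v)$ of level $j$ still leaves a collection of subpaths on the base point sets of the surviving children, connected by the edges $(y^{(i)}(v),x^{(i+1)}(v))$. For distinct $u,v \in \cF_j$ the corresponding pieces use disjoint vertex sets, since $B(u) \cap B(v) = \emptyset$.

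Next I would show that these per-bag paths in $\cB_j$ lie on pairwise disjoint sub-segments of $\cL$. The key inclusion is $B(v) \subseteq N(v)$, proved by straightforward induction on $j$ from the recurrence $B(v) = \bigcup_{z \in \cS(v)} B(z)$, using that any surviving child $z \in \cS(v)$ satisfies $\pi(z) = v$ in $\cF$ and hence $N(z) \subseteq N(v)$. Since the native intervals $\{I(v) : v \in \cF_j\}$ partition $\cL$ into disjoint subsegments (as observed after the definition of $\cF$), a single application of the triangle inequality along $\cL$ gives $\omega(\cB(v)) \le \omega(\cL \cap I(v))$. Summing over $v \in \cF_j$ yields
\[
\omega(\cB_j) \;=\; \sum_{v \in \cF_j} \omega(\cB(v)) \;\le\; \sum_{v \in \cF_j} \omega(\cL \cap I(v)) \;=\; L.
\]

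Finally I would plug in $L = \omega(\cL) \le 2\,\omega(T)$, which is the classical bound on the weight of a preorder traversal of a tree, and use that $T$ is a $t$-approximate MST, so that $\omega(T) \le t \cdot \omega(MST(M[Q]))$, hence $L = O(t) \cdot \omega(MST(M[Q]))$. Summing the level-wise bound over $j \in [\ell]$ yields
\[
\omega(\cB) \;\le\; \sum_{j=1}^\ell \omega(\cB_j) \;\le\; \ell \cdot L \;=\; O(\ell \cdot t) \cdot \omega(MST(M[Q])),
\]
and dividing by $\omega(MST(M[Q]))$ gives $\Psi(\cB) = O(\ell \cdot t)$ as claimed.

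There is no substantial obstacle here: the argument is essentially a clean level-wise telescoping once one recognizes the disjoint-paths structure of each $\cB_j$. The only piece requiring a careful (though routine) induction is the inclusion $B(v) \subseteq N(v)$, which is what allows the disjointness of native intervals at level $j$ to transfer to disjointness of the supports of the paths composing $\cB_j$.
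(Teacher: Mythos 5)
Your proposal is correct and follows essentially the same route as the paper: bound $\omega(\cB_j)$ by $L$ via the triangle inequality (the paper states this more tersely, noting only that $\cB_j$ is a collection of vertex-disjoint paths and leaving the $\cL$-segment disjointness implicit; you supply the missing detail via $B(v)\subseteq N(v)$, which the paper also asserts in Section~\ref{sec21}), then sum over the $\ell$ levels and use $L = O(t)\cdot\omega(MST(M[Q]))$.
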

{\bf Remark:} If the metric is low-dimensional Euclidean or doubling, then $\omega(\cL) \le 2 \cdot \omega(MST(M[Q]))$,
and so $\Psi(\cB) = O(\ell)$.

\subsection{Zombies and Incubators} \label{thealg}
Algorithm $LightSp$
starts with computing
the path-spanner $H = (Q,E_H)$ and the base edge set $\cB$.
Next, it invokes Algorithm $BasicSp$ to build
a $t$-spanner $G'_0 = (Q_0,E'_0)$ for the sub-metric
$M[Q_0]$ of $M$ induced by  $Q = Q_0$.
Define $\tilde E_0$ to be the edge set obtained by
\emph{pruning} $E'_0$, i.e., removing all edges
of weight greater than the \emph{0-level threshold} $\tau_0$.
The corresponding graph $\tilde G_0 = (Q_0,\tilde E_0)$ is called the \emph{0-level auxiliary
spanner}.
In a similar way (details will be provided in Section \ref{s:jlevel}), the algorithm
builds auxiliary $j$-level spanners $\tilde G_j = (Q_j,\tilde E_j)$, for each $j \in [\ell]$.
The spanner $\tilde G_j$ is a graph over the set of representatives of the non-empty $j$-level bags. The representatives are determined according to rules that will be specified in Section \ref{sec24}.
The union $\cB \cup E_H \cup \bigcup_{j=0}^\ell \tilde E_j$
is the ultimate spanner
$\tilde G = (Q,\tilde E)$
 that Algorithm $LightSp$ returns.


As  discussed above, during the algorithm a $j$-level bag $z$
may join as a step-child of some $(j+1)$-level bag $v$, $v \ne \pi(z)$.
We now take a closer look at this process.

Each bag $v \in \cF$ may hold a \emph{label} of exactly one of two types,
a \emph{zombie} and an \emph{incubator}. Initially, all bags
are unlabeled. As Algorithm $LightSp$ proceeds, some bags may be assigned
with labels.

It may happen that an $i$-level bag $v$ is \emph{abandoned} by its
parent $\pi(v)$, and is
\emph{attached} to an $i$-level bag $u$.
It must hold that $\pi(v) \ne \pi(u)$.
We also say that $v$ is \emph{adopted} by $\pi(u) \in \cF_{i+1}$.
We denote the \emph{attachment} of $v$ to $u$ by $\cA(u,v)$.
We also call it an \emph{adoption} of $v$ by $\pi(u)$.
However, the attachment and adoption come with a suspension
period, henceforth, \emph{incubation period}. Specifically, there is a positive integer  $\gamma$, 
which determines the  length of the incubation period. The $(i+\gamma-1)$-level
ancestor $v'$ of $v$
will actually be \emph{disintegrated} from its parent $\pi(v')$,
and \emph{join}
the $(i+\gamma)$-level ancestor $u'$ of $u$. The bag $u'$ is referred to as the \emph{actual adopter}.
It will be shown later (see Corollary \ref{tat4} in Section \ref{zom:app}) that adoption rules (which we still did not finish to specify) imply that $\pi(v') \ne u'$.
We remark that attachments occur only for $i \le \ell - \gamma$.   
The sets $B(u'),K(u'),Q(u')$
and $B(\pi(v')),K(\pi(v')),Q(\pi(v'))$ are computed
according to the rules specified in Section \ref{sec21}.

The $\gamma-1$ immediate ancestors of $v = v^{(0)}$, namely, the bags $v^{(1)} = \pi(v), v^{(2)} = \pi(v^{(1)}),
\ldots,  v^{(\gamma-1)} = \pi(v^{(\gamma -2)}) = v'$,
change
their status
as a a result of this attachment. They will be now labeled as  \emph{zombies}.
The bag $v'$ is called a \emph{disappearing zombie}, because it joins $u'$ rather than its original parent $\pi(v')$.
We will refer to $v$ as an \emph{attached bag}.
Similarly, the $\gamma-1$ immediate ancestors of $u = u^{(0)}$,
 namely, the bags $u^{(1)}= \pi(u),u^{(2)}= \pi(u^{(1)}),
\ldots,  u^{(\gamma-1)} = \pi(u^{(\gamma -2)})$,
change their status as well.
They 
will be now labeled as \emph{incubators}.
The $(i+\gamma)$-level bag $u' = u^{(\gamma)}$ is \emph{not} labeled as an incubator.
This bag is called the \emph{actual adopter}.
We remark that the same bag may become an adopter (and incubator) of several different
descendants. Note also that, since $i \ge 1$,
for a $j$-level bag $u' \in \cF_j$ to be an actual adopter, it must
hold that
$j ~=~ i + \gamma ~\ge~ \gamma + 1$.
The $i$-level bag $u = u^{(0)}$ will be referred to as the \emph{initiator} of the attachment $\cA(u,v)$.
(See Figure \ref{fig2} for an illustration.)
\begin{figure*}[htp]
\begin{center}
\begin{minipage}{\textwidth} 
\begin{center}
\setlength{\epsfxsize}{5in} \epsfbox{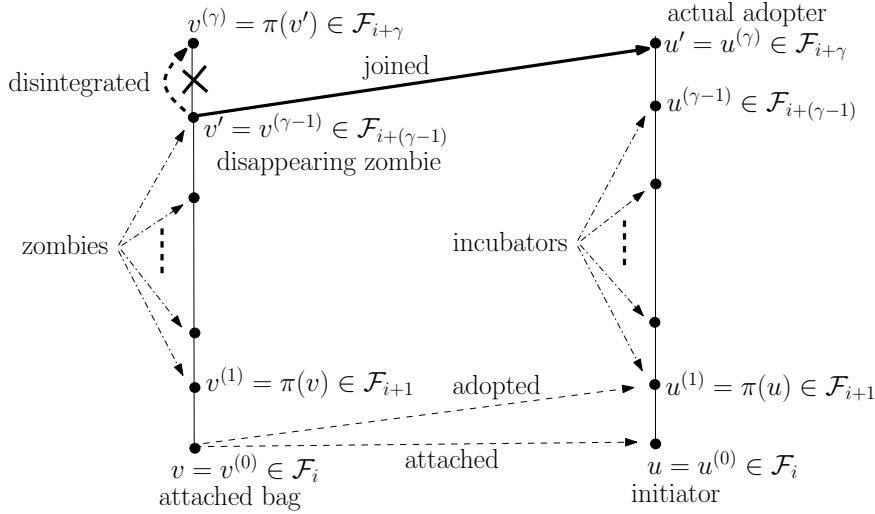}
\end{center}
\end{minipage}
\caption[]{ \label{fig2} \sf An illustration of an attachment $\cA(u,v)$.}
\vspace{-0.2in}
\end{center}
\end{figure*}

\subsection{Representatives} \label{sec24}
In this section we specify how Algorithm $LightSp$ selects representatives for bags.

For a point $p \in Q$ and an index $j \in [\ell]$, denote by $v_j(p)$
the $j$-level \emph{host bag} of $p$, i.e., the unique bag $v_j(p)$
that satisfies $p \in Q(v_j(p))$, $v_j(p) \in \cF_j$.
(Recall that $\{Q(v) ~\vert~ v \in \cF_j\}$ is a partition of $Q$, for every index $j \in [\ell]$.)

The algorithm
maintains a few
load indicators and counters for every point $p \in Q$. For each index $j \in [\ell]$,
the load indicator $load_j(p)$ is equal to 1 if the point $p$ is not isolated (i.e., it has at least one neighbor)
in the
$j$-level auxiliary spanner $\tilde G_j$.
Otherwise, $load_j(p)$ is set to 0.
The \emph{load counter} $load\_ctr_j(p)$ is defined by $load\_ctr_j(p) = \sum_{i=1}^j
load_i(p)$.
Algorithm $LightSp$ also maintains three more refined load counters for every
point $p$. Specifically, the \emph{small counter} $ctr_j(p)$
(respectively, \emph{large counter} $CTR_j(p)$) is the number of
indices $i$, $1 \le i \le j$, such that the point $p$ is not isolated in
$\tilde G_i$
\emph{and} its host bag $v_i(p)$ is small (resp., large). (See definition \ref{smalllarge}.)
 Note that
$load\_ctr_j(p) = ctr_j(p) + CTR_j(p)$.
The algorithm also counts the number of indices $i, 1 \le i \le j$,
such that the point $p$ is not isolated in $\tilde G_i$ \emph{and}
its host bag $v_i(p)$ satisfies $Q(v_i(p)) = \{p\}$. This counter is
referred to as the \emph{single counter} of $p$, and is denoted
$single\_ctr_j(p)$.
It also maintains the complementary counter $plain\_ctr_j(p) = ctr_j(p)
- single\_ctr_j(p)$,
which is referred to as the \emph{plain counter} of $p$.
(For convenience, all counters with index 0 are set as 0, i.e., $load\_ctr_0(p) = CTR_0(p) = ctr_0(p) = plain\_ctr_0(p) = single\_ctr_0(p) = 0$.)

A point $p \in Q$ may have edges incident on it in the $j$-level auxiliary spanner
$\tilde G_j$ only if it is a representative of a $j$-level bag $v \in \cF_j$.
Hence we generally make an effort to select a representative with as
small counter as possible. The specific way in which
Algorithm $LightSp$ selects representatives at the beginning of the $j$-level
processing, $j \in [\ell]$, is the following one.

The representative $r(v)$ of a non-empty 1-level bag
$v \in \cF_1$ is selected arbitrarily from $K(v) = Q(v)$.
Next, consider a non-empty $j$-level bag, $j \in [2,\ell]$.
The bag $v$ is said to be a \emph{growing bag}
if $|\chi(v)| \ge 2$, i.e., if $v$ is obtained as a result of a merge of two or more
non-empty $(j-1)$-level bags. Otherwise, the bag $v$ is called \emph{stagnating}.	
If $v$ is a stagnating bag, then necessarily $\cJ(v) = \emptyset$, and $|\cS(v)| = 1$.
(See the paragraph between Lemma \ref{tat} and Claim \ref{tate1}.)

If $v$ is large (i.e., $|Q(v)| \ge \ell$) then Algorithm $LightSp$ appoints a point $p \in K(v)$
with the smallest large  counter $CTR_{j-1}(p)$ as its representative $r(v)$.

If $v$ is small (i.e., $1 \le |Q(v)| < \ell$) then the algorithm checks whether it is a growing bag or a
stagnating one.
If $v$ is a stagnating bag then $\cS(v) = \{w\}$, for some
$(j-1)$-level bag $w$. In this case Algorithm $LightSp$ sets the representative $r(v)$
of $v$ to be equal to the representative $r(w)$ of $w$, i.e., $r(v) = r(w)$.
Otherwise, $v$ is a growing small bag. In this case Algorithm $LightSp$ appoints
a point $p \in K(v)$ with the smallest plain counter $plain\_ctr_{j-1}(p)$ as
its representative $r(v)$.

\subsection{Procedure $Attach$} \label{app:attachalg}
In this section we present a simple graph procedure, called Procedure $Attach$, which will be used as a building
block by Algorithm $LightSp$.
More specifically, this procedure is used to determine which bags will participate in attachments 
in the $j$-level processing. (The $j$-level processing will be described in Section \ref{s:jlevel}.)
It accepts as input an $n$-vertex graph $G=(V,E)$,
whose
vertices are labeled by either \emph{safe} or \emph{risky}. (The meaning of these labels will become clear in Section
\ref{s:jlevel}.)
The procedure returns a \emph{star forest}, i.e., a collection  $\Gamma$ of
vertex disjoint stars,
that satisfies the following two conditions.
\begin{enumerate}
\item $\bigcup_{S \in \Gamma} V(S)$ contains the set $R \subseteq V$ of
vertices which are not isolated\footnote{A vertex $v$ in a (possibly directed) graph $G$ is called \emph{isolated} if 
no edge in $G$ is 
incident on $v$.}
in $G$, and labeled as risky.
\item Each star $S \in \Gamma$ contains a \emph{center} $s \in V$
labeled as either safe or risky, and one or more leaves $z_1,\ldots,z_k \in V$
labeled as risky. The edge set $E(S)$ of a star $S$ is
given by $E(S) = \{(z_i,s) ~\vert~ i \in [k]\}$.
\end{enumerate}
See Figure \ref{fig10} for an illustration.
\begin{figure*}[htp]
\begin{center}
\begin{minipage}{\textwidth} 
\begin{center}
\setlength{\epsfxsize}{5in} \epsfbox{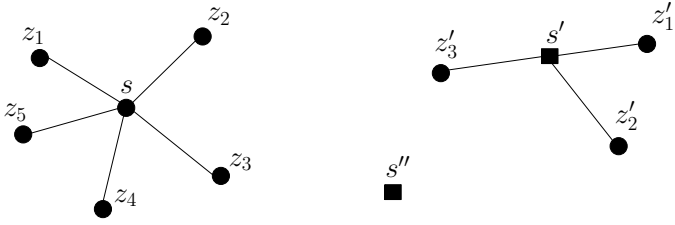}
\end{center}
\end{minipage}
\caption[]{ \label{fig10} \sf A star forest. Safe bags are depicted by squares, while risky ones are depicted by cycles.}
\end{center}
\end{figure*}

Intuitively, Procedure $Attach$ attaches each risky vertex to some
other vertex. Each star of $\Gamma$ will eventually be merged into a
single super-vertex in a
certain supergraph in our algorithm. This will be,
roughly speaking, our way to "get rid" of risky vertices.
It is instructive to view each star center $s$ as an attachment initiator,
and leaves of the  star centered at $s$ as
zombies that will eventually join an appropriate ancestor of $s$ as its step-children.

Procedure $Attach$
starts with forming the attachment digraph $\mathcal G$
as follows:
for every non-isolated vertex $z \in R$,
we pick an arbitrary neighbor $x \in V$ of $z$ in $G$, and insert
the arc $\langle z,x \rangle$ into $\mathcal G$.

It is easy to see that each vertex
$z \in R$ has out-degree one
in $\mathcal G$, and each vertex $s \in V \setminus R$
(in particular, each vertex labeled as safe) has out-degree zero in $\mathcal G$.

Procedure $Attach$ proceeds in two stages.
The first stage is carried out iteratively.
At each iteration the procedure picks an arbitrary non-isolated vertex $z$ in the attachment digraph $\cG$ with in-degree zero,
and handles it as follows.
Since $z$ is non-isolated, it must have an outgoing neighbor $s$; thus $z$ must be labeled as risky.
The procedure removes
the edge $\langle z,s \rangle$ from $\cG$.

Next, suppose that $s$ is the center of some existing star $S'$ in $\Gamma$.
In this case the procedure adds the vertex $z$ as well as the recently removed edge $\langle z,s \rangle$
into the star $S'$. The vertex $z$ is designated as a leaf of $S'$.

Otherwise, $s$ does not belong yet to any star in $\Gamma$.
In this case the procedure forms
a new star $S$  and adds it to $\Gamma$.
It then adds the vertices $z$ and $s$ as well as the recently removed edge $\langle z,s \rangle$ into $S$.
The vertex $s$ is designated as the center of $S$ and the vertex $z$ is designated as a leaf of $S$.
Moreover, if $s$ has an outgoing neighbor $s'$ in $\cG$, the procedure removes
the edge $\langle s,s' \rangle$ from $\cG$. (By removing the edge $\langle s,s' \rangle$ from $\cG$,
we guarantee that $s$ will not be added to any other star in subsequent iterations.)

The first stage terminates when all the non-isolated vertices in $\cG$
have in-degree at least one.
Let $V'$ be the set of non-isolated vertices in $\cG$ at the end of the first stage,
and denote by $\cG' = \cG[V']$ the subgraph of $\cG$ induced by the vertex set $V'$.
Denote by $E'$ the edge set of $\cG'$.
If $E'$  is empty, then the procedure $Attach$ terminates.
Otherwise, the second stage of the procedure starts.

Notice that all vertices of $\cG'$ have in-degree at least one, and so $|E'| \ge |V'|$.
Also, the out-degree of each vertex in $\cG'$ is at most one, and so $|E'| = |V'|$.
It follows
that both the in-degree and the out-degree of each vertex of $\cG'$ must be equal to one.
This, in turn, means that all the vertices of $\cG'$ are labeled as risky (i.e., $V' \subseteq R$). Moreover, the graph $\cG'$ is comprised of a collection
 $\cC$ of directed vertex disjoint  cycles. Consider a cycle $C = (v_0,\ldots,v_{g-1},v_g = v_0) \in \cC$, for some positive integer $g \ge 2$.
If $g$ is even then the procedure forms $\frac{g}{2}$ stars $\{\langle v_0,v_1 \rangle\},\ldots,\{\langle v_{g-2},v_{g-1} \rangle\}$,
each containing a single arc. Otherwise (if $g$ is odd), the procedure forms $\frac{g-1}{2}$ stars $\{\langle v_0,v_1 \rangle\},\ldots,\{\langle v_{g-3},v_{g-2} \rangle,
\langle v_{g-1},v_{g-2} \rangle\}$. (Each of these stars except for the last one contains one arc, and the last contains
two arcs. Note that the orientation of the arc $\langle v_{g-2},v_{g-1} \rangle$ gets inverted.)
In both cases each of these $\lfloor \frac{g}{2} \rfloor$ stars is added to $\Gamma$.

Finally, we ignore the orientation of edges, that is, the resulting star forest $\Gamma$ is viewed as an undirected graph.

This completes the description of the procedure $Attach$.

It is easy to verify that the graph $\Gamma$ constructed by Procedure $Attach$ is a star forest that satisfies the
two conditions listed above. Also, it is straightforward to implement this procedure in time $O(|V|)$.
\begin{corollary} \label{attachcor}
Procedure $Attach$, given a graph $G = (V,E)$ whose
vertices are labeled by either safe or risky, produces a star
forest that satisfies conditions 1 and 2 above (listed in the beginning of this section).
The running time of this procedure is $O(|V|)$.
\end{corollary}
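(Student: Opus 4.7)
The plan is to verify the star-forest structure, the coverage condition, and the $O(|V|)$ running time by tracking a small invariant through the two stages. First I would record a preliminary fact about $\mathcal G$ that will be used throughout: a vertex has out-degree one in $\mathcal G$ iff it is risky and non-isolated in $G$, and out-degree zero otherwise; in particular every safe vertex is a sink. This is immediate from the construction of $\mathcal G$.

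For Stage 1, I would maintain the invariant that $\Gamma$ is a collection of vertex-disjoint stars, each with a single center (safe or risky) and every leaf risky, and that every vertex currently appearing in $\Gamma$ has had its unique outgoing arc of $\mathcal G$ removed. Both branches of the iteration preserve this: adding the picked in-degree-zero vertex $z$ as a leaf of an existing star centered at $s$ is consistent because $z$ has out-degree one (hence is risky) and has not yet been enrolled anywhere (its outgoing edge is removed exactly now); creating a new star with center $s$ and leaf $z$ is consistent for the same reason, and the subsequent removal of $s$'s outgoing edge (if any) prevents $s$ from being enrolled later as a leaf of another star. Vertex-disjointness across iterations then follows because once a vertex loses its outgoing edge it cannot be picked again as an in-degree-zero non-isolated vertex.

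For Stage 2, I would observe that the residual digraph $\mathcal G'$ has every vertex of in-degree $\geq 1$ (otherwise Stage 1 would not have terminated) and of out-degree $\leq 1$ (inherited from $\mathcal G$). Summing, $|V'| \le |E'| \le |V'|$, so every vertex of $\mathcal G'$ has in-degree and out-degree both exactly one. Hence $V' \subseteq R$ and $\mathcal G'$ is a disjoint union of simple directed cycles of length $\ge 2$; the explicit odd/even cycle partition gives vertex-disjoint stars with risky centers and risky leaves, disjoint from the Stage 1 stars (whose vertices all have their outgoing edges removed from $\mathcal G$). For the coverage condition, I would argue that for any $z\in R$, either its outgoing arc is deleted during Stage 1 -- in which case $z$ lies in a Stage 1 star as either a leaf (if $z$ itself was picked with in-degree zero) or a center (if $z$ was chosen as a center while processing one of its in-neighbors) -- or it survives to $\mathcal G'$ and is placed in a Stage 2 star.

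For the running time, constructing $\mathcal G$ takes $O(|V|)$ since $|E(\mathcal G)|\le |V|$, Stage 1 can be implemented in $O(|V|)$ by maintaining a queue of non-isolated vertices with in-degree zero and updating in-degrees upon edge removals, and Stage 2 traverses the cycles of $\mathcal G'$ in $O(|V'|)$ time. I do not expect any serious obstacle; the one place that deserves care is the invariant for Stage 1, specifically the case where $z$'s out-neighbor $s$ is itself risky and becomes a center -- here one must check both that $s$ does not re-enter the iteration later and that the coverage argument correctly classifies $s$ as a covered vertex.
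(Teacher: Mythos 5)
Your verification is correct and fills in details the paper treats as routine — the paper asserts that the star-forest conditions and the $O(|V|)$ bound are ``easy to verify'' and ``straightforward'' without giving an argument. The Stage-1 invariant you track (every vertex enrolled in $\Gamma$ has its outgoing arc already removed, hence cannot later be picked as a leaf or re-used as a center), the in-degree/out-degree counting that forces $\mathcal G'$ to be a disjoint union of directed cycles, and the exhaustive case split for coverage of each non-isolated risky vertex are exactly the checks needed, and each is carried out correctly.
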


\subsection{$j$-level processing} \label{s:jlevel}
The routine that performs $j$-level processing (henceforth, Procedure $Process_{j}$), for $j \in [\ell]$,
accepts as input the forest $\hat \cF$
that was processed by $Process_1, Process_2,\ldots,Process_{j-1}$.
That is, for each $j$-level bag $w \in \cF_{j}$, Procedure $Process_{j}$ accepts as input the sets $B(w),K(w)$
and $Q(w)$, 
and the representative $r(w) \in K(w) \subseteq Q$ of $w$. It is also known to
the procedure whether this bag is (labeled as) a zombie or an incubator, and whether this bag is a disappearing zombie or an actual adopter.


Denote by $Q_{j} = \{r(w) ~\vert~ w \in \cF_{j}, Q(w) \ne \emptyset\}$
the set of representatives of the non-empty $j$-level bags.
Observe that 
$|Q_j| ~\le~ \min\{n,n_j\} =  \min\left\{n,\frac{c\cdot n}{\rho^{j-1}}\right\}$.

Procedure $Process_{j}$ consists of three parts.
\emph{Part I} of Procedure $Process_j$
invokes Algorithm $BasicSp$ for the metric
$M[Q_{j}]$. The algorithm constructs a $t$-spanner $G'_{j} = (Q_{j},E'_{j})$.
It then \emph{prunes} $G'_{j}$, i.e.,
it removes from it all edges $e$ with $\omega(e) > \tau_{j}$. Denote the resulting pruned
graph $G^*_{j} = (Q_{j},E^*_{j})$.
The edge set $E^*_{j}$
is inserted into the spanner $\tilde G$.
This completes the description of Part I of Procedure $Process_j$.

While the spanner $G'_{j}$ is connected, some points of
$Q_{j}$ may be isolated in $G^*_{j}$. Denote by $Q^*_{j}$
the subset of $Q_{j}$ of all points $q \in Q_{j}$ which are not isolated in $G^*_{j}$.

If $j \le \ell -\gamma$ then Procedure $Process_{j}$ enters
\emph{Part II}, which is the main ingredient of Procedure $Process_j$.
(Otherwise, Part II is skipped.) We need to introduce some more definitions before proceeding.

A bag $v$ is called \emph{useless} if it is
either empty or a zombie. Otherwise it is called \emph{useful}.

For a bag $v \in \cF_i$, $i \le \ell - \gamma$, its $(i + \gamma)$-level
ancestor $v^{(\gamma)}$
is called the \emph{cage-ancestor} of $v$.
The set of all $i$-level descendants of $v^{(\gamma)}$, denoted $\cC(v)$,
is called the \emph{cage} of $v$.
If $v$ is the only useful bag in its cage, it is called a \emph{lonely} bag;
otherwise it is called a \emph{crowded} bag.

A non-empty bag $v$ is called \emph{safe} if it satisfies at least one of the following
three conditions:
(1) $v$ is large, (2) $v$ is crowded, (3) $v$ is an incubator or a zombie.
Otherwise $v$ is called \emph{risky}.
Note that for $v$ to be risky it must be small (i.e., $|Q(v)| < \ell$),
lonely, and neither an incubator nor a zombie.
A representative $r(v)$ of a safe (respectively, risky; useful; zombie) bag $v$ is called \emph{safe}
(resp., \emph{risky}; \emph{useful}; \emph{zombie}) as well.

Intuitively, for a safe bag $v$ there is no danger that one of the points $p \in Q(v)$ will become overloaded,
i.e., that its degree in the spanner will be too large.
Indeed, if $v$ is a large bag, then it contains many points which can share the load. If $v$ is a crowded bag
or an incubator or a zombie, then it will soon be merged with at least one other non-empty bag $u$, and through this merge it will
acquire additional points that can participate in sharing the load.
If $v$ is a crowded bag, then $u$ is a bag that belongs to $v$'s cage.
Otherwise $v$ is an incubator or a zombie.
In this case $u$ does not belong to $v$'s cage.
However, in either case, we will argue later that any point in $Q(v)$ is quite close to any point in $Q(u)$ in the original metric $M$.
Moreover, these points will also stay close in the spanner.

Part II of Procedure $Process_{j}$ starts with marking each bag $w \in \cF_{j}$ (and its representative $r(w)$)
as either useful or useless, and as either safe or risky.
Denote by $\hat Q_{j}$ the subset of $Q^*_j$ which contains only useful
representatives;   
note that $\hat Q_{j}$ contains all points of $Q^*_{j}$, except for zombie
representatives.      
(See Figure \ref{fig5} for an illustration.)
\begin{figure*}[htp]
\begin{center}
\begin{minipage}{\textwidth} 
\begin{center}
\setlength{\epsfxsize}{2in} \epsfbox{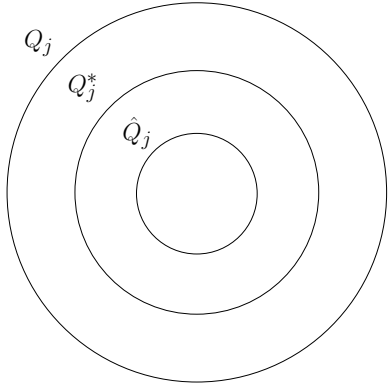}
\end{center}
\end{minipage}
\caption[]{ \label{fig5} \sf Subsets of $Q_j$ which are used during the $j$-level processing.
$Q^*_j$ is the subset of $Q_j$ of points with positive degree in $G^*_j$,
and $\hat Q_j$ is the subset of $Q^*_j$ of non-zombie representatives.}
\end{center}
\end{figure*}
Then it invokes Algorithm $BasicSp$,
this time with input $M[\hat Q_{j}]$. As a result, a graph $\check G_j = (\hat Q_j, \check E_j)$ is constructed.
Next, it prunes  $\check G_{j}$, i.e.,
it removes from it all edges $e$
 with $\omega(e) > \tau_{j}$.
Denote by $\hat G_{j} = (\hat Q_{j}, \hat E_{j})$
the resulting pruned graph.    
The edge set $\hat E_{j}$ is also inserted into the output spanner $\tilde G$.
Let the \emph{$j$-level auxiliary spanner} $\tilde G_{j} = (Q_{j},\tilde E_{j})$
denote the graph obtained as a union of the graphs $G^*_{j} = (Q_{j},E^*_{j})$
and $\hat G_{j} = (\hat Q_{j},\hat E_{j})$. (In case $j > \ell - \gamma$, we take the $j$-level auxiliary spanner $\tilde G_j$ to be $G^*_{j}$.)

Next, 
 Part II of Procedure $Process_{j}$ constructs the $j$-level \emph{attachment graph}
$G_{j} = (\hat Q_{j},\cE_j)$, which is
the restriction of the $j$-level auxiliary spanner $\tilde G_j$ to the set $\hat Q_j$,
i.e., $\cE_{j} =  \tilde E_j(\hat Q_j) = \hat E_{j} \cup E^*_j(\hat Q_j)$.
(Note that all points of $Q_{j}$, and so  all vertices of $G_j$, are labeled as either safe or risky.)

Part II of Procedure $Process_{j}$ now invokes Procedure $Attach$ on the graph $G_{j} = (\hat Q_{j},\cE_{j})$.
By Corollary \ref{attachcor} (see Section \ref{app:attachalg}), this procedure  returns a star forest
$\Gamma_j$
that covers $R_j$ (i.e., $R_{j} \subseteq \bigcup_{S \in \Gamma_j} V(S)$), where
$R_{j}\subseteq \hat Q_{j}$ is the set of all risky points in $\hat Q_{j}$.
(Note that no point of $R_j$ is isolated in $G_j$.)
Also, each star $S\in \Gamma_j$ is centered at a center $s \in \hat Q_{j}$,
which is either safe or risky. The star $S$ also contains one or more
leaves $q_1,\ldots,q_k \in \hat Q_{j}$, $k \ge 1$, which are all risky.


Intuitively, risky bags cannot be left on their own, because the degrees of their points will inevitably explode. (See Section 1.3.)
Hence the algorithm merges them either with one another, or with some safe bags. 
The attachment graph $G_j$ is used to determine which bags will merge. 
A special care is taken to exclude zombie representatives from $G_j$.
Recall that a zombie bag is already on its way to be merged with some other bag. 
If another bag were merged into a zombie bag,
this would ultimately lead to the creation of ``zombie paths'', i.e., paths $(z_1,z_2,\ldots,z_h)$ of zombie bags, where $z_1$ merges into $z_2$,
$\ldots$, $z_{h-1}$ merges into $z_h$.
This would result  in an uncontrolled growth of the diameter.

Next, Part II of Procedure $Process_{j}$ performs attachments. Specifically, for each
star $S \in \Gamma_j$ with center $s$ and leaves $q_1,\ldots,q_k$,
the host bags $v(q_1),\ldots,v(q_k)$ of $q_1,\ldots,q_k$, respectively, are
attached to the host bag $v(s)$ of $s$.
As a result
the parent bag $\pi(v(s)) = v^{(1)}(s)$ of $v(s)$ adopts
the bags $v(q_1),\ldots,v(q_k)$.
In other words,
the $\gamma - 1$ immediate ancestors 
$v^{(1)}(s) = \pi(v(s)),\ldots,v^{(\gamma-1)}(s)$ of $v(s)$ (in $\cF$)  are labeled
as incubators,   
and the  $\gamma -1$ 
immediate ancestors $v^{(1)}(q_i),\ldots, v^{(\gamma-1)}(q_i)$ of $v(q_i)$ (in $\cF$),
for each $i \in [k]$,
are labeled as zombies. Note that $v^{(\gamma-1)}(q_i)$ is a
disappearing zombie, for each $i \in [k]$, and $v^{(\gamma)}(s)$ is the actual adopter.
We say that $v(s)$ \emph{performs $k$ attachments}
$\cA(v(s),v(q_1)),\cA(v(s),v(q_2)),\ldots,\cA(v(s),v(q_k))$.
The bag $v(s)$ is the initiator of all these $k$ attachments.
The edges $\{(s,q_i) ~\vert~ i \in [k]\}$ that
connect the center $s$ of the star $S$ with the leaves $q_i$ of $S$, $i \in [k]$,
belong to the attachment graph $G_j = (\hat Q_j,\cE_j)$,
and they are inserted into the auxiliary spanner $\tilde G_j$, and consequently, into the spanner $\tilde{G}$.
For each  $i \in [k]$, we say that the spanner edge $(s,q_i) = (r(v(s)),r(v(q_i)))$
is a
\emph{representing edge} of the attachment $\cA(v(s),v(q_i))$.
This completes the description of Part II of Procedure $Process_j$.
Next, if $j \in [\ell-1]$, Procedure $Process_{j}$ moves to \emph{Part III} of Procedure $Process_j$.
(If $j = \ell$, part III is skipped.)
Specifically, it computes the sets $\cS(v)$ and $\cJ(v)$ of surviving
children and step-children, respectively, for every bag $v \in \cF_{j+1}$. This is done
according to the set of attachments which were computed in previous
levels. In particular, a child $w$ of $v$ which joins some other
$(j+1)$-level vertex $u, u \ne v$, is excluded from $\cS(v)$. Such a bag $w$
is a disappearing zombie, and a step-child of $u$.
Similarly, a bag $z \in \cF_{j}$ with
$\pi(z) \ne v$, which is a  step-child
of $v$, joins the set $\cJ(v)$.
Given the sets $\cS(v)$ and $\cJ(v)$, Part III of Procedure $Process_{j}$
computes the sets $B(v),K(v),Q(v)$ according to the rules specified in Section \ref{sec21}, and computes the representative $r(v)$ of $v$
according to the rules specified in Section \ref{sec24}.
This completes the description of Part III (the last part) of Procedure $Process_{j}$.


Observe that for $j \in [\ell - \gamma]$,
all three parts of Procedure $Process_j$ are executed.
Also, for $j \in [\ell-\gamma+1,\ell-1]$,
just Parts I and III of Procedure $Process_j$ are executed, and
Part II is skipped.
Finally, Procedure $Process_\ell$ (i.e., the $\ell$-level processing)
executes just Part I, and
skips Parts II and III.


\section{Analysis} \label{section3}
This section is devoted to the analysis of the spanner $\tilde G$ constructed by Algorithm $LightSp$.

Section \ref{zom:app} focuses on properties of zombies and incubators. In Section \ref{app:rig} we analyze the number of edges and  lightness of our spanners, as well as the running time of our algorithm.
In Section \ref{stchdiam} we analyze the stretch and diameter of our spanners, and in Section \ref{deg:app} we study their degree.

The incubation period $\gamma$ is set as $\gamma = c_0 \cdot (\lceil \log_\rho t\rceil + \lceil \log_\rho c\rceil + 1)$, for a  sufficiently large constant $c_0$.
(Recall that $c = \Theta(t/\eps)$, hence $\gamma = \Theta(\log_\rho (t/\eps))$.)

\subsection{Zombies and Incubators} \label{zom:app}
In this section we   prove a few basic properties of
labels (zombies and incubators) used in our algorithm.

A $j$-level bag $v$ is an \emph{attached bag}
if it is unlabeled, and is adopted
during the execution of (part II of) Procedure $Process_j$.
Observe that an attached bag $v$ must be lonely, i.e.,
the cage $\cC(v)$ does not contain any useful bags. In other words,
all the non-empty bags in that cage are labeled as zombies.

When Procedure $Process_j$ creates an attached bag $v$, it labels
$\gamma-1$
of its immediate ancestors in $\cF$ as zombies. 
We remark, however, that Procedure $Process_j$ does not label $v$ itself as a zombie.
Moreover, for $v$ to become an attached bag, it must be unlabeled
at the beginning of the $j$-level processing.
(Recall that the only possible labels are ``incubator'' and ``zombie''. On the other hand, a bag labeled
as an incubator or a zombie is safe, and therefore will not be attached.)
Hence an attached bag
$v$ is \emph{never} labeled by the algorithm.
Thus for any zombie, there is (at least one) path in $\cF$
of hop-distance at most $\gamma-1$ leading down to an attached bag.
(It will be shown in Lemma \ref{tat3} that there exists exactly one such path.)

\begin{lemma} \label{tat}
Fix an arbitrary index $j \in [\ell]$, and let $v$ be a non-empty $j$-level bag.
Then:
\\(1) If $v$ is not labeled as a zombie, then there is a path $\Upsilon_v$ of non-empty bags
which are not labeled as zombies, leading down from
$v$ to some 1-level bag in $\cF$.
\\(2) $v$ cannot be labeled as both a zombie and an incubator.
\end{lemma}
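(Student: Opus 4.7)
I plan to argue by induction on the level $j$. The base case $j=1$ is trivial: the one-bag path $\Upsilon_v=(v)$ already suffices. For the inductive step at $j\ge 2$, the heart of the argument is the sub-claim that \emph{no surviving child of a non-zombie bag can itself be a zombie}. Indeed, if $c\in\cS(v)$ were a zombie, then by definition $c$ would be the $k$-th $\cF$-ancestor of some attached bag $x_1$ with $k\in[1,\gamma-1]$. Then $v=\pi(c)$ would be the $(k+1)$-th $\cF$-ancestor of the same $x_1$, and the only way for $v$ to escape being labeled zombie by this same attachment is $k+1\ge\gamma$, forcing $k=\gamma-1$. But then $c=x_1^{(\gamma-1)}$ is the disappearing zombie, which by construction abandons its original parent $v$ to join the actual adopter as a step-child, contradicting $c\in\cS(v)$. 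Combined with $B(v)=\bigcup_{z\in\cS(v)}B(z)\ne\emptyset$, which guarantees some surviving child is non-empty, this yields a non-empty non-zombie surviving child to which the inductive hypothesis applies; prepending $v$ produces the required path $\Upsilon_v$.

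\textbf{Part (2), main cases.} I plan to proceed by contradiction. Suppose the bag $v$, at some level $L$, is labeled zombie via an attachment $A_1$ with attached bag $x_1\in\cF_{i_1}$ (so $v=x_1^{(k)}$ with $k=L-i_1\in[1,\gamma-1]$) and also labeled incubator via an attachment $A_2$ with initiator $u_2\in\cF_{i_2}$ (so $v=u_2^{(k')}$ with $k'=L-i_2\in[1,\gamma-1]$). The central tool is the loneliness of the risky attached bag $x_1$: at the time level $i_1$ is processed, only $x_1$ itself is useful in its cage $\cC(x_1)$. The argument splits on the relative order of $i_1$ and $i_2$. If $i_1=i_2$, then $v$ is a common $\cF$-ancestor of $x_1$ and $u_2$ at distance strictly less than $\gamma$, so $x_1^{(\gamma)}=u_2^{(\gamma)}$; this places the useful star-center $u_2\in\hat Q_{i_2}$ inside $\cC(x_1)$, violating loneliness. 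If $i_2<i_1$, then $A_2$ has already been processed, so $u_2^{(i_1-i_2)}$ at level $i_1$ is labeled incubator; since $v$ is a strict descendant of $x_1^{(\gamma)}$ and also a strict ancestor of this bag, a short cage calculation places this (useful) incubator inside $\cC(x_1)$, again violating loneliness. If $i_1<i_2$ and $u_2$ happens to lie on the $\cF$-ancestor line of $x_1$, then $u_2=x_1^{(i_2-i_1)}$ with $i_2-i_1\in[1,\gamma-2]$, making $u_2$ itself a zombie labeled by $A_1$, which contradicts $u_2\in\hat Q_{i_2}$.

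\textbf{Part (2), the hard sub-case.} The one remaining configuration is $i_1<i_2$ with $u_2$ and $x_1$ lying in different $\cF$-subtrees of $v$. Here $A_1$ \emph{depletes} $u_2$'s $\cF$-subtree at level $i_1$ in the sense that every bag there lies in $\cC(x_1)$ and so, by loneliness of $x_1$, is useless at level-$i_1$ processing; however, $u_2$ could a priori still become useful by level $i_2$ via step-children adopted from \emph{outside} its $\cF$-subtree. Ruling this out is the main obstacle. My plan is to track the adoption process carefully and invoke the forward-referenced Corollary \ref{tat4} (guaranteeing $\pi(v')\ne u'$ for every attachment), arguing that any useful point funneled into $u_2$ from outside would have to be routed through an actual adopter whose subtree meets $\cC(x_1)$, producing a useful bag inside $\cC(x_1)$ at an intermediate processing time and once more violating loneliness. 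This bookkeeping of which bags are useful at which level is the technical crux of the proof.
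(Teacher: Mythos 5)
Your Part (1) is essentially correct and close in spirit to the paper's argument. The sub-claim that a surviving child of a non-zombie bag cannot be a zombie reuses exactly the step in the paper's proof that deduces ``since $v$ is not a zombie, all its zombie children are disappearing zombies.'' You then apply the (asserted) equivalence $Q(v)=\emptyset \Leftrightarrow B(v)=\emptyset$ to extract a non-empty surviving child, whereas the paper instead derives $Q(v)=\emptyset$ by contradiction (arguing $\cS(v)=\cJ(v)=\emptyset$, using that disappearing zombies are not incubators by the \emph{second} assertion of the induction hypothesis). Both routes work, but note that the step ``disappearing zombies abandon $v$'' relies on the actual adopter being different from $\pi(v')$; the paper flags this as Corollary~\ref{tat4}, proven later. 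To avoid a forward dependency one needs either an auxiliary loneliness argument, or to observe that Corollary~\ref{tat4} at level $j-1$ ultimately only invokes Lemma~\ref{tat} at levels $\le j-1$, which strong induction permits.

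Your Part (2), however, has a genuine gap, and also diverges sharply from the paper in a way that is not advantageous. You attempt a case analysis on the relative timing of the zombie-assigning attachment $A_1$ and the incubator-assigning attachment $A_2$, and you explicitly concede that the $i_1<i_2$ ``different subtrees'' configuration remains unresolved --- the ``bookkeeping of which bags are useful at which level'' is named as an obstacle, not surmounted. Worse, the plan for that sub-case is to invoke Corollary~\ref{tat4}, which sits downstream of Lemma~\ref{tat3} and Claim~\ref{tate1}, both of which cite Lemma~\ref{tat}; without an explicit level-indexed accounting this is circular. The paper avoids all of this. Its proof of Part (2) is a direct corollary of Part (1) at level $j-1$: supposing $v$ is a zombie via a $j'$-level attached bag $v'$, and that $v$ has a non-empty non-zombie child $z$, Part (1) yields a path $\Upsilon_z$ of non-empty non-zombie bags down to level 1, whose $j'$-level element lies in $\cC(v')$ and is useful --- contradicting the loneliness of $v'$ (modulo the boundary case $z=v'$ with $j-j'=1$, which is ruled out because an attached bag is risky and thus neither an incubator nor an initiator). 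From there, since all non-empty children of $v$ are then zombies (and, by the induction hypothesis, not incubators) and no child can be an initiator, $v$ cannot be an incubator. You should adopt this Part-(1)-leverages-Part-(2) simultaneous induction rather than the timing case split: it eliminates the hard sub-case entirely and removes the dependency on Corollary~\ref{tat4}.
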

{\bf Remark:} The second assertion of this lemma implies that the
distinction between useful and useless bags is well-defined.
\begin{proof}
The proof of both assertions of the lemma is by induction on $j$. The basis $j=1$ is trivial.
\\\emph{Induction Step: Assume the correctness for all smaller values of $j, j \ge 2$, and prove it for $j$.}

We start with proving the first assertion.
Let $v$ be a non-empty $j$-level bag which is not labeled as a zombie.

Suppose for contradiction that all
non-empty children of $v$ in $\cF$ are labeled as zombies.
Since $v$ is not labeled as a zombie, it follows that all its zombie
children are, in fact, disappearing zombies. By construction,
these disappearing zombies become step-children of other $j$-level bags $u, u \ne v$.
Moreover, by the second assertion of the induction hypothesis, none of these disappearing
zombies can be labeled as an incubator. Hence, by construction, $v$ cannot be a step-parent of any $(j-1)$-level bag.
It follows that $v$ is empty, a contradiction.

Therefore, there must be a non-empty child $z$ of $v$ that is not labeled as a zombie.
By the first assertion of the induction hypothesis, there is a path $\Upsilon_z = (z = v_1,\ldots,v_k),k \ge 1$, of non-empty bags which
are not labeled as zombies, leading down from $z$ to some 1-level bag $v_k$ in $\cF$.
The path $\Upsilon_v = (v=v_0,z=v_1,\ldots,v_k) = (v) \circ \Upsilon_z$ obtained by concatenating the singleton path $(v)$ with $\Upsilon_z$ satisfies the conditions of the
first assertion of the lemma.

Next, we prove the second assertion.
Suppose that $v$ is labeled as a zombie, and
consider any path that leads down to a $j'$-level attached bag $v'$.
Note that $v'$ is a lonely bag, hence
all the non-empty $j'$-level bags in the cage $\cC(v')$ are useless (i.e., they are all labeled as zombies).

Suppose for contradiction that $v$ has a non-empty child $z$ that is not labeled as a zombie.
Consider the path $\Upsilon_z$ that is guaranteed by the first assertion of the induction hypothesis. This path contains a $j'$-level
non-empty bag $z'$ which is not labeled as a zombie. However, $z'$ belongs to $\cC(v')$, yielding
a contradiction. See Figure \ref{fig6} for an illustration.
\begin{figure*}[htp]
\begin{center}
\begin{minipage}{\textwidth} 
\begin{center}
\setlength{\epsfxsize}{2.7in} \epsfbox{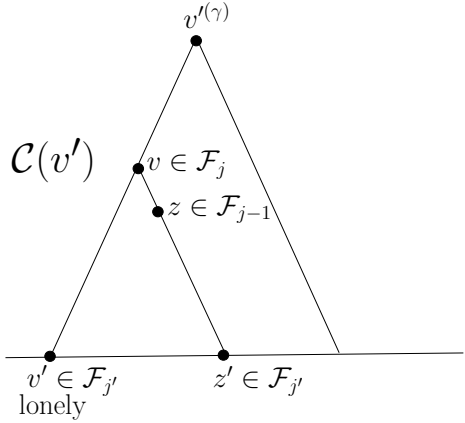}
\end{center}
\end{minipage}
\caption[]{ \label{fig6} \sf $z$ and $z'$ are not zombies, and $v',z' \in \cC(v')$. Hence $v'$ is not
lonely, a contradiction.}
\end{center}
\end{figure*}
Therefore, all the non-empty children of $v$ must be labeled as zombies, and by the induction hypothesis,
they cannot be labeled as incubators. By construction (by the label assignment rules), no child $u$
of $v$ may become the initiator of any attachment (since an attachment initiator cannot be labeled as a zombie).
Thus, $v$ cannot be labeled as an incubator, and we are done.
\QED
\end{proof}

Consider a stagnating bag $v$. If $\cS(v) = \emptyset$, then all its children are disappearing zombies, and thus, by Lemma \ref{tat},
none of them is an incubator. Hence $v$ is not an actual adopter, i.e., $\cJ(v) = \emptyset$ as well. This contradicts the assumption
that $v$ is a stagnating bag (i.e., $|\chi(v)| = 1$). Therefore, if $v$ is a stagnating bag then $|\cS(v)| = 1, \cJ(v) = \emptyset$.

We use the next claim to prove Lemma \ref{tat3}.
\begin{claim} \label{tate1}
Fix an arbitrary index $j \in [\gamma,\ell]$, and let $v$ be a non-empty $j$-level bag.
Then there is a useful $(j-\gamma+1)$-level descendant $u$ for $v$ in $\cF$.
\end{claim}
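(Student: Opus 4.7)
The plan is to split on whether $v$ is itself labeled as a zombie; in either case, I would reduce to an application of Lemma \ref{tat}(1).

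If $v$ is not a zombie, then, since it is non-empty, Lemma \ref{tat}(1) applies directly and produces a path $\Upsilon_v$ in $\cF$ of non-empty non-zombie bags going from $v$ down to a $1$-level bag. Because $j \ge \gamma$, this path has length at least $\gamma - 1$ and in particular passes through level $j-\gamma+1$; the bag on $\Upsilon_v$ at that level is non-empty and not a zombie, hence useful, and by construction a descendant of $v$ in $\cF$.

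If $v$ is labeled as a zombie, then I would first locate the attached bag that caused $v$ to receive this label. Labels are assigned only during attachments, and the only zombies created by a single attachment are the $\gamma-1$ immediate ancestors in $\cF$ of the attached bag. Hence there must exist an attached bag $w$ at some level $j-k$, with $k \in [1,\gamma-1]$, such that $v$ is the $k$-th $\cF$-ancestor of $w$. Two properties of $w$ will be crucial: $w$ is non-empty (attached bags participate in an attachment, so their $Q$-sets are non-empty), and $w$ is never labeled by the algorithm (as observed in the paragraph preceding Lemma \ref{tat}, attached bags themselves are never marked as zombies or incubators). So $w$ is a useful descendant of $v$ sitting at level $j-k \ge j-\gamma+1$. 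If $k = \gamma-1$ then $w$ is already at level $j-\gamma+1$ and can be taken as the desired $u$; otherwise $k < \gamma-1$, and I would apply Lemma \ref{tat}(1) to $w$ to obtain a path of non-empty non-zombie bags from $w$ down to a $1$-level bag, and take its bag at level $j-\gamma+1$ as $u$.

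The only delicate point I expect is the second case: one has to extract from the attachment machinery the fact that an attached bag is itself useful, which hinges on reading the labeling rule precisely (exactly $\gamma-1$ ancestors are marked, not the attached bag) and on the observation that an attachment requires both partners to be non-empty useful bags. Once this anchor is in place, the rest of the argument is just two invocations of Lemma \ref{tat}(1), so no further calculation is needed.
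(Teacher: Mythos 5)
Your proposal is correct and follows essentially the same route as the paper: split on whether $v$ is a zombie, anchor at the attached bag in the zombie case (using that attached bags are non-empty and never labeled, hence useful), and then descend via Lemma~\ref{tat}(1) to the desired level. The paper's version folds your two sub-cases ($k=\gamma-1$ versus $k<\gamma-1$) into a single application of Lemma~\ref{tat} to the attached bag, but the substance is identical.
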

\begin{proof}
First, we argue that $v$ has a useful $j'$-level descendant $v'$ in $\cF$,
for some index $j-\gamma+1 \le j' \le j$.
If $v$ is useful, then we can simply take $v' = v$, $j' = j$.
We henceforth assume that $v$ is a zombie, and consider the path leading down
to an attached $j'$-level bag $v'$. As the hop-distance of this path is at most $\gamma -1$,
we have $j' \ge j-\gamma+1$. By construction, to become an attached bag,
$v'$ must be useful, as required.

Consequently, Lemma \ref{tat} implies that there is a path $\Upsilon_{v'}$ of useful bags, leading down from $v'$
to some 1-level bag in $\cF$. The claim follows.
\QED
\end{proof}

In the next lemma we show that a zombie cannot have ``brothers'' or ``step-brothers''.	
\begin{lemma} \label{tat3}
Fix an arbitrary index $j \in [\gamma,\ell-1]$, and let $v$ be a non-empty $(j+1)$-level bag.
If $v$ has a zombie child $z$, then all its other children are empty and
it has no step-children, i.e.,  $\cS(v) = \{z\}, \cJ(v) = \emptyset$.
\end{lemma}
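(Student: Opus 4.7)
The plan is to argue by contradiction in two stages, exploiting that every zombie originates from the attachment of a lonely attached bag, together with Lemma~\ref{tat} and Claim~\ref{tate1}. Since $z\in\cS(v)$, $z$ cannot be a disappearing zombie, so $z$ is a \emph{middle} zombie from the attachment of some attached bag $v_{\mathrm{att}}$ at level $i=j-k$ with $k\in[1,\gamma-2]$; consequently $v=v_{\mathrm{att}}^{(k+1)}$ is itself a middle zombie lying strictly below the cage-ancestor $v_{\mathrm{att}}^{(\gamma)}$ in $\cF$. Hence every $i$-level $\cF$-descendant of $v$ belongs to $\cC(v_{\mathrm{att}})$, while $v_{\mathrm{att}}$ is, and remains, its only useful member (loneliness at the moment of attachment combines with the monotonicity of zombie-labels and the fact that a level's emptiness is frozen at the end of the preceding level's processing).

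For $\cS(v)=\{z\}$ I suppose another non-empty surviving $\cF$-child $z'$ of $v$ and case-split on $z'$. If $z'$ is useful, Lemma~\ref{tat}(1) supplies a useful $i$-level descendant inside $z'$'s subtree, which lies in $\cC(v_{\mathrm{att}})$ but differs from $v_{\mathrm{att}}$ (it sits under $z'$, whereas $v_{\mathrm{att}}$ sits under $z$), contradicting loneliness. If $z'$ is itself a zombie, let its chain terminate at an attached $v''$ at level $i'=j-k'$, $k'\in[1,\gamma-2]$. When $k'=k$, the cage-ancestors of $v_{\mathrm{att}}$ and $v''$ coincide (both equal the unique $\cF$-ancestor of $v$ at level $j+\gamma-k$), so the two bags share a cage and cannot both be lonely during the single invocation of $Process_{i}$. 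When $k'\neq k$, say $i'<i$, I apply Lemma~\ref{tat}(1) to the useful $v_{\mathrm{att}}$ at time $Process_{i}$ to obtain a non-empty non-zombie bag $w$ at level $i'$ inside $v_{\mathrm{att}}$'s $\cF$-subtree; monotonicity then gives that $w$ was already useful at the earlier time $Process_{i'}$, so $w\in\cC(v'')\setminus\{v''\}$ refutes the loneliness of $v''$ at its own attachment. The case $i'>i$ is symmetric, with the roles of $v_{\mathrm{att}}$ and $v''$ swapped.

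For $\cJ(v)=\emptyset$ I suppose a disappearing zombie joins $v$, so that $v=u^{(\gamma)}$ for the initiator $u$ at level $j+1-\gamma$ of some attachment $\beta$, and then $u^{(\gamma-1)}$ is a non-empty $\cF$-child of $v$. By the first part of the lemma, $u^{(\gamma-1)}$ is either $z$ or a non-empty disappearing zombie. In the former case the $i$-level ancestor $\tilde v=u^{(\gamma-k-1)}$ of $u$ lies in $z$'s subtree and is labeled incubator by $\beta$; if $\tilde v=v_{\mathrm{att}}$ this contradicts the fact that $v_{\mathrm{att}}$ is later attached as a risky (hence non-incubator) bag, and otherwise $\tilde v$ is a useful element of $\cC(v_{\mathrm{att}})$ distinct from $v_{\mathrm{att}}$, again contradicting loneliness. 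In the latter case $u^{(\gamma-1)}$ is simultaneously the top (disappearing) zombie of some other attachment and an immediate ancestor of $u$ in $\beta$, hence an incubator, directly contradicting Lemma~\ref{tat}(2). The delicate point throughout is to apply the loneliness arguments at the right moment in time; what makes this go through is that zombie-labels and per-level emptiness never change once set, so a bag useful at any later time was already useful at the attachment time we need to contradict.
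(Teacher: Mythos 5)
Your proof takes a genuinely different route from the paper's: the paper argues globally that no attachment can occur anywhere in $v$'s subtree at any level $i\in[j-\gamma+1,j]$ (using Claim~\ref{tate1} to produce two useful level-$(j-\gamma+1)$ descendants $z',u'$ whose ancestors always share a cage), so that $z$ can simply never acquire a zombie label; you instead pin down the one attachment $\cA(\cdot,v_{\mathrm{att}})$ responsible for $z$'s label and defeat $v_{\mathrm{att}}$'s loneliness directly. Both hinge on the same cage/loneliness mechanism, but your version needs the $k'=k$, $k'<k$, $k'>k$ case split because the competing attached bag can sit at a different level than $v_{\mathrm{att}}$.

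There is, however, a gap at the very first step. You write ``Since $z\in\cS(v)$,'' but $z\in\cS(v)$ is part of the conclusion (it is what $\cS(v)=\{z\}$ asserts), not the hypothesis; the hypothesis only says that $z$ is a zombie $\cF$-child of the non-empty bag $v$. As a result you never treat the case $k=\gamma-1$, where $z$ is a \emph{disappearing} zombie, and in Stage~1 you only suppose a competing non-empty \emph{surviving} child $z'$, which forces $k'\in[1,\gamma-2]$, so a non-empty disappearing-zombie $\cF$-child of $v$ distinct from $z$ is never excluded. This matters: the clause ``all its other children are empty'' in the lemma is exactly what such a child would violate, and Corollary~\ref{tat4} applies Lemma~\ref{tat3} to the parent of a \emph{disappearing} zombie, precisely the case you drop. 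The missing cases do succumb to the same tools --- when $k$ (or $k'$) equals $\gamma-1$ the cage-ancestor of $v_{\mathrm{att}}$ (resp.\ $v''$) is the $(j+1)$-level ancestor $v$ itself, the cage consists of all level-$(j-\gamma+1)$ descendants of $v$, and the loneliness contradiction still goes through --- but the proof as written does not carry them out. (A lesser point: once Stage~1 is completed in full, Stage~2 collapses, since $u^{(\gamma-1)}$ is a non-empty incubator child of $v$ and hence, by Lemma~\ref{tat}(2), not a zombie of any kind, so it is directly a non-empty non-zombie $\cF$-child of $v$ distinct from $z$.)
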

\begin{proof}
Suppose for contradiction that $v$ has a non-empty child $u$
in addition to its zombie child $z$.
Both $z$ and $u$ are $j$-level bags.
Set $j' = j-\gamma+1$. Let $z'$ (respectively, $u'$) be a useful $j'$-level descendant of
$z$ (resp., $u$) in $\cF$ that is guaranteed by Claim \ref{tate1}. Observe that the cage-ancestor of $z'$
and $u'$ is $v$, and so
$z'$ and $u'$ belong to the same cage $\cC(z') = \cC(u')$.
It follows that $z'$ and $u'$ are not lonely, and so they are safe and do not
become attached bags during the $j'$-level processing.
More generally, note that the least common ancestor of $z'$ and $u'$ in $\cF$ is $v$. Hence, for each index $i = j',j'+1,\ldots,j$,
the $i$-level ancestors of $z'$ and $u'$ in $\cF$ belong to the same cage, and so they are safe and
do not become attached bags.
Any other useful $i$-level descendant of $v$ is not lonely, and thus it is safe as well,
for each index $i = j',j'+1,\ldots,j$,
hence it does not become an attached bag.
It follows that the $j$-level ancestor $z$ of $z'$ in $\cF$ will not become a zombie, a contradiction.

For the bag $v$ to have a step-child, at least one of the children of $v$ in $\cF$ must be an incubator.
However, we have showed that all children of $v$ besides the zombie $z$ are empty.
Hence $\cS(v) = \{z\}, \cJ(v) = \emptyset$.
\QED
\end{proof}

Lemma \ref{tat3} implies the following corollary.
\begin{corollary} \label{tat4}
Fix an arbitrary index $j \in [\ell-1]$, and let $v$ be a (non-empty) $j$-level bag
which is a disappearing zombie. (Notice that $j \ge \gamma$.) Then the parent $\pi(v)$ of $v$ in $\cF$ is empty, and
therefore is different than its step-parent $v'$ (in other words, the bag that adopts $v$), i.e., $\pi(v) \ne v'$.
\end{corollary}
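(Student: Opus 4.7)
The plan is to apply Lemma \ref{tat3} to the parent $\pi(v)$ and derive a contradiction under the assumption that $\pi(v)$ is non-empty. First I would observe that since $v$ is a disappearing zombie at level $j$, the labeling rules described in Section \ref{thealg} force $j \ge \gamma$, so $\pi(v)$ sits at level $j+1 \in [\gamma+1,\ell]$ and Lemma \ref{tat3} is applicable with $\pi(v)$ in the role of the lemma's non-empty $(j+1)$-level bag and $v$ in the role of its zombie child.

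Assuming for contradiction that $\pi(v)$ is non-empty, Lemma \ref{tat3} yields two things: every other $\cF$-child of $\pi(v)$ is empty, and $\cJ(\pi(v)) = \emptyset$. The main subtlety is how to reconcile these with the ``$\cS(\pi(v)) = \{v\}$'' portion of that lemma, which tacitly assumes that the zombie child actually survives as a child in $\hat\cF$. In our setting $v$ is \emph{disappearing}: by definition it disintegrates from $\pi(v)$ and joins a different $(j+1)$-level bag $v' \ne \pi(v)$, so by the definition of $\cS(\cdot)$ we have $v \notin \cS(\pi(v))$. Combined with the fact that every other $\cF$-child of $\pi(v)$ is empty (and empty bags do not lie in $\cS$), this forces $\cS(\pi(v)) = \emptyset$.

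Collecting, $\chi(\pi(v)) = \cS(\pi(v)) \cup \cJ(\pi(v)) = \emptyset$, so $Q(\pi(v)) = \bigcup_{z \in \chi(\pi(v))} Q(z) = \emptyset$, i.e., $\pi(v)$ is empty --- contradicting our assumption. Hence $\pi(v)$ must have been empty to begin with. Since $v$ is non-empty and joins $v'$, the step-parent $v'$ inherits the non-empty point set $Q(v)$ and is therefore non-empty itself, and so $\pi(v) \ne v'$ as required.

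The only real obstacle is recognising that the ``$\cS(v)=\{z\}$'' clause in the statement of Lemma \ref{tat3} implicitly presumes the zombie child is \emph{non-disappearing}, and then exploiting the disappearing hypothesis together with the ``other children are empty / no step-children'' part of the same lemma to collapse $\cS(\pi(v))$ to the empty set. Once this is in hand, the rest is routine bookkeeping with the definitions of $\chi(\cdot)$ and $Q(\cdot)$ given in Section \ref{sec21}.
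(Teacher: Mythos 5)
Your argument is correct and matches the paper's intent; the paper simply states that Lemma \ref{tat3} implies the corollary, and the natural reading is exactly your proof by contradiction: apply Lemma \ref{tat3} to $\pi(v)$ under the assumption it is non-empty, observe that a disappearing zombie is excluded from $\cS(\pi(v))$ by definition, and conclude $\chi(\pi(v))=\emptyset$ and hence $Q(\pi(v))=\emptyset$. Your remark that the ``$\cS(v)=\{z\}$'' clause of Lemma \ref{tat3} tacitly presumes the zombie child survives is a fair reading of what that lemma's proof actually establishes, and it is precisely what makes the contradiction go through.
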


Let $w$ be a bag, and $w'$ be an ancestor of $w$ in $\cF$. We say
that $w$ and $w'$ are \emph{identical bags} if $Q(w) = Q(w')$.
\begin{lemma} \label{newl}
Let $w \in \cF_j$ be a disappearing zombie. (Hence $j \ge \gamma$.)
Then there exists a unique useful descendant $\tilde w \in \cF_{j-(\gamma-1)}$
of $w$ in $\cF$, and it is an attached bag.
The disappearing zombie $w = \tilde w^{(\gamma-1)}$
is identical to the attached bag $\tilde w = \tilde w^{(0)}$.
More generally,
each of the $\gamma-1$ zombie bags ${\tilde w}^{(i)} \in \cF_{j-(\gamma-1)+i}$
along the path between
$\tilde w^{(1)}$ and
$w  = \tilde w^{(\gamma-1)}$, $i \in [\gamma-1]$,
is identical to $\tilde w = \tilde w^{(0)}$.
(All these $\gamma$ bags are identical.)
\end{lemma}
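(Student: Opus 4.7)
My plan is to establish the three claims of the lemma in turn: existence of the attached descendant $\tilde w$, its uniqueness among useful $(j-\gamma+1)$-level descendants of $w$, and the chain of equalities $Q(\tilde w^{(0)}) = Q(\tilde w^{(1)}) = \cdots = Q(\tilde w^{(\gamma-1)}) = Q(w)$.

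Existence is immediate from the labeling rules. Since $w$ is a disappearing zombie at level $j$, by construction (Part II of Procedure $Process_{j-\gamma+1}$) some bag at level $j-\gamma+1$ was attached there and had $w$ as its $(\gamma-1)$-th ancestor in $\cF$; I take $\tilde w = \tilde w^{(0)}$ to be this bag. By Corollary \ref{tat4} the parent $\tilde w^{(\gamma)} = \pi(w)$ is empty. For uniqueness, I would use that any attached bag must be risky and hence lonely, so $\tilde w$ is the only useful bag in its cage $\cC(\tilde w)$, which consists of all $(j-\gamma+1)$-level descendants of $\tilde w^{(\gamma)} = \pi(w)$; any other useful $(j-\gamma+1)$-level descendant of $w$ is automatically a descendant of $\pi(w)$ and therefore lies in $\cC(\tilde w)$, contradicting loneliness.

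For the identity chain I would prove $Q(\tilde w^{(i)}) = Q(\tilde w^{(i-1)})$ for each $i \in [\gamma-1]$ and conclude by transitivity. When $i \ge 2$ the argument is routine: $\tilde w^{(i-1)}$ is a non-disappearing zombie and hence a surviving child of the (non-empty) parent $\tilde w^{(i)}$, so Lemma \ref{tat3} applies to $\tilde w^{(i)}$ and forces $\cS(\tilde w^{(i)}) = \{\tilde w^{(i-1)}\}$ and $\cJ(\tilde w^{(i)}) = \emptyset$, yielding the desired equality of $Q$-sets.

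The main obstacle is the base step $i=1$, since Lemma \ref{tat3} does not apply when the child $\tilde w^{(0)}$ is an attached bag rather than a zombie. Here I plan to mimic the argument of Lemma \ref{tat3} but substitute loneliness of $\tilde w^{(0)}$ itself for the appeal to Claim \ref{tate1}. To show $\cS(\tilde w^{(1)}) = \{\tilde w^{(0)}\}$, I would suppose $\tilde w^{(1)}$ has another non-empty $\cF$-child $u'$ at level $j-\gamma+1$: if $u'$ is useful then $u' \in \cC(\tilde w^{(0)})$ (same cage-ancestor $\pi(w)$), contradicting loneliness of $\tilde w^{(0)}$; and if $u'$ is a non-empty zombie, it must have an attached descendant $\hat v$ at a strictly lower level, and Lemma \ref{tat} applied to $\tilde w^{(0)}$ produces a useful descendant of $\tilde w^{(0)}$ at $\hat v$'s level which lies in $\cC(\hat v)$ (since the cage-ancestor of $\hat v$ is a proper $\cF$-ancestor of $\tilde w^{(0)}$), contradicting loneliness of $\hat v$. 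To show $\cJ(\tilde w^{(1)}) = \emptyset$, I would assume $\tilde w^{(1)}$ is the actual adopter of some attachment at level $j-2\gamma+2$ with center $u$; then $u^{(\gamma-1)}$ is a non-empty incubator (hence useful) at level $j-\gamma+1$ descending from $\tilde w^{(1)}$, and it must be distinct from $\tilde w^{(0)}$ (otherwise $\tilde w^{(0)}$ would already carry the incubator label and could not be attached), supplying another useful bag in $\cC(\tilde w^{(0)})$ and again violating loneliness.
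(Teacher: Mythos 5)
Your proposal follows the paper's own structure (existence and uniqueness of $\tilde w$ via the loneliness of the attached bag, then an inductive chain of identities $Q(\tilde w^{(i)}) = Q(\tilde w^{(i-1)})$ driven by Lemma~\ref{tat3}), but you have correctly spotted a genuine imprecision in the paper's proof: the paper's induction step reads ``since the bag $\tilde w^{(i-1)}$ is a zombie child of $\tilde w^{(i)}$, Lemma~\ref{tat3} yields \ldots'', and this justification fails at $i=1$ because $\tilde w^{(0)}$ is the attached bag, which by the paper's own rules is never labeled as a zombie. Your separate handling of $i=1$ is therefore warranted, and both halves of your argument ($\cS(\tilde w^{(1)}) = \{\tilde w^{(0)}\}$ and $\cJ(\tilde w^{(1)}) = \emptyset$) check out: the level arithmetic places the cage-ancestor of any would-be attached bag $\hat v$ at or above $\tilde w^{(1)}$'s level, so Lemma~\ref{tat}(1) applied to the non-zombie $\tilde w^{(0)}$ indeed manufactures a useful bag in $\cC(\hat v)$; and the incubator $u^{(\gamma-1)}$ must differ from $\tilde w^{(0)}$ because an incubator is safe and cannot be attached.

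That said, your $i=1$ argument is more elaborate than it needs to be. The paper has already established, right before the induction, that $\tilde w^{(0)}$ is lonely in $\cC(\tilde w^{(0)})$, hence every $(j-\gamma+1)$-level descendant of $\pi(w)$ other than $\tilde w^{(0)}$ is useless. From this, a zombie sibling $u'$ of $\tilde w^{(0)}$ is ruled out by applying Lemma~\ref{tat3} to $\tilde w^{(1)}$ with $z=u'$ and getting the contradiction that $\tilde w^{(0)}$ would then be empty; a step-child is ruled out because the incubator child that a step-child would force to exist is useful and $(j-\gamma+1)$-level, so it must coincide with $\tilde w^{(0)}$, which is impossible since incubators are safe. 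This shortcut avoids invoking Lemma~\ref{tat} and the auxiliary attached bag $\hat v$, and reuses the paper's loneliness observation directly. Either way, your proof is correct and fills the paper's gap.
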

\begin{proof}
Since $w$ is a disappearing zombie, there exists an attached bag
$\tilde w \in \cF_{j-(\gamma-1)}$, such that $w = {\tilde w}^{(\gamma-1)}$,
i.e., $\tilde w$ is a $(j-(\gamma-1))$-level descendant of $w$ in $\cF$.
For the bag $\tilde w$ to become an attached bag, it must be risky,
and therefore lonely in its cage $C(\tilde w)$. Hence all other
$(j-(\gamma-1))$-level descendants of ${\tilde w}^{(\gamma)} = \pi(w)$
(and therefore, of $w$) are useless. 

Next, we prove by induction on the index $i, i \in [0,\gamma-1]$,
that each bag $\tilde w^{(i)}$ is identical to the attached bag $\tilde w = \tilde w^{(0)}$.
The basis $i=0$ is obvious, as $\tilde w$ is identical to itself.
\\\emph{Induction Step: Assume the correctness of the statement for all smaller values of
$i, i\ge 1$, and prove it for $i$.}
By  the induction hypothesis, the bag $\tilde w^{(i-1)}$ is identical to $\tilde w$,
i.e., $Q(\tilde w^{(i-1)}) = Q(\tilde w)$.
Also, since  the bag $\tilde w^{(i-1)}$ is a zombie child of $\tilde w^{(i)}$, Lemma \ref{tat3} yields
$\cS(\tilde w^{(i)}) = \{\tilde w^{(i-1)}\}, \cJ(\tilde w^{(i)}) = \emptyset$.
By construction,
$$Q(\tilde w^{(i)}) ~=~ \bigcup_{z \in (\cS(\tilde w^{(i)}) \cup \cJ(\tilde w^{(i)}))} Q(z) ~=~ Q(\tilde w^{(i-1)}) ~=~ Q(\tilde w).$$
We conclude that the bags $\tilde w^{(i)}$
and $\tilde w$ are identical. \QED
\end{proof}
For a disappearing zombie $w \in \cF_j,j \ge \gamma$, and an
index $i$, such that $j-(\gamma-1) \le i \le j$, we refer to the $i$-level descendant
of $w$ (which is, by Lemma \ref{newl}, identical to $w$) as the
\emph{$i$-level copy} of $w$. We also call it the \emph{$i$-level
copy} of $\tilde w$, where $\tilde w \in \cF_{j-(\gamma-1)}$  is the
unique non-empty $(j-(\gamma-1))$-level descendant of $w$.

\subsection{Number of Edges, Weight, and Running Time} \label{app:rig}
In this section we analyze the number of edges, weight and running time,
of the spanner $\tilde G= (Q,\tilde E)$ computed by Algorithm $LightSp$.

\subsubsection{Auxiliary Statements}
We start with providing a few auxiliary lemmas.
They will be used in the analysis of the number of edges,
weight and running time of our construction. 
(See the beginning of Section \ref{section2} for the definitions of $n_j,\tau_j,c,\ell,\rho$ and $L$.)
\begin{observation} \label{basic}
Let $f$ be a monotone non-decreasing convex function that vanishes at zero,
and let $n'_1,n'_2,\ldots,n'_\ell$ be a sequence of positive numbers
that satisfy that $n'_j \le \min\{n,n_j\}$, for each index $j \in [\ell]$.
Then for each index
$j \in [\ell]$, $f(n'_j) \le \frac{c}{\rho^{j-1}}\cdot f(n)$.
Moreover,
for each index $1 \le j < \log_\rho c +1$, $f(n'_j) \le  f(n) < \frac{c}{\rho^{j-1}}\cdot f(n)$.
\end{observation}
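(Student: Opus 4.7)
The plan is to reduce both bounds to a single convexity trick: since $f$ is convex, non-decreasing, and vanishes at zero, for any scalar $\alpha \in [0,1]$ and any $x \ge 0$ one has $f(\alpha x) = f(\alpha x + (1-\alpha)\cdot 0) \le \alpha f(x) + (1-\alpha)f(0) = \alpha f(x)$. I would state and prove this one-line auxiliary inequality first, as it is the only nontrivial property of $f$ actually used.

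Next I would split on whether $c/\rho^{j-1} \le 1$ or $c/\rho^{j-1} > 1$, i.e.\ on whether $j \ge \log_\rho c + 1$ or $j < \log_\rho c + 1$. In the first regime we have $n_j = c n/\rho^{j-1} \le n$, so $n'_j \le n_j$ and monotonicity followed by the convexity inequality (with $\alpha = c/\rho^{j-1}$ and $x = n$) gives
\[
f(n'_j) \le f(n_j) = f\!\left(\tfrac{c}{\rho^{j-1}} \cdot n\right) \le \tfrac{c}{\rho^{j-1}}\, f(n),
\]
which is the first claim in this range. In the second regime, $\min\{n,n_j\} = n$, so $n'_j \le n$ and monotonicity alone gives $f(n'_j) \le f(n)$; since $c/\rho^{j-1} > 1$ and $f(n) \ge 0$ we also have $f(n) \le \frac{c}{\rho^{j-1}} f(n)$ (strict whenever $f(n) > 0$), which yields both the first claim and the ``moreover'' strict inequality in this range.

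There is essentially no obstacle here beyond correctly identifying the sub-linear consequence of convexity at zero and organising the two regimes; once that auxiliary inequality is in place, monotonicity and the definition $n_j = c n/\rho^{j-1}$ finish the argument in each case.
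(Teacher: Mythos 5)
Your proof is correct and takes essentially the same route as the paper's: the same case split on $j \lessgtr \log_\rho c + 1$, with monotonicity handling the small-$j$ regime and the convexity-plus-$f(0)=0$ inequality $f(\alpha x) \le \alpha f(x)$ handling the large-$j$ regime. You merely make explicit the one-line convexity lemma that the paper invokes implicitly with ``the assumptions about $f$ imply,'' and you are slightly more careful in noting that the strict inequality in the ``moreover'' part requires $f(n) > 0$.
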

\begin{proof}
Suppose first that $1 \le j < \log_\rho c +1$; in this case, we have
$\frac{c}{\rho^{j-1}} > 1$,
and so $n < \frac{c \cdot n}{\rho^{j-1}} = n_j$. It follows that $n'_j \le n$,
which yields $f(n'_j) \le f(n) < \frac{c}{\rho^{j-1}} \cdot f(n)$.
We henceforth assume that $\log_\rho c +1 \le j \le \ell$.
\\In this case, we have $\frac{c}{\rho^{j-1}} \le 1$, and so
$n \ge \frac{c \cdot n}{\rho^{j-1}}$. It follows that $n'_j \le n_j = \frac{c \cdot n}{\rho^{j-1}}$.
Also, the assumptions about $f$ imply that $f(\frac{c \cdot n}{\rho^{j-1}}) \le \frac{c}{\rho^{j-1}}\cdot f(n)$.
We conclude that $f(n'_j) \le f(\frac{c \cdot n}{\rho^{j-1}}) \le \frac{c}{\rho^{j-1}}\cdot f(n)$.
\QED
\end{proof}

Recall that $|Q_j| ~\le~ \min\{n,n_j\} =  \min\left\{n,\frac{c\cdot n}{\rho^{j-1}}\right\}$.
Observation \ref{basic} implies the following corollary.
\begin{corollary} \label{basic22}
For any monotone non-decreasing convex function $f$ that vanishes at zero:\\
(1) $\sum_{j=1}^{\ell} f(|Q_j|) = O(f(n) \cdot \log_\rho (t/\eps))$,
and (2) $\sum_{j=1}^{\ell} f(|Q_j|) \cdot \tau_j = O(\frac{f(n)}{n} \cdot \rho \cdot \log_\rho n \cdot t^2/\eps) \cdot L$.
\end{corollary}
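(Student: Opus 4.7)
The plan is to derive both bounds directly from Observation~\ref{basic} by splitting the summation range at the threshold $j^* = \lceil \log_\rho c \rceil + 1$, which is the crossover point between the two estimates $f(|Q_j|) \le f(n)$ (valid for all $j$) and $f(|Q_j|) \le \frac{c}{\rho^{j-1}}\cdot f(n)$ (the nontrivial convexity-based estimate, which is only useful once $\frac{c}{\rho^{j-1}} \le 1$). Since $c = \Theta(t/\eps)$, we have $j^* = \Theta(\log_\rho(t/\eps))$.

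For part (1), I would bound the ``small $j$'' piece $\sum_{j=1}^{j^*-1} f(|Q_j|)$ by $(j^*-1)\cdot f(n) = O(f(n)\cdot \log_\rho(t/\eps))$, using the trivial estimate $f(|Q_j|)\le f(n)$. For the ``large $j$'' piece $\sum_{j=j^*}^{\ell} f(|Q_j|)$, I would apply the second clause of Observation~\ref{basic} to get
\[
\sum_{j=j^*}^{\ell} f(|Q_j|) \;\le\; f(n)\sum_{j=j^*}^{\ell} \frac{c}{\rho^{j-1}} \;=\; O(f(n)),
\]
since $c/\rho^{j^*-1} = O(1)$ and the tail is a geometric series in $1/\rho$ with $\rho \ge 2$. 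Adding the two contributions gives the desired $O(f(n)\cdot \log_\rho(t/\eps))$.

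For part (2), recall $\tau_j = \Theta\!\left(\tfrac{L}{n}\cdot t \cdot \rho^j\right)$ since $1+1/c = \Theta(1)$. Again split at $j^*$. On the small-$j$ range, $f(|Q_j|)\cdot\tau_j = O\!\left(f(n)\cdot \tfrac{L}{n}\cdot t\cdot \rho^j\right)$, and the geometric sum is dominated by its last term at $j = j^*-1$, giving $O\!\left(\tfrac{f(n)}{n}\cdot \rho\cdot c\cdot t\cdot L\right) = O\!\left(\tfrac{f(n)}{n}\cdot \rho\cdot (t^2/\eps)\right)\cdot L$. On the large-$j$ range, combining the two estimates yields the flat bound
\[
f(|Q_j|)\cdot \tau_j \;\le\; \frac{c}{\rho^{j-1}}\cdot f(n)\cdot \Theta\!\left(\tfrac{L}{n}\cdot t\cdot \rho^j\right) \;=\; \Theta\!\left(\tfrac{f(n)}{n}\cdot c\cdot \rho\cdot t\cdot L\right),
\]
independent of $j$. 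Multiplying by the $\ell-j^*+1 = O(\log_\rho n)$ terms in this range and using $c\cdot t = \Theta(t^2/\eps)$ gives $O\!\left(\tfrac{f(n)}{n}\cdot \rho\cdot \log_\rho n\cdot (t^2/\eps)\right)\cdot L$, which dominates the small-$j$ contribution (by a factor of $\log_\rho n$) and matches the claim.

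There is really no obstacle here: the corollary is a routine bookkeeping consequence of Observation~\ref{basic} once one notices that $j^* = \Theta(\log_\rho(t/\eps))$ is the natural cut-off. The only mild subtlety is checking that in part~(2) the large-$j$ contribution dominates the small-$j$ one, which is why the final bound carries an extra factor of $\log_\rho n$ compared with part~(1).
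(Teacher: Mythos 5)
Your proof is correct and follows essentially the same approach as the paper's: both rest on Observation~\ref{basic} with a split of the summation range at $j \approx \log_\rho c + 1$, and your treatment of part~(1) is identical. For part~(2) the paper avoids the split entirely by applying the universal bound $f(|Q_j|) \le \frac{c}{\rho^{j-1}}\cdot f(n)$ to every $j$, so that each term $f(|Q_j|)\cdot\tau_j$ is a flat $O\bigl(\frac{f(n)}{n}\cdot\rho\cdot c\cdot t\bigr)\cdot L$ and the sum over all $\ell = O(\log_\rho n)$ levels gives the claim directly; your two-range argument reaches the same conclusion with a bit of extra bookkeeping that is not needed.
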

\begin{proof}
We start proving the first assertion.
Recall that $c = O(t/{\eps})$. Hence
\begin{eqnarray*}
\sum_{j=1}^{\ell} f(|Q_j|) &=&
\sum_{1 \le j < \log_\rho c +1, j \in \mathbb N} f(|Q_j|)  + \sum_{\log_\rho c +1 \le j \le \ell, j \in \mathbb N} f(|Q_j|)
 \\ &\le& (\log_\rho c +1) \cdot f(n)  + \sum_{\log_\rho c +1 \le j \le \log_\rho n, j \in \mathbb N} \frac{c}{\rho^{j-1}}\cdot f(n)
 \\ &\le& (\log_\rho c +1) \cdot f(n)  + \sum_{j=0}^{\infty} \frac{1}{\rho^{j}}\cdot f(n)
  ~=~ O(f(n) \cdot \log_\rho (t/\eps)).
\end{eqnarray*}

Next, we prove the second assertion.
 For each  $j \in [\ell]$,
  $\tau_j = 2 \cdot \rho^j \cdot \frac{L}{n} \cdot t \cdot (1+\frac{1}{c}) = O(\rho^j \cdot \frac{L}{n} \cdot t)$.
Hence,
\begin{eqnarray*}
\sum_{j=1}^{\ell} f(|Q_j|) \cdot \tau_j &\le&
  \sum_{j=1}^{\ell}  \frac{c}{\rho^{j-1}}\cdot f(n) \cdot O\left(\rho^j \cdot \frac{L}{n} \cdot t\right)
~=~ O\left(\frac{f(n)}{n} \cdot \rho \cdot \log_\rho n \cdot t^2/\eps \right) \cdot L. \inQED
\end{eqnarray*}
\end{proof}


\subsubsection{Number of Edges}
In this section we bound the number of edges in $\tilde G$.
\begin{lemma} \label{edgebound2}
$|\tilde E| = O(SpSz(n) \cdot \log_\rho (t/\eps))$.
\end{lemma}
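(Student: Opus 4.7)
The plan is to decompose $\tilde E$ according to its construction and bound each piece separately. Recall
$$ \tilde E \;=\; \cB \;\cup\; E_H \;\cup\; \bigcup_{j=0}^{\ell} \tilde E_j, $$
so it suffices to bound the size of each of these three components.

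First, the base edge set: by Corollary~\ref{degbase} we have $|\cB|\le n$. Second, the path-spanner: $H$ has the same edge set as $H_\cL$, and Theorem~\ref{1span} gives $|H_\cL|=O(n)$, so $|E_H|=O(n)$. Third, the auxiliary spanners: for each $j\in[0,\ell]$, the set $\tilde E_j$ is the union of the pruned spanner $E^*_j$ on $Q_j$ and (when $j\le \ell-\gamma$) the pruned spanner $\hat E_j$ on $\hat Q_j\subseteq Q_j$. Since pruning only removes edges and Algorithm $BasicSp$ produces at most $SpSz(\,\cdot\,)$ edges, we get
$$ |\tilde E_j| \;\le\; |E^*_j| + |\hat E_j| \;\le\; SpSz(|Q_j|) + SpSz(|\hat Q_j|) \;\le\; 2\cdot SpSz(|Q_j|), $$
using monotonicity of $SpSz$. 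For $j=0$ we simply have $|\tilde E_0|\le SpSz(n)$.

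Summing over $j$ and invoking Corollary~\ref{basic22}(1) with $f=SpSz$ (which is monotone non-decreasing, convex, and vanishes at zero by hypothesis) yields
$$ \sum_{j=0}^{\ell} |\tilde E_j| \;\le\; SpSz(n) + 2\sum_{j=1}^{\ell} SpSz(|Q_j|) \;=\; O\bigl(SpSz(n)\cdot \log_\rho(t/\eps)\bigr). $$
Finally, because any spanner on $n$ points must be connected we have $SpSz(n)\ge n-1$, so the $O(n)$ contributions from $\cB$ and $E_H$ are absorbed into the above bound, giving $|\tilde E|=O(SpSz(n)\cdot\log_\rho(t/\eps))$, as claimed. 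There is no real obstacle here; the only step to be careful about is lining up the hypotheses of Corollary~\ref{basic22} (monotonicity, convexity, vanishing at zero) with $SpSz$, which is guaranteed by the statement of Theorem~\ref{ourresult}.
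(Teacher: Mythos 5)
Your proof is correct and follows essentially the same route as the paper: decompose $\tilde E$ into the base edge set, the path-spanner, and the auxiliary spanners, bound $|\tilde E_j|\le 2\cdot SpSz(|Q_j|)$, and sum via Corollary~\ref{basic22}(1). The only (welcome) addition is your explicit remark that $SpSz(n)\ge n-1$ because a spanner must be connected, which the paper leaves implicit when absorbing the $O(n)$ terms.
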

\begin{proof}
By construction, the edge set $\tilde E$ of $\tilde G$ is  the union of
the path-spanner $H = (Q,E_H)$, the base edge set $\cB$, and all the $j$-level auxiliary spanners $\tilde G_j = (Q_j,\tilde E_j)$, $j \in [0,\ell]$,
i.e., $\tilde E ~=~ E_{H} \cup \cB \cup \bigcup_{j=0}^\ell \tilde E_j.$

The path-spanner $H$ contains at most $O(n)$ edges, i.e., $|E_{H}| = O(n)$,
and the graph $\tilde G_0 = (Q_0 = Q,\tilde E_0)$ contains at most $SpSz(n)$ edges, i.e.,
$|\tilde E_0| \le SpSz(|Q_0|) = SpSz(n)$.
Also, as shown in Section \ref{disregard} (see Corollary \ref{degbase}), the base edge set $\cB$ contains at most $n$ edges.

For each   $j \in [\ell-\gamma]$, we have $\tilde G_j = G^*_j \cup \hat G_j$,
where $G^*_j = (Q_j,E^*_j)$ and $\hat G_j = (\hat Q_j,\hat E_j)$.
Also, for each   $j \in [\ell-\gamma+1,\ell]$, we have $\tilde G_j = G^*_j = (Q_j,E^*_j)$.
Observe that $|E^*_j| \le SpSz(|Q_j|)$, for every   $j \in [\ell]$.
We also have $|\hat E_j| \le SpSz(|\hat Q_j|) \le SpSz(|Q_j|)$, for every   $j \in [\ell-\gamma]$.
It follows that $|\tilde E_j| = |E^*_j \cup \hat E_j| \le |E^*_j| + |\hat E_j| \le 2\cdot SpSz(|Q_j|)$,
for every index $j \in [\ell-\gamma]$,
and $|\tilde E_j| = |E^*_j| \le SpSz(|Q_j|)$, for every index $j \in [\ell-\gamma+1,\ell]$.
Finally, recall that $SpSz(\cdot)$ is a monotone non-decreasing convex function that vanishes at zero.
Consequently,
\begin{eqnarray*}
|\tilde E| &=& |E_{H}| + |\cB| + \sum_{j=0}^{\ell} |\tilde E_j|
~=~ |E_{H}| + |\cB| + |\tilde E_0| + \sum_{j=1}^{\ell-\gamma} |E^*_j \cup \hat E_j| +
\sum_{j=\ell-\gamma+1}^\ell  |E^*_{j}|
\\ &\le& O(n) + SpSz(n) + 2 \cdot \sum_{j=1}^{\ell} SpSz(|Q_j|)
~\le~ O(n) + SpSz(n) + O(SpSz(n) \cdot \log_\rho (t/\eps)) \\ &=& O(SpSz(n) \cdot \log_\rho (t/\eps)).
\end{eqnarray*}
(The last inequality follows from the first assertion of Corollary \ref{basic22}.) \QED
\end{proof}

\subsubsection{Weight} \label{subsecwt}
In this section we bound the weight of $\tilde G$.
\begin{lemma} \label{wtver2}
$\omega(\tilde G) = O(\frac{SpSz(n)}{n} \cdot \rho \cdot \log_\rho n \cdot t^2/\eps) \cdot L$.
\end{lemma}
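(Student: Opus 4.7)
The plan is to decompose the edge set of $\tilde G$ as $\tilde E = E_H \cup \cB \cup \bigcup_{j=0}^\ell \tilde E_j$ and bound the weight contribution of each of the three pieces separately, then argue that the auxiliary-spanner contribution dominates and already matches the claimed bound.

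For the path-spanner, I would use the fact that each edge $(p,q) \in H$ has weight $\delta(p,q) \le \delta_\cL(p,q)$ (metric distance is bounded by path distance along $\cL$), so $\omega(H) \le \omega(H_\cL) = O(\rho \cdot \log_\rho n)\cdot L$. For the base edge set, Corollary~\ref{weightbase} gives $\omega(\cB) = O(\ell \cdot t)\cdot \omega(MST(M[Q])) = O(\log_\rho n \cdot t)\cdot L$. Both of these are dominated by the final target because $SpSz(n) \ge n$ (any connected spanner has at least $n-1$ edges), and so $\frac{SpSz(n)}{n}\cdot \rho \cdot \log_\rho n \cdot t^2/\eps \ge \rho \cdot \log_\rho n$ and $\ge \log_\rho n \cdot t$.

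The core of the argument is the bound on $\sum_j \omega(\tilde G_j)$. Here I would use the pruning step crucially: every edge in $E^*_j$ and in $\hat E_j$ has weight at most $\tau_j$ by construction, and the same holds for $\tilde E_0$ with threshold $\tau_0$. Combined with $|E^*_j|,|\hat E_j| \le SpSz(|Q_j|)$ (from the basic spanner guarantee applied to $Q_j$ and $\hat Q_j \subseteq Q_j$) and monotonicity of $SpSz$, this yields $\omega(\tilde G_j) \le 2\cdot SpSz(|Q_j|)\cdot \tau_j$ for $j \in [\ell-\gamma]$ and $\omega(\tilde G_j) \le SpSz(|Q_j|)\cdot \tau_j$ otherwise; the $j=0$ term contributes at most $SpSz(n)\cdot \tau_0 = O(\frac{SpSz(n)}{n}\cdot t)\cdot L$.

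Summing over $j$ and plugging into the second assertion of Corollary~\ref{basic22} gives
\[
\sum_{j=0}^{\ell} \omega(\tilde G_j) \;\le\; O(1)\cdot \sum_{j=0}^{\ell} SpSz(|Q_j|)\cdot \tau_j \;=\; O\!\left(\tfrac{SpSz(n)}{n}\cdot \rho \cdot \log_\rho n \cdot t^2/\eps\right)\cdot L,
\]
which already matches the target; adding the two smaller terms $\omega(H)$ and $\omega(\cB)$ leaves the bound unchanged. The main obstacle is purely bookkeeping: ensuring that one uses the pruning threshold $\tau_j$ (not a per-edge estimate), and that the geometric decay $\tau_j \cdot SpSz(|Q_j|) \sim \rho^{j}\cdot \rho^{-(j-1)}\cdot \tfrac{SpSz(n)}{n}\cdot L\cdot t$ is summed as in Corollary~\ref{basic22}(2), contributing the extra $\log_\rho n$ factor from the number of levels and the $\rho$ factor from the telescoping of $\tau_j$ against $SpSz(|Q_j|)$.
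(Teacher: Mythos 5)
Your proof is correct and follows essentially the same route as the paper's: decompose $\tilde E = E_H \cup \cB \cup \bigcup_j \tilde E_j$, bound each auxiliary spanner's weight by $SpSz(|Q_j|) \cdot \tau_j$ using the pruning thresholds, and conclude via Corollary~\ref{basic22}(2). The only minor discrepancy is your bound $\omega(\cB) = O(\log_\rho n \cdot t)\cdot L$ is an extra factor of $t$ looser than the paper's $O(\log_\rho n)\cdot L$ (which follows directly from $\omega(\cB_j) \le L$), but since this term is dominated anyway, the final bound is unaffected.
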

\begin{proof}
First, note that the weight $\omega(H)$ of the path-spanner $H$ satisfies $\omega(H) = O(\rho \cdot \log_\rho n) \cdot L$.
As shown in Section \ref{disregard} (see Corollary \ref{weightbase}),
the weight $\omega(\cB)$ of the base edge set $\cB$ satisfies $\omega(\cB) = O(\log_\rho n) \cdot L$.
Also, observe that the maximum edge weight in the graph $\tilde G_0$
is at most $\tau_0$, and so
$$\omega(\tilde G_0) ~\le~ |\tilde E_0| \cdot \tau_0 ~\le~ SpSz(|Q_0|) \cdot \tau_0 ~=~
 SpSz(n) \cdot 2 \cdot \frac{L}{n} \cdot t \cdot \left(1+\frac{1}{c}\right)
~=~
 SpSz(n) \cdot O\left(\frac{L}{n} \cdot t\right).$$
Next, observe that
the maximum edge weight in the graph $G^*_j$ (for every index $j \in [\ell]$) and
the graph $\hat G_j$ (for every index $j \in [\ell-\gamma]$)
is bounded above by the $j$-level threshold $\tau_j$.
In other words,  for every index $j \in [\ell]$, the maximum edge weight in the graph $\tilde G_j$ is bounded above by $\tau_j$,
and so $$\omega(\tilde G_j) ~\le~ |\tilde E_j| \cdot \tau_j ~\le~ 2 \cdot SpSz(|Q_j|) \cdot \tau_j.$$
Finally, we have $\omega(\tilde G) ~=~ \omega(H) + \omega(\cB)+  \omega(\tilde G_0) + \sum_{j=1}^\ell \omega(\tilde G_j)$.
It follows that
\begin{eqnarray*}
\omega(\tilde G) &\le&
O(\rho \cdot \log_\rho n) \cdot L + O(\log_\rho n) \cdot L + SpSz(n) \cdot O\left(\frac{L}{n} \cdot t \right) + \sum_{j=1}^{\ell} 2\cdot SpSz(|Q_j|) \cdot \tau_j
\\ &\le& O(\rho \cdot \log_\rho n) \cdot L + SpSz(n) \cdot O\left(\frac{L}{n} \cdot t\right) +  O\left(\frac{SpSz(n)}{n} \cdot \rho \cdot \log_\rho n \cdot t^2/\eps\right) \cdot L
\\ &=& O\left(\frac{SpSz(n)}{n}
\cdot \rho \cdot \log_\rho n \cdot t^2/\eps \right) \cdot L.
\end{eqnarray*}
(The last inequality follows from the second assertion of Corollary \ref{basic22}.)
\ignore{
Finally,  we bound the weight $L$ of the path $\cL$.
Recall that Algorithm $LightSp$ starts with computing an approximate MST $T$ for $M[Q]$.
The tree $T$ is built in the following way.

First, we use Algorithm $BasicSp$ to build a $t$-spanner
$\cH$ for $M[Q]$.
Then we run Prim's  Algorithm for $\cH$ to get a $t$-approximate MST for $M[Q]$.
It follows that $\omega(T) \le t \cdot \omega(MST(M[Q]))$.

Recall that $L \le 2 \cdot \omega(T)$, which implies that
$L = O(t) \cdot \omega(MST(M[Q]))$, and we are done.
}
\QED
\end{proof}

In Theorem \ref{ourresult} we assumed the existence of Algorithm $BasicSp$ that constructs a $t$-spanner
for any sub-metric $M[Q]$ of $M$ (including $M$ itself) with certain properties. This algorithm can be used to construct
a $t$-approximate MST for $M$. Specifically, running Prim's MST Algorithm over this $t$-spanner results in a $t$-approximate
MST. The weight $L = \omega(\cL)$ of the Hamiltonian path $\cL$ which is computed in this way is $O(t \cdot \omega(MST(M[Q])))$,
and the running time of this computation is $SpTm(n) + O(SpSz(n) + n \cdot \log n) = O(SpTm(n) + n \cdot \log n)$.
Alternatively, we may add to the statement of  Theorem \ref{ourresult} an assumption that it is provided with
Algorithm $LightTree$, which computes an $O(1)$-approximate MST for $M[Q]$ within time $TrTm(n)$.
In particular, in low-dimensional Euclidean and doubling metrics running Prim's Algorithm over an $O(1)$-spanner
results in a routine that computes an $O(1)$-approximate MST within $O(n \cdot \log n)$ time. In these cases (i.e., if we are
provided with Algorithm $LightTree$ or if the input metric is doubling), $L = \omega(\cL) = O(\omega(MST(M[Q])))$.

To summarize:
\begin{corollary} \label{wtcorver2}
In the variant of Algorithm $LightSp$ which employs
Algorithm $LightTree$ (or if $M$ is a low-dimensional Euclidean or doubling metric), $\omega(\tilde G) = O(\frac{SpSz(n)}{n} \cdot \rho \cdot \log_\rho n \cdot t^2/\eps) \cdot \omega(MST(M[Q]))$.
In the variant of Algorithm $LightSp$ which does not employ it, $\omega(\tilde G) = O(\frac{SpSz(n)}{n} \cdot \rho \cdot \log_\rho n \cdot t^3/\eps) \cdot \omega(MST(M[Q]))$.
\end{corollary}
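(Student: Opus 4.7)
The plan is to derive Corollary \ref{wtcorver2} as a direct consequence of Lemma \ref{wtver2}, by substituting the appropriate bound on $L = \omega(\cL)$ in terms of $\omega(MST(M[Q]))$. Lemma \ref{wtver2} already gives us
\[
\omega(\tilde G) \;=\; O\!\left(\frac{SpSz(n)}{n} \cdot \rho \cdot \log_\rho n \cdot t^2/\eps\right) \cdot L,
\]
so everything reduces to controlling $L$ in each of the two regimes.

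For the first case, when Algorithm $LightSp$ is given access to Algorithm $LightTree$ (or when $M$ is a low-dimensional Euclidean or doubling metric), I would invoke the fact that an $O(1)$-approximate MST $T$ for $M[Q]$ can be produced, so that $\omega(T) = O(\omega(MST(M[Q])))$. Recalling that $\cL$ is the preorder traversal of $T$ and satisfies $L \le 2\cdot\omega(T)$ by the standard MST-doubling bound (\cite{CLRS90}, ch.\ 36), we obtain $L = O(\omega(MST(M[Q])))$. Plugging this into the bound of Lemma \ref{wtver2} yields the first assertion of the corollary with a $t^2/\eps$ factor.

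For the second (generic) case, Algorithm $LightSp$ instead constructs $T$ by running Prim's algorithm on a $t$-spanner returned by Algorithm $BasicSp$, producing a $t$-approximate MST, i.e., $\omega(T) \le t \cdot \omega(MST(M[Q]))$. Combining once more with $L \le 2 \cdot \omega(T)$ gives $L = O(t) \cdot \omega(MST(M[Q]))$. Substituting into Lemma \ref{wtver2} introduces one extra factor of $t$, yielding the second assertion with a $t^3/\eps$ factor.

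There is essentially no main obstacle here: the entire argument is a two-line substitution, and both ingredients (the weight bound in terms of $L$, and the relation $L \le 2\omega(T)$ together with the quality of $T$) have already been established or explicitly cited in the discussion preceding the corollary. The only thing to be careful about is the bookkeeping of constants and of the $t$ factors, but these were already tracked in Lemma \ref{wtver2}, so no further calculation is required.
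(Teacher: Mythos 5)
Your proposal is correct and matches the paper's reasoning exactly: Corollary \ref{wtcorver2} is obtained by substituting into Lemma \ref{wtver2} the bound $L = O(\omega(MST(M[Q])))$ in the $LightTree$/doubling case and $L = O(t)\cdot \omega(MST(M[Q]))$ in the generic case (via Prim's algorithm run on the $t$-spanner from $BasicSp$, combined with $L \le 2\omega(T)$). The paper presents this as a brief summarizing discussion in the paragraph preceding the corollary, and your bookkeeping of the extra $t$ factor is right.
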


\subsubsection{Running Time} \label{time:app}
In this section we analyze the running time of Algorithm $LightSp$.

\begin{lemma} \label{time2}
The variant of Algorithm $LightSp$ that invokes (respectively, does not invoke)
Algorithm $LightTree$ can be implemented in time $O(SpTm(n) \cdot \log_\rho(t/\eps) +  TrTm(n))$
(resp., $O(SpTm(n) \cdot \log_\rho(t/\eps) + n \cdot \log n$)).
\end{lemma}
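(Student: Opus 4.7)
The plan is to bound separately the cost of each major phase of Algorithm $LightSp$ and then to sum the per-level contributions via Corollary \ref{basic22}. The three phases are: (i) preprocessing, consisting of computing the approximate MST, the Hamiltonian path $\cL$, the 1-dimensional spanner $H_\cL$ and the path-spanner $H$; (ii) setting up the initial forest $\cF$, the point-bag incidences, and the base edge sets $\cB_j$; and (iii) the bulk of the work, namely the $\ell$ invocations of Procedure $Process_j$.

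For phase (i), the approximate MST costs $TrTm(n)$ when $LightTree$ is available, and $SpTm(n) + O(n \log n)$ otherwise (by running Prim on the $t$-spanner produced by one invocation of $BasicSp$, whose size is $O(SpSz(n)) = O(SpTm(n))$). The preorder traversal, the 1-spanner $H_\cL$ (by Theorem \ref{1span}), and the path-spanner $H$ each cost $O(n)$ additional time. For phase (ii), since $\sum_{j=1}^{\ell} |\cF_j| = \sum_j n_j = O(n)$ (a geometric sum once $\rho^{j-1} \ge c$), the entire forest and all base edge sets can be built in cumulative $O(n)$ time. For phase (iii), Procedure $Process_j$ invokes $BasicSp$ on $Q_j$ and, in Part II, on $\hat Q_j \subseteq Q_j$, costing $O(SpTm(|Q_j|))$ time; the pruning step that produces $G^*_j$ and $\hat G_j$ costs $O(SpSz(|Q_j|))$; Procedure $Attach$ runs in $O(|\hat Q_j|) = O(|Q_j|)$ time by Corollary \ref{attachcor}; and Part III---forming $\cS(v), \cJ(v), B(v), K(v), Q(v)$ and selecting $r(v)$ for each $(j+1)$-level bag $v$---can be carried out in $O(|Q_j|)$ time, provided the counters $load\_ctr, CTR, plain\_ctr, single\_ctr$ are updated incrementally and kernel points are accessible by counter value in constant time.

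Summing the per-level bounds using Corollary \ref{basic22} with $f = SpTm$ (and separately with $f = SpSz$, which is monotone, convex, and vanishes at $0$ by assumption) yields $\sum_{j=0}^{\ell} SpTm(|Q_j|) = O(SpTm(n) \cdot \log_\rho(t/\eps))$ and analogously for $SpSz$. The residual linear overheads sum to $\sum_j |Q_j| = O(n)$. Combining these with the phase-(i) bound gives the two claimed running times, depending on whether $LightTree$ is used.

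The main obstacle is the bookkeeping in Part III: for a large bag $v$ the algorithm must locate a point of $K(v)$ minimizing $CTR_{j-1}$, and for a growing small bag a point minimizing $plain\_ctr_{j-1}$, while simultaneously maintaining $\cF, \hat \cF$, the labels (incubator/zombie), and the counters as bags are created, relabeled, attached or disintegrated. A naive implementation would incur $\Theta(\log n)$ per bag. The fix is to represent each kernel $K(v)$ as a bucketed doubly linked list indexed by counter value, supporting $O(1)$ increments and $O(1)$ minimum access. Each counter increment is charged to an edge of $\tilde G_i$ that made the corresponding point non-isolated, so the total counter-maintenance work is $O(|\tilde E|) = O(SpSz(n) \cdot \log_\rho(t/\eps))$, which is absorbed into the bound above.
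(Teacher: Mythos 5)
Your proposal is correct and follows essentially the same approach as the paper: bound the preprocessing by $TrTm(n)$ or $SpTm(n)+O(n\log n)$, bound each level $j$ by $O(SpTm(|Q_j|))$, and sum via Corollary~\ref{basic22}. The one substantive addition is your explicit bucketed-list data structure for locating minimum-counter kernel points in $O(1)$ time, which the paper leaves implicit (it simply asserts the per-level counter updates cost $O(SpSz(|Q_j|))$); the only small slip is that $\sum_j |Q_j| = O(n\log_\rho(t/\eps))$ by Corollary~\ref{basic22} rather than $O(n)$, but this is absorbed into the final bound anyway.
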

\begin{proof}
The tree $T$ can be built within $O(TrTm(n))$ time by Algorithm $LightTree$,
and within $O(SpTm(n) + n \cdot \log n)$ without it.
In the former case $T$ will have constant lightness. In the latter a $t$-approximate MST $T$ for $M[Q]$
is built as outlined in Section \ref{subsecwt}.
In another $O(n) = O(TrTm(n))$ time we can compute the preorder traversal of $T$,
thus obtaining the Hamiltonian path $\cL$.
The 1-spanner $H_\cL$ can be built in $O(n)$ time. The path-spanner $H$ can be obtained from $H_\cL$ in another $O(n)$ time.
Also, it is easy to see that the graph $\tilde G_0$ can be built within $O(SpTm(n))$ time.


In each level $j = 1,\ldots,\ell$, we spend $SpTm(|Q_j|)$ time to build the $t$-spanner $G'_j$.
Then Algorithm $LightSp$ prunes $G'_j$ to obtain the
graph $G^*_j$.
Since there are at most $SpSz(|Q_j|)$ edges in $G'_j$ and $SpSz(|Q_j|) \le SpTm(|Q_j|)$,
the graph $G^*_j$ can be obtained from $G'_j$ in $O(SpSz(|Q_j|)) = O(SpTm(|Q_j|))$ time  (see Part I of Procedure
$Process_j$ in Section \ref{s:jlevel}).
By similar considerations, the graph $\hat G_j$ can be built in time $O(SpTm(|\hat Q_j|)) = O(SpTm(|Q_j|))$;
also, the $j$-level attachment graph $G_j$ can be built within another amount of $O(SpSz(|Q_j|)) = O(SpTm(|Q_j|))$ time (see Part II of Procedure
$Process_j$).
Updating the load indicators and counters of representatives  requires another $O(SpSz(|Q_j|)) = O(SpTm(|Q_j|))$ time.
By Corollary \ref{attachcor} (see Section \ref{app:attachalg}), Procedure $Attach$ runs in time that is linear in the number of vertices of the attachment graph
$G_j$,
i.e., in time $O(|Q_j|) = O(SpTm(|Q_j|))$.
The number of non-empty bags in the forest $\cF$ is $O(n)$.
Each time one of these bags is processed, at most $O(\gamma)$
bags are labeled as zombies or incubators. Hence the total time
required for labeling bags is $O(n \cdot \gamma) = O(n \cdot \log_\rho (t/\eps))$.
It follows that the time needed to build all $2\ell$ graphs $G^*_1,\hat G_1,\ldots,G^*_{\ell},\hat G_{\ell}$
(and updating the load indicators and counters of the involved representatives accordingly),
as well as executing Procedure $Attach$ and labeling bags
throughout all $\ell$ levels
is at most
$\sum_{j=1}^{\ell} O(SpTm(|Q_j|)) + O(n \cdot \log_\rho (t/\eps))
~\le~ O(SpTm(n)\cdot \log_\rho (t/\eps)).$
(Recall that $SpTm(\cdot)$ is a monotone non-decreasing convex function that vanishes at zero,
and see the first assertion of Corollary \ref{basic22}.)

On level $j$, determining which bags are crowded requires
$O(|Q_j|)$ time. Hence, by Corollary \ref{basic22}, in all $\ell$
levels altogether this step requires $\sum_{i=1}^\ell O(|Q_j|) =
O(n \cdot \log_\rho (t/\eps))$ time.

Altogether, the variant of Algorithm $LightSp$ that uses Algorithm $LightTree$ takes  time $O(SpTm(n)\cdot \log_\rho(t/\eps) + TrTm(n))$. The variant of Algorithm $LightSp$ that does not use Algorithm $LightTree$ takes  time $O(SpTm(n)\cdot \log_\rho(t/\eps) + n \cdot  \log n)$.
\QED
\end{proof}
{\bf Remark:} In the case of low-dimensional Euclidean or doubling metrics, $TrTm(n) = O(n \cdot \log n)$,
and so the running time becomes $O(SpTm(n) \cdot \log_\rho (t/\eps) + n \cdot \log n)$.


\subsection{Stretch and Diameter} \label{stchdiam}
In this section we analyze the stretch and diameter of the spanner $\tilde G$.

The following lemma is central in our analysis. 
\begin{lemma} \label{key}
Fix any index $j \in [\ell]$, and let $v \in \cF_j, Q(v) \ne \emptyset$.
Then
for every $p \in Q(v)$, there is a path $\Pi_j(p)$ in the spanner $\tilde G$ that
leads to a point $b_j(p)$ in the base point set $B(v)$ of $v$, having
weight at most $\frac{1}{2} \cdot \mu_j$ and at most $3\ell$ edges.
Moreover, if $p \in K(v)$, then $\Pi_j(p)$ consists of at most $2\ell$ edges.
All points in $\Pi_j(p)$ belong to the point set $Q(v)$ of $v$.
(The point $b_j(p)$ is called the $j$-level \emph{base point} of $p$.)
\end{lemma}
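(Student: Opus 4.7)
The plan is to argue by induction on $j$. The base case $j=1$ is immediate: since $B(v)=K(v)=Q(v)=N(v)$ for every 1-level bag $v$, we may take $\Pi_1(p)=(p)$ and $b_1(p)=p$, which vacuously satisfies all three bounds.

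For the inductive step at $j\ge 2$, fix $v\in\cF_j$ and $p\in Q(v)$, and let $z\in\chi(v)$ be the unique extended child with $p\in Q(z)$. In the easy case $z\in\cS(v)$, the definition $B(v)=\bigcup_{c\in\cS(v)}B(c)$ gives $B(z)\subseteq B(v)$, so the induction hypothesis at level $j-1$ supplies a path $\Pi_{j-1}(p)$ from $p$ to some $b_{j-1}(p)\in B(z)\subseteq B(v)$; I set $\Pi_j(p):=\Pi_{j-1}(p)$ and $b_j(p):=b_{j-1}(p)$. The implication $p\in K(v)\Rightarrow p\in K(z)$ is immediate from the recursive definition of $K(\cdot)$ and the disjointness of the $Q(c)$'s, so the kernel refinement $\le 2\ell$ is inherited; the weight bound $\frac{1}{2}\mu_{j-1}\le\frac{1}{2}\mu_j$ is trivial; and containment of $\Pi_j(p)$ in $Q(z)\subseteq Q(v)$ is automatic.

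The substantive case is $z\in\cJ(v)$, where $z$ is a $(j-1)$-level disappearing zombie. By Lemma \ref{newl} there is a unique attached bag $\tilde z\in\cF_{j-\gamma}$ with $Q(z)=Q(\tilde z)$, $K(z)=K(\tilde z)$, and $B(z)=B(\tilde z)$; the corresponding attachment $\cA(\tilde u,\tilde z)$ has some $(j-\gamma)$-level initiator $\tilde u$ whose actual adopter is $\tilde u^{(\gamma)}=v$. The representing edge $(r(\tilde z),r(\tilde u))$ of weight at most $\tau_{j-\gamma}$ thus lies in the auxiliary spanner $\tilde G_{j-\gamma}\subseteq \tilde G$, and because the incubators $\tilde u^{(1)},\ldots,\tilde u^{(\gamma-1)}$ are safe and hence surviving children along the chain up to $v$, we obtain the crucial inclusion $B(\tilde u)\subseteq B(v)$. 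I build $\Pi_j(p)$ as the concatenation of five sub-paths: (i) the inductive path from $p$ to $b_{j-\gamma}(p)\in B(\tilde z)$ at level $j-\gamma$; (ii) a path in the path-spanner $H$ joining $b_{j-\gamma}(p)$ to $b_{j-\gamma}(r(\tilde z))$, whose weight is at most $\mu_{j-\gamma}$ since both endpoints lie in the length-$\mu_{j-\gamma}$ native interval $I(\tilde z)$; (iii) the reversal of the inductive path from $r(\tilde z)\in K(\tilde z)$ to $b_{j-\gamma}(r(\tilde z))$; (iv) the representing edge $(r(\tilde z),r(\tilde u))$; and (v) the inductive path from $r(\tilde u)\in K(\tilde u)$ to $b_{j-\gamma}(r(\tilde u))\in B(\tilde u)\subseteq B(v)$, which I declare to be $b_j(p)$. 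Containment of all vertices of $\Pi_j(p)$ in $Q(v)$ is automatic since each sub-path lives in $Q(\tilde z)\subseteq Q(v)$ or $Q(\tilde u)\subseteq Q(v)$.

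The weight of this concatenation is $\frac{5}{2}\mu_{j-\gamma}+O(\mu_{j-\gamma})+\tau_{j-\gamma}=O(\mu_{j-\gamma}\cdot tc)$, which is dominated by $\frac{1}{2}\mu_j=\frac{1}{2}\rho^\gamma\mu_{j-\gamma}$ precisely because $\gamma=c_0(\lceil\log_\rho t\rceil+\lceil\log_\rho c\rceil+1)$ with $c_0$ chosen sufficiently large; this reduces the weight analysis to a routine geometric calculation. I expect the edge count to be the main obstacle: a naive accumulation gives $3\ell+O(\log_\rho n+\alpha(\rho))+2\ell+1+2\ell$, easily exceeding $3\ell$, so the construction must be tightened. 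My strategy is to exploit two structural features established via Lemmas \ref{tat}--\ref{newl}: (a) any two consecutive Case-B events along $p$'s trace are separated by at least $\gamma$ Case-A levels that contribute no new edges, so the total number of Case-B events is only $O(\ell/\gamma)$; and (b) when $p\in K(\tilde z)$ the bound on sub-path (i) drops from $3\ell$ to $2\ell$, while the definition of $b_{j-\gamma}(\cdot)$ can be refined so that sub-paths (ii) and (iii) collapse (for example by declaring $b_{j-\gamma}(p):=b_{j-\gamma}(r(\tilde z))$, which corresponds to re-routing through a kernel-to-kernel sub-path of length $\le 2\ell$). Combined with the kernel refinement, this tightens the Case-B contribution enough to fit within the $3\ell$ and $2\ell$ budgets.
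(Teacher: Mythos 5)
The base case, Case A, and the weight calculation are all handled as in the paper. The genuine gap is in Case B, where the paper's essential trick is not present in your proposal and your substitutes do not close the hole.

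The paper does not use the path-spanner $H$ for your sub-path (ii). Instead it routes $b_{k}(p)$ to $b_{k}(r(\tilde z))$ (with $k = j-\gamma$) along a segment of the base-edge path $P(\tilde z)$, whose vertices lie entirely in $B(\tilde z) \subseteq Q(\tilde z)$. It then observes that the entire concatenation of your sub-paths (i), (ii), (iii) lies inside $Q(\tilde z)$, and that $\tilde z$ --- being an attached (hence risky, hence small) bag --- satisfies $|Q(\tilde z)| \le \ell - 1$. After eliminating loops, that concatenation is a simple path on at most $\ell - 1$ vertices, hence at most $\ell - 2$ edges. Adding the representing edge and the final sub-path $\Pi_k(r(\tilde u))$ (at most $2\ell$ edges since $r(\tilde u) \in K(\tilde u)$) gives $(\ell-2) + 1 + 2\ell \le 3\ell$. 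For the $p \in K(v)$ refinement, the paper additionally proves that $\tilde u$ must be small (else $K(v)$ would exclude step-children's points, contradicting $p \in K(v) \cap Q(z)$), so $\Pi_k(r(\tilde u))$ also drops to $\le \ell - 2$ edges, giving the $2\ell$ bound.

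Your version has two problems that this avoids. First, a 1-spanner path in $H$ between two points of $B(\tilde z)$ may pass through points of $N(\tilde z)$ that are \emph{not} in $Q(\tilde z)$ (points whose bags disintegrated away), so your claim that each sub-path ``lives in $Q(\tilde z)$'' is false for sub-path (ii); this breaks the invariant that $\Pi_j(p) \subseteq Q(v)$, which later levels of the induction rely on. Second, you never prove the edge count. Your fix (a) --- amortizing across levels because Case-A levels add nothing --- misses the point: the inductive statement is a per-level, per-path bound, and a single Case-B step already overshoots $3\ell$; the amortization you describe is already built into the induction (Case A just passes the path through) and cannot rescue the additive blow-up at each Case-B level. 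Your fix (b), ``declaring $b_{j-\gamma}(p) := b_{j-\gamma}(r(\tilde z))$'', is not a legitimate move: the base point $b_{j-\gamma}(p)$ is supplied by the inductive hypothesis and is not yours to redefine; even if one re-routes as you gesture at, you would still have $2\ell + 1 + 2\ell = 4\ell + 1 > 3\ell$ edges. The missing idea is precisely the confinement-plus-de-looping observation: route the middle segment through $P(\tilde z)$ rather than $H$, note that everything up to and including $r(\tilde z)$ sits in the small set $Q(\tilde z)$, and collapse the whole thing to $\le \ell - 2$ edges.
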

\begin{proof}
The proof is by induction on $j$. The
basis $j=1$ is immediate.
\\\emph{Induction Step:}
Let $j \ge 2$, and $p \in Q(v)$.
Also, let $u \in \chi(v) \subseteq \cF_{j-1}$ be the $(j-1)$-level host bag of $p$.

Suppose first that $u \in \cS(v)$, i.e., $u$ is a surviving child of $v$ in $\cF$.
In this case,  $B(u) \subseteq B(v), K(u) \subseteq K(v),
Q(u) \subseteq Q(v)$. (See Section \ref{sec21} to recall the basic properties of these point sets.)
Consider the path $\Pi_{j-1}(p)$ between $p$ and its $(j-1)$-level base point $b_{j-1}(p) \in B(u) \subseteq B(v)$
 guaranteed by the  induction hypothesis for $u$. Its weight is at most $\frac{1}{2} \cdot \mu_{j-1} = \frac{1}{2} \cdot \frac{\mu_j}{\rho} < \frac{1}{2} \cdot \mu_j$
and it consists of at most $3 \ell$ edges.
Also, all points of $\Pi_{j-1}(p)$ belong to $Q(u) \subseteq Q(v)$.
Moreover, suppose now that $p \in K(v)$.
Recall that $u$ is the unique $(j-1)$-level bag
such that $p \in Q(u)$. Since $K(v) \subseteq \bigcup_{z \in \chi(v)} K(z)$
and each kernel set $K(z)$ is contained in $Q(z)$,
it follows that
$p \in K(u)$. By the induction hypothesis, $\Pi_{j-1}(p)$
consists of at most $2\ell$ edges.
Thus, we set $\Pi_j(p) = \Pi_{j-1}(p)$ and $b_j(p) = b_{j-1}(p)$.    

We henceforth assume that $u$ is a disappearing zombie, i.e., $u \in \cJ(v)$ is a step-child of $v$.
In this case,  since $v \in \cF_j$ is an actual adopter, it must hold that $j \ge \gamma + 1$.
For each index $i \in [0,\gamma-1]$, let $y^{(i)}$
denote the $(j-1 - (\gamma-1) + i) = (j-\gamma+i)$-level \emph{copy} of $u$. (That is,
for each of these identical copies $y^{(i)}$, we have $Q(y^{(i)}) = Q(u)$.
See Lemma \ref{newl}.)
In particular, $u = y^{(\gamma-1)}$ is a disappearing zombie,
and $y^{(0)} = y$ is an attached bag.
Observe that an attachment $\cA(x,y)$, for some $(j-\gamma)$-level  bag
$x = x^{(0)}$, occurs during the $(j-\gamma)$-level processing. As a result
of this attachment, $y = y^{(0)}$ became an attached bag, and the corresponding disappearing zombie is $y^{(\gamma-1)} = u$.
The initiator bag $x$ of this attachment is a descendant of the
actual adopter $v$. The bags $x^{(1)} = \pi(x^{(0)}),x^{(2)} = \pi(x^{(1)}),
\ldots,x^{(\gamma-1)} = \pi(x^{(\gamma-2)})$ are labeled as a result of
this attachment as incubators. Observe that $v = x^{(\gamma)} = \pi(x^{(\gamma-1)})$.
Recall that the attachment $\cA(x,y)$ is represented
by the edge $(r(x),r(y))$ in the spanner $\tilde G$.
(Recall that $r(x)$ and $r(y)$ are the representatives of the bags $x$ and $y$, respectively.)
See Figure \ref{fig3} for an illustration.
\begin{figure*}[htp]
\begin{center}
\begin{minipage}{\textwidth} 
\begin{center}
\setlength{\epsfxsize}{5in} \epsfbox{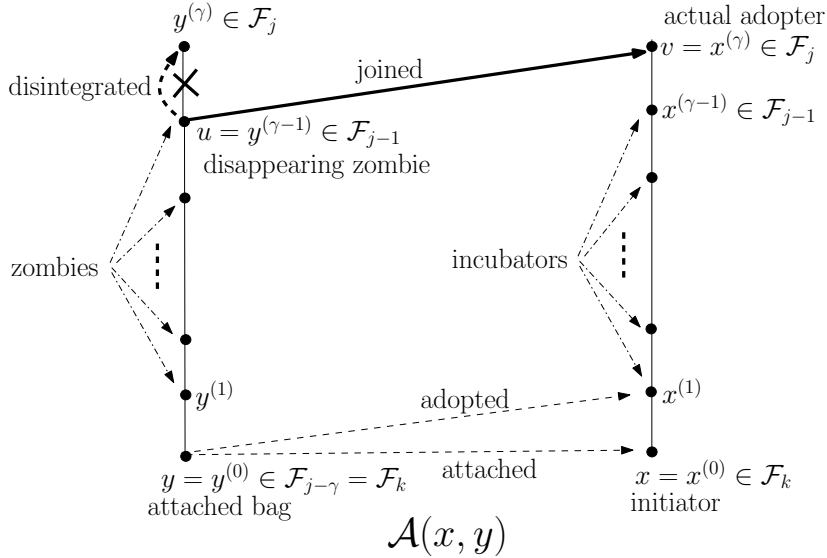}
\end{center}
\end{minipage}
\caption[]{ \label{fig3} \sf $u$ is a disappearing zombie, a step-child of (the actual adopter) $v$.}
\end{center}
\end{figure*}

We will use the following claim to prove Lemma \ref{key}.
\begin{claim} \label{inside}
Define $k = j - \gamma$.
There is a   path $\Pi(p,r(y))$ in
  $\tilde G$ between $p$ and $r(y)$ that has weight at most
$2 \cdot \mu_{k}$
and at most $\ell-2$ edges. Also, all points of $\Pi(p,r(y))$ belong to $Q(y) = Q(u) \subseteq Q(v)$.
\end{claim}
\begin{proof}
Recall that $p \in Q(u)$, and $y$ is a $(j-\gamma)$-level copy of $u$.
Hence both points $p$ and $r(y)$ belong to the $k$-level bag $y$, i.e., $p,r(y) \in Q(y)$.
Consider the paths $\Pi_{k}(p)$ and $\Pi_{k}(r(y))$
in $\tilde G$ that are guaranteed by the induction hypothesis for $y$,
having weight at most $\frac{1}{2} \cdot \mu_{k} = \frac{1}{2} \cdot \frac{\mu_j}{\rho^\gamma}$;
all points of these two paths
belong to $Q(y)$.
The path $\Pi_{k}(p)$ (respectively, $\Pi_{k}(r(y))$) leads to a point $b_{k}(p)$ (resp., $b_{k}(r(y))$)
in the base point set $B(y)$ of $y$.
Recall that the spanner $\tilde G$ contains a path $P(y)$
which connects the base point set $B(y)$ via a simple path.
Denote by $\Pi(b_{k}(p),b_{k}(r(y)))$ the sub-path of $P(y)$ between $b_{k}(p)$ and $b_{k}(r(y))$;
by the triangle inequality,  the weight of this path is at most
$\delta_\cL(b_{k}(p),b_{k}(r(y))) ~\le~
 \mu_{k} ~=~  \frac{\mu_j}{\rho^\gamma}.$
We set $\Pi(p,r(y)) = \Pi_{k}(p) \circ \Pi(b_{k}(p),b_{k}(r(y))) \circ \Pi_{k}(r(y))$.
(We assume that $\Pi(p,r(y))$ is a simple path. Otherwise we transform it into such by eliminating loops.)
It is easy to see that $\Pi(p,r(y))$ is a path between $p$ and $r(y)$
in the spanner $\tilde G$ that has weight at most $2 \cdot \mu_k = 2 \cdot \frac{\mu_j}{\rho^\gamma}$.
Moreover, all points of $\Pi(p,r(y))$ belong to $Q(y) = Q(u) \subseteq Q(v)$.
Note that an attached bag $y \in \cF_k$ was necessarily marked as risky
by Procedure $Process_k$. Therefore, $y$ is a small bag. Hence $|Q(y)| \le \ell-1$.
Since $\Pi(p,r(y))$ is a simple path, 
it consists of at most $\ell -2$ edges,
which completes the proof of
Claim \ref{inside}. See Figure \ref{fig8} for an illustration.
\begin{figure*}[htp]
\begin{center}
\begin{minipage}{\textwidth} 
\begin{center}
\setlength{\epsfxsize}{6.1in} \epsfbox{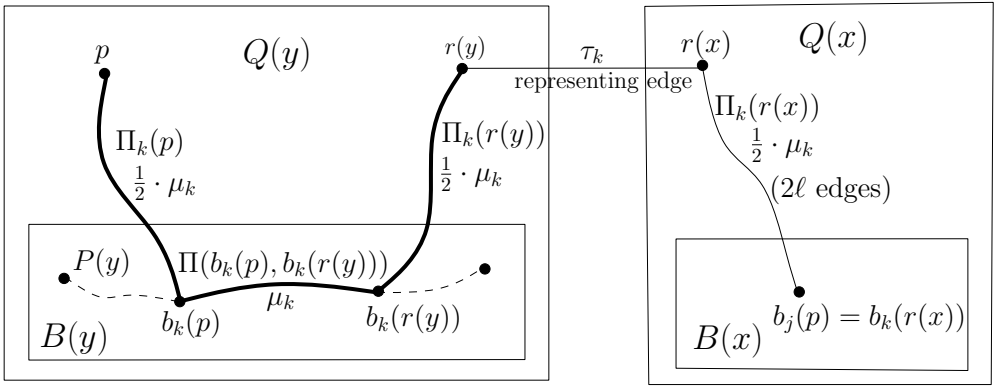}
\end{center}
\end{minipage}
\caption[]{ \label{fig8} \sf 
The path $\Pi(p,r(y))$ is depicted by a bold solid line. It is a sub-path of
the path $\Pi_j(p)$, which connects $p$ with $b_j(p)$, and is depicted by a solid line.}
\end{center}
\end{figure*}
\QED
\end{proof}

Observe that at this point we have built a ``good path'' $\Pi(p,r(y)) \circ (r(y),r(x))$ from $p$ to $r(x)$.  
We now need to ``connect'' $r(x)$ to a point $b_j(p) \in B(v)$, which will be designated as the $j$-level base point of $p$.
Observe that all bags $x = x^{(0)},x^{(1)} = \pi(x^{(0)}), \ldots,v = x^{(\gamma)} = \pi(x^{(\gamma-1)})$ 
along the path in $\cF$ between the attachment initiator $x$ and the actual adopter $v = x^{(\gamma)}$ (which is an ancestor of $x$ in $\cF$)
are not zombies. 
In particular, none of these bags is a disappearing zombie. 
It follows that $B(x) \subseteq B(v), K(x) \subseteq K(v), Q(x) \subseteq Q(v)$.
Also, since a representative
of a bag must belong to its kernel,  we have $r(x) \in K(x)$.
By the induction hypothesis for $x$, there exists
a path $\Pi_{k}(r(x))$ between $r(x)$
and its $k$-level base point $b_{k}(r(x)) \in B(x) \subseteq B(v)$ in the spanner $\tilde G$.
Moreover, all points of this path belong to $Q(x) \subseteq Q(v)$.   
In addition, the weight of this path is at most
$\frac{1}{2} \cdot \mu_k = \frac{1}{2} \cdot \frac{\mu_j}{\rho^\gamma}$,
and since $r(x) \in K(x)$,
it consists of at most $2 \ell$ edges.
We set $b_j(p) = b_{k}(r(x)) \in B(v)$,
and $\Pi_j(p) = \Pi(p,r(y)) \circ (r(y),r(x)) \circ \Pi_{k}(r(x))$.
(See Figure \ref{fig8}.)
It is easy to see that $\Pi_j(p)$ is a path between $p$ and its $j$-level base point $b_j(p) = b_k(r(x))$,
and that all points of $\Pi_j(p)$ belong to $Q(v)$. Notice that $\omega(r(y),r(x)) \le \tau_k$.
Therefore, the total weight $\omega(\Pi_j(p))$
of the path
$\Pi_j(p) =  \Pi(p,r(y)) \circ (r(y),r(x)) \circ \Pi_{k}(r(x))$
satisfies (for sufficiently large $c_0$)
\begin{eqnarray*}
\omega(\Pi_j(p)) &\le& 2\mu_k + \tau_k + \frac{1}{2} \cdot \mu_k ~=~
\mu_j \cdot \frac{\left(\frac{5}{2} + 2 \cdot (c+1) \cdot \rho \cdot t\right)}{(c \cdot \rho \cdot t)^{c_0}}
~<~ \frac{1}{2} \cdot \mu_j.
\end{eqnarray*}
(Recall that $c_0$ is a sufficiently large constant of our choice. Setting $c_0 \ge 8$ is enough here.)
Also,      
it holds that $|\Pi_j(p)|
~=~
|\Pi(p,r(y))| + 1 + |\Pi_{k}(r(x))| ~\le~ (\ell -2) + 1 + 2\ell ~\le~ 3\ell.
$

Suppose now that $p \in K(v)$. We argue that in this case $x$ is a small bag.
(This case is characterized by $u \in \cJ(v), p \in Q(u) \cap K(v)$.)
Suppose for contradiction otherwise, and consider
the $(j-1)$-level ancestor $x^{(\gamma-1)}$ 
of $x$, which is a surviving child of $v = x^{(\gamma)}$.
Observe that
$$K'(v) ~=~  \bigcup_{z \in \cS(v)} K(z) ~\supseteq~ K(x^{(\gamma-1)}) ~\supseteq~ K(x).$$
By Lemma \ref{kv}, $|K'(v)| \ge |K(x)| \ge \ell$.
By construction, $K(v) = K'(v) = \bigcup_{z \in \cS(v)} K(z)$.
Hence the kernel set $K(v)$ of $v$
contains only points from the kernel sets of its surviving children, and contains no points from its 
step-children. However, $p \in  Q(u)$, and $u$ is a  step-child of $v$.
Hence $p \nin K(v)$, a contradiction.
\\Therefore $x$ is a small bag, and so $|Q(x)| < \ell$.
 We may assume that $\Pi_{k}(r(x))$ is a simple path.
Since 
all points of $\Pi_{k}(r(x))$ belong to $Q(x)$,  this path consists of at most $\ell -2$ edges (rather than at most $2\ell$ edges as in the general case).
Hence,
$|\Pi_j(p)| ~=~ |\Pi(p,r(y))| + 1 + |\Pi_{k}(r(x))| ~\le~ (\ell -2) + 1 + (\ell -2) ~\le~ 2\ell.$  \QED
\end{proof}

Lemma \ref{key} implies the following corollary.
\begin{corollary} \label{key2}
Fix an arbitrary index $j \in [\ell]$, and let $v$ be an arbitrary non-empty $j$-level bag.
There is a  path in the spanner $\tilde G$ between every pair of points in $Q(v)$,
having weight at most $2 \cdot \mu_j$ and at most $O(\log_\rho n + \alpha(\rho))$ edges.
In particular, the metric distance between any two points in $Q(v)$ is at most $2 \cdot \mu_j$.
\end{corollary}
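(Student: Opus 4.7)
The plan is to derive Corollary~\ref{key2} as a short consequence of Lemma~\ref{key} together with the path-spanner $H$. Fix $p, q \in Q(v)$. Lemma~\ref{key} applied to $p$ and $q$ produces paths $\Pi_j(p)$ and $\Pi_j(q)$ in $\tilde G$ leading to base points $b_j(p), b_j(q) \in B(v)$; each of these paths has weight at most $\mu_j/2$ and at most $3\ell$ edges. The task reduces to connecting $b_j(p)$ to $b_j(q)$ inside $\tilde G$ with weight $O(\mu_j)$ and $O(\log_\rho n + \alpha(\rho))$ hops.

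For this, I would first observe (by an immediate induction on $j$, using $B(v) = \bigcup_{z \in \cS(v)} B(z)$ and the fact that every surviving child $z \in \cS(v)$ is an actual child of $v$ in $\cF$, hence $N(z) \subseteq N(v)$) that $B(v) \subseteq N(v)$. Consequently $b_j(p)$ and $b_j(q)$ both lie in the interval $I(v)$, whose length along $\cL$ is $\mu_j$, so $\delta_\cL(b_j(p), b_j(q)) \le \mu_j$.

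Now I would invoke the path-spanner $H$, which is a subgraph of $\tilde G$: by its defining property (item (ii) in its description), $H$ contains a path $\Pi_H(b_j(p), b_j(q))$ of weight at most $\delta_\cL(b_j(p), b_j(q)) \le \mu_j$ consisting of at most $O(\log_\rho n + \alpha(\rho))$ edges. Concatenating $\Pi_j(p)$, $\Pi_H(b_j(p), b_j(q))$, and the reverse of $\Pi_j(q)$ produces a path (or walk, which can be shortcut to a path) in $\tilde G$ from $p$ to $q$ whose total weight is at most $\mu_j/2 + \mu_j + \mu_j/2 = 2\mu_j$ and whose hop count is at most $3\ell + O(\log_\rho n + \alpha(\rho)) + 3\ell = O(\log_\rho n + \alpha(\rho))$, since $\ell = \lceil \log_\rho n \rceil$. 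The metric bound $\delta(p,q) \le 2\mu_j$ then follows since $\delta(p,q)$ is no greater than the weight of any path from $p$ to $q$ in $\tilde G$.

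No real obstacle is anticipated here: all the hard work was done in Lemma~\ref{key}. The one subtle point worth double-checking in the write-up is that we must route between the two base points via $H$ rather than along the path $P(v)$ that connects $B(v)$ in order: the latter has weight at most $\mu_j$ but may contain arbitrarily many edges (since $|B(v)|$ can be as large as $|N(v)|$ when $v$ is a large bag), whereas $H$ guarantees the logarithmic hop bound required by the corollary.
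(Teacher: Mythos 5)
Your proposal follows exactly the paper's argument: apply Lemma~\ref{key} to each of $p$ and $q$ to reach base points in $B(v)$, then connect the two base points via the path-spanner $H$ (not via $P(v)$, for precisely the hop-count reason you note), and add up weights $\mu_j/2 + \mu_j + \mu_j/2$ and hop counts $3\ell + O(\log_\rho n + \alpha(\rho)) + 3\ell$. Your extra step of explicitly deriving $\delta_\cL(b_j(p), b_j(q)) \le \mu_j$ from $B(v) \subseteq N(v) \subseteq I(v)$ is a correct and slightly more self-contained justification of a bound the paper states without comment.
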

\begin{proof}
Consider an arbitrary pair $p,q$ of points in $Q(v)$, and let $\Pi_j(p)$ and $\Pi_j(q)$
be the paths in $\tilde G$ that are guaranteed by Lemma \ref{key},
having weight at most $\frac{1}{2} \cdot \mu_j$ and at most $3\ell = 3 \lceil \log_\rho n \rceil$ edges each.
The path $\Pi_j(p)$ (respectively, $\Pi_j(q)$) leads to a point $b_j(p)$ (resp., $b_j(q)$)
in the base point set $B(v)$ of $v$. The spanner $\tilde G$ contains the path-spanner $H$.
Recall that for any pair $x,y \in Q$ of points,
there is a path $\Pi_H(x,y)$ in the path-spanner $H$
that has weight at most $\delta_\cL(x,y)$ and $O(\log_\rho n +\alpha(\rho))$ edges.
In particular, $H$ contains a path $\Pi_{H}(b_j(p),b_j(q))$
between $b_j(p)$ and $b_j(q)$, having weight at most $\delta_\cL(b_j(p),b_j(q)) \le \mu_j$
 and $O(\log_\rho n + \alpha(\rho))$ edges. Consider the path
$\Pi(p,q) = \Pi_j(p) \circ \Pi_{H}(b_j(p),b_j(q)) \circ \Pi_j(q)$.
Note that $\Pi(p,q)$ is a path between $p$ and $q$ in the
spanner $\tilde G$ that has weight at most $\frac{1}{2} \cdot \mu_j + \mu_j + \frac{1}{2} \cdot \mu_j = 2 \cdot \mu_j$, and
at most $3 \lceil \log_\rho n \rceil + O(\log_\rho n + \alpha(\rho)) + 3 \lceil \log_\rho n \rceil = O(\log_\rho n + \alpha(\rho))$ edges.
\QED
\end{proof}

The next lemma implies that $\tilde G$ is a $(t+\eps)$-spanner for $M[Q]$ with diameter $O(\Lambda(n) + \log_\rho n + \alpha(\rho))$.
Recall that $\Lambda(n)$ is an upper bound on the diameter
of the auxiliary spanners, produced by Algorithm $BasicSp$.
(See the statement of Theorem \ref{ourresult}.)
\begin{lemma} \label{lmstretch2}
For any $p,q \in Q$, there is a $(t+\eps)$-spanner path in $\tilde G$ with $O(\Lambda(n) + \log_\rho n + \alpha(\rho))$ edges.
\end{lemma}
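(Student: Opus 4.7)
The plan is to split on the scale of $\delta = \delta(p,q)$ and route the spanner path through a carefully chosen level $j$ of the hierarchy. First I would dispose of the small-distance regime $\delta \le \tau_0/t$: here the $t$-spanner $G'_0$ that Algorithm $BasicSp$ builds on $Q_0 = Q$ contains a $t$-spanner path $\Pi$ from $p$ to $q$ with at most $\Lambda(n)$ edges, and every edge of $\Pi$ has weight at most $t\delta \le \tau_0$, so $\Pi$ survives the $\tau_0$-pruning and lies inside $\tilde G_0 \subseteq \tilde G$.

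For $\delta > \tau_0/t$, I would let $j$ be the largest index in $[1,\ell]$ with $\mu_j \le \delta/c$; maximality together with $\mu_{j+1} = \rho \mu_j$ (or the bound $\delta \le L$ in the boundary case $j = \ell$) forces $\mu_j$ into a narrow window of the shape $\delta/(\rho c) < \mu_j \le \delta/c$. Let $v_p = v_j(p)$ and $v_q = v_j(q)$ be the $j$-level host bags of $p$ and $q$. If $v_p = v_q$, Corollary \ref{key2} applied to this common bag yields a path in $\tilde G$ between $p$ and $q$ of weight at most $2\mu_j \le 2\delta/c$ and $O(\log_\rho n + \alpha(\rho))$ edges; since $c = \Theta(t/\eps)$, this weight is well below $(t+\eps)\delta$.

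If instead $v_p \ne v_q$, I would splice three sub-paths. Corollary \ref{key2} applied separately to $v_p$ and $v_q$ gives paths $\Pi_p,\Pi_q$ in $\tilde G$ from $p$ to $r(v_p)$ and from $q$ to $r(v_q)$, each of weight at most $2\mu_j$ and $O(\log_\rho n + \alpha(\rho))$ edges. For the middle part I would invoke the $t$-spanner $G'_j$ on $Q_j$: it contains a $t$-spanner path $\Pi_{\mathrm{mid}}$ from $r(v_p)$ to $r(v_q)$ of at most $\Lambda(|Q_j|) \le \Lambda(n)$ edges. The triangle inequality gives $\delta(r(v_p),r(v_q)) \le \delta + 4\mu_j$, so each edge of $\Pi_{\mathrm{mid}}$ has weight at most $t(\delta + 4\mu_j)$; the lower bound $\mu_j > \delta/(\rho c)$ is precisely what makes this quantity at most $\tau_j = 2\mu_j \rho t(c+1)$, so $\Pi_{\mathrm{mid}}$ survives the $\tau_j$-pruning and lies in $\tilde G_j \subseteq \tilde G$. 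The concatenation $\Pi_p \circ \Pi_{\mathrm{mid}} \circ \Pi_q$ has total weight at most $t\delta + 4(t+1)\mu_j \le t\delta + 4(t+1)\delta/c \le (t+\eps)\delta$, using $c \ge 4(t+1)/\eps$, and consists of $O(\Lambda(n) + \log_\rho n + \alpha(\rho))$ edges, as required.

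The main obstacle is the two-sided calibration of the level $j$: $\mu_j$ must be small enough that the representative-detours $\Pi_p$ and $\Pi_q$ waste only an $O(\eps\delta)$ overhead, yet large enough that $\tau_j \ge t(\delta + 4\mu_j)$ so that the $t$-spanner path between the representatives is not pruned. The geometric spacing of the $\mu_j$'s by a factor $\rho$, combined with the quantitative choice $c = \Theta(t/\eps)$, ensures that these two constraints overlap on exactly one level. A secondary subtlety is verifying the inequalities at the extremes, where one falls back on $\tilde G_0$ for $\delta \le \tau_0/t$ and on the top level $j = \ell$ (exploiting $\delta \le L$) when $\delta$ is so large that $\mu_\ell$ still satisfies $\mu_\ell \le \delta/c$.
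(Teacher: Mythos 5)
Your proof is correct and follows essentially the same route as the paper's: after dispatching the short-distance regime via $\tilde G_0$, you pick the level $j$ so that $\mu_j$ falls in the window $\delta/(\rho c) \lesssim \mu_j \le \delta/c$, apply Corollary~\ref{key2} to hop from $p$ (resp.\ $q$) to the $j$-level representative, and use the $t$-spanner $G'_j$ on $Q_j$ for the middle piece, observing that the resulting path survives the $\tau_j$-pruning because its total weight is bounded by $\tau_j$. The paper parameterizes the same window as $\xi_j < \delta(p,q) \le \rho\xi_j$ (equivalently $\delta/(\rho c) \le \mu_j < \delta/c$) and encapsulates your pruning observation as a separate Observation~\ref{obstretch2}, but the decomposition of the path, the triangle-inequality bound $\delta(r(u),r(w)) \le \delta + 4\mu_j$, and the final computation $t\delta + 4(t+1)\mu_j \le (t+\eps)\delta$ using $c \ge 4(t+1)/\eps$ are identical. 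Your extra case $v_p = v_q$ is vacuous (since $\mu_j \le \delta/c$ forces $\delta > 2\mu_j$, which by Corollary~\ref{key2} separates $p$ and $q$ into different $j$-level bags), but it does no harm; your handling of the boundary $j=\ell$ via $\delta \le L$ and $\rho^\ell \ge n$ is also correct and mirrors the paper's implicit reasoning.
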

\begin{proof}
We start the proof of the lemma with the following observation.
\begin{observation} \label{obstretch2}
Fix any index $j \in [0,\ell]$.
For any pair $u,v \in \cF_j$ of non-empty $j$-level bags,
such that $\delta(r(u),r(v)) \le \frac{\tau_j}{t}$,
there is a $t$-spanner path in $G^*_j$ between $r(u)$ and $r(v)$
with at most $\Lambda(n)$ edges.
\end{observation}
\begin{proof}
Since $u$ and $v$ are non-empty $j$-level bags,
it holds that $r(u),r(v) \in Q_j$. In addition,
since $G'_j$ is a $t$-spanner for $Q_j$ with diameter at most $\Lambda(n)$, there is a $t$-spanner path $\Pi$ in $G'_j$ between
$r(u)$ and $r(v)$ that consists of at most $\Lambda(n)$ edges. The fact that $\Pi$ is a $t$-spanner path between
$r(u)$ and $r(v)$ implies
that the weight $\omega(\Pi)$ of $\Pi$ satisfies $\omega(\Pi) \le t \cdot \delta(r(u),r(v)) \le \tau_j$.
Clearly, the weight of each edge of $\Pi$ is bounded above by $\omega(\Pi) \le \tau_j$.
By construction (see Section \ref{s:jlevel}), $G^*_j$ contains all the edges of $G'_j$ with weight at most $\tau_j$.
It follows that
all edges of $\Pi$ belong to $G^*_j$. Observation \ref{obstretch2} follows.
\QED
\end{proof}
Next, we continue with the proof of Lemma \ref{lmstretch2}.

Let $p,q \in Q$.
Suppose first that $\delta(p,q) \le \frac{L}{n} < \frac{\tau_0}{t}$. 
Note that the graph $\tilde G_0 = G^*_0$
belongs to  $\tilde G$.
By Observation \ref{obstretch2} for $j = 0$,
there is a $t$-spanner path in $\tilde G_0$  between $p$ and $q$ with at most $\Lambda(n)$ edges.
\\We henceforth assume that $\delta(p,q) > \frac{L}{n}$. Let $j \in [\ell]$ be the index such that $\rho^{j-1} \cdot \frac{L}{n} < \delta(p,q) \le \rho^j \cdot \frac{L}{n}$,
i.e., $\xi_j < \delta(p,q) \le \rho \cdot \xi_j$.
Let $u = v_j(p)$ (respectively, $w = v_j(q)$) be the $j$-level host bag of $p$ (resp., $q$).
By Corollary \ref{key2}, the metric distance between every pair of points in
the same $j$-level bag is at most
$2 \cdot \mu_j < \xi_j.$
(See the beginning of Section \ref{section2} for the definitions of $\mu_j$ and $\xi_j$, and for other relevant notation.)
Since $\delta(p,q) > \xi_j$, it follows that $u \ne w$.
Consider the representative $r(u) \in Q_j$ (respectively, $r(w) \in Q_j$)  of $u$ (resp., $w$);
by Corollary \ref{key2}, $\delta(p,r(u)),\delta(q,r(w)) \le 2 \cdot \mu_j = 2 \cdot \frac{\xi_j}{c}$.
It follows that
\begin{eqnarray} \label{fin2}
\delta(r(u),r(w)) &\le& \delta(p,r(u)) + \delta(p,q) + \delta(q,r(w))
~\le~ \delta(p,q) + 4 \cdot \frac{\xi_j}{c}
\\ \nonumber &\le&
\rho \cdot \xi_j + 4 \cdot \frac{\xi_j}{c}
~=~ 2 \rho^j \cdot \frac{L}{n} \cdot \left(\frac{1}{2} + \frac{2}{\rho \cdot c}\right)
~\le~
2  \rho^j \cdot \frac{L}{n} \cdot   \left(1+\frac{1}{c}\right)
=~ \frac{\tau_j}{t}.
\end{eqnarray}
By Observation \ref{obstretch2}, there is a $t$-spanner path between $r(u)$ and $r(w)$ in $G^*_j$ (and thus in $\tilde G$) with at most $\Lambda(n)$ edges; denote this path by $\Pi^*(r(u),r(w))$, and observe that $\omega(\Pi^*(r(u),r(w))) \le t \cdot \delta(r(u),r(w))$.
Also, by Corollary \ref{key2}, the spanner $\tilde G$ contains a path $\Pi(p,r(u))$ (respectively, $\Pi(q,r(w))$) between $p$ and $r(u)$ (resp., between $q$ and $r(w)$) that has weight at most
$2 \cdot \mu_j = 2 \cdot \frac{\xi_j}{c}$
and $O(\log_\rho n + \alpha(\rho))$ edges.

Let $\Pi(p,q) = \Pi(p,r(u)) \circ \Pi^*(r(u),r(w)) \circ \Pi(q,r(w))$.
Note that $\Pi(p,q)$ is a path in $\tilde G$ between $p$ and $q$ that has weight $\omega(\Pi(p,q))$ at most
$t \cdot \delta(r(u),r(w)) + 4 \cdot \frac{\xi_j}{c}$ and
 $O(\Lambda(n) + \log_\rho n + \alpha(\rho))$
edges.
By Equation (\ref{fin2}), $t \cdot \delta(r(u),r(w)) ~\le~ t \cdot (\delta(p,q) + 4 \cdot \frac{\xi_j}{c}).$
Also, recall that $c = \lceil \frac{4 \cdot (t+1)}{\eps} \rceil$.
It follows that
\begin{eqnarray*}
\omega(\Pi(p,q)) &\le&
t \cdot \left(\delta(p,q) + 4 \cdot \frac{\xi_j}{c}\right) + 4 \cdot \frac{\xi_j}{c}
~\le~ \left(t + \frac{4 \cdot (t+1)}{c} \right) \cdot \delta(p,q) ~\le~ (t+\eps) \cdot \delta(p,q). 
\end{eqnarray*}
Hence, $\Pi(p,q)$ is a $(t+\eps)$-spanner path in $\tilde G$ between $p$ and $q$
with $O(\Lambda(n) + \log_\rho n + \alpha(\rho))$ edges. \QED
\end{proof}

\subsection{Degree} \label{deg:app}

In this section we bound the maximum degree of our spanner $\tilde G$. Specifically, we  will show that the degree
of  $\tilde G$ is $O(\Delta(n) \cdot \gamma + \rho)$. Recall that $\gamma = c_0 \cdot (\lceil \log_\rho t \rceil  + \lceil \log_\rho c \rceil + 1)$, for some constant $c_0$;
thus $\gamma = O(\log_\rho (t/\eps))$. In other words, we will get the desired degree bound of $O(\Delta(n) \cdot \log_\rho(t/\eps) + \rho)$.


As shown in Section \ref{disregard} (Corollary \ref{degbase}), the base edge set $\cB$ increases the degree bound by at most two units,
and so we may disregard it in this analysis.
%
We will also disregard the path-spanner $H$
and the 0-level auxiliary spanner $\tilde G_0$, which together contribute  $O(\Delta(n) + \rho)$ units to the degree bound.

The degree analysis is probably the most technically involved part of our proof.
We start with an intuitive sketch and then proceed to the rigorous proof.

Our algorithm makes a persistent  effort to  merge small bags together to form large bags.
Intuitively, a large bag is easy to handle because its kernel set contains enough (at least $\ell$) points to share the load.

If a $j$-level bag $v$ is large, i.e., $|Q(v)| \ge \ell$, then all its $\ell - j$ ancestors are large as well. 
Moreover, by Lemma \ref{kv}, the kernel set $K(v)$ of $v$ contains at least $\ell$ points.
A point $p \in Q(v)$ may get loaded by one of the
auxiliary spanners $\tilde G_j,\tilde G_{j+1},\ldots,\tilde G_\ell$
only if it is a representative of $v$ or of one of its ancestors. (In particular, the only points of $Q(v)$ that may get loaded belong
to the kernel set $K(v)$.)
However, we have at least $\ell$ points in $K(v) \subseteq Q(v)$ that can be used to ``represent $v$'' in at most $\ell$
auxiliary spanners. (One for $v$, and one for each of its ancestors.)
Hence it is not hard to share the load in such a way that each point
$p \in K(v)$ will be loaded by $O(1)$ auxiliary spanners. Consequently,
the maximum degree of points that belong to large bags are small.
(In fact, a point $p$ may, of course, belong to a small bag, and later
join a large bag. However, for the sake of this intuitive discussion one can
imagine that $p$ duplicates itself into $p^{large}$ and $p^{small}$,
where $p^{large}$ (respectively, $p^{small}$) belongs only to
large (resp., small) bags.)

For a small bag $v$, its representative $r(v)$ is loaded
by a $j$-level auxiliary spanner only if $r(v)$ is not isolated in $G^*_j$
(see Section \ref{s:jlevel}). It means that there exists another $j$-level
representative $r(u)$, such that $\delta(r(v),r(u)) \le \tau_j$;
in other words, $r(u)$ is close to $r(v)$.
Intuitively, we will want the bags $v$ and $u$ to merge, as this would
increase the pool of eligible representatives. We cannot merge them
right away, however, because this would blow up the weighted diameters of
the $(j+1)$-level bags. Instead we wait for $\gamma = O(1)$ levels,
and then merge $v$ into the $(j+\gamma)$-level ancestor $u'$ of $u$.
(Or the other way around, merge $u$ into the $(j+\gamma)$-level
ancestor $v'$ of $v$.) The weighted diameters of the $j$-level bags, are, roughly
speaking, proportional to the length $\mu_j$ of the $j$-level intervals,
i.e., they grow geometrically with the level $j$. Hence when $v$ is merged
into $u'$, it contributes only an  $\exp(-\Omega(\gamma))$-fraction to
the weighted diameter of the $(j+\gamma)$-level bag $u'$. In this way we keep
the weighted diameters of bags in check, while always maintaining sufficiently large
pools of eligible representatives. During the $\gamma$ levels
$j,j+1,\ldots,j+\gamma-1$, points of $v$ do accumulate some extra
degree; however, since $\gamma = O(1)$, they are overloaded by
at most a constant factor.

We next proceed to the rigorous analysis of the degree of the spanner $\tilde G$.

Consider an index $j \in [\ell]$.
In the next paragraph we shortly remind how the $j$-level auxiliary spanner $\tilde G_j$ is constructed.
Procedure $Process_j$ builds a spanner $G'_j$ for the set $Q_j$ of
representatives of all non-empty $j$-level bags, including zombies.
This spanner is then pruned to obtain the graph $G^*_j = (Q_j,E^*_j)$.
(By ``pruning'' we mean removing edges of weight greater than $\tau_j$.)
If $j > \ell - \gamma$, then $\tilde G_j = G^*_j$.	
Otherwise, Procedure $Process_j$ constructs the subset $\hat Q_j$ of $Q_j$ of
useful (i.e., non-empty and non-zombie) representatives, which are not isolated in $G^*_j$.
It then constructs the spanner $\check G_j = (\hat Q_j,\check E_j)$
for the set $\hat Q_j$, and prunes it to obtain the graph $\hat G_j
= (\hat Q_j, \hat E_j)$. The union $\tilde G_j = (Q_j,\tilde E_j = E^*_j \cup \hat E_j)$
is the $j$-level auxiliary spanner. See Section \ref{s:jlevel} for more details.

Observe that the maximum degree $\Delta(\tilde G_j)$ of the $j$-level auxiliary spanner,
$j \in [\ell]$, is bounded above by $\Delta(|Q_j|) + \Delta(|\hat Q_j|) \le 2 \cdot \Delta(n)$.
For future reference we summarize this observation below.
\begin{observation} \label{auxdeg}
For each index $j \in [\ell]$, $\Delta(\tilde G_j) = O(\Delta(n))$.
\end{observation}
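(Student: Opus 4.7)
The plan is to give a very short, direct argument that follows essentially from the definition of $\tilde G_j$ and the assumed degree guarantee $\Delta(\cdot)$ on Algorithm $BasicSp$.

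First, I would recall the construction of $\tilde G_j$ from Section \ref{s:jlevel}. For $j > \ell - \gamma$, the auxiliary spanner is just $\tilde G_j = G^*_j$, where $G^*_j$ is obtained from $G'_j$ by deleting edges of weight greater than $\tau_j$; since $G'_j$ is built by Algorithm $BasicSp$ on the point set $Q_j$ with $|Q_j| \le n$, and pruning can only decrease degrees, monotonicity of $\Delta(\cdot)$ gives $\Delta(\tilde G_j) = \Delta(G^*_j) \le \Delta(G'_j) \le \Delta(|Q_j|) \le \Delta(n)$.

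For $j \le \ell - \gamma$, we additionally have $\tilde G_j = G^*_j \cup \hat G_j$, where $\hat G_j$ is the pruning of the spanner $\check G_j$ produced by Algorithm $BasicSp$ on the point set $\hat Q_j \subseteq Q_j$. By the same reasoning, $\Delta(\hat G_j) \le \Delta(|\hat Q_j|) \le \Delta(n)$.

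Finally, for any vertex $q \in Q_j$, its degree in $\tilde G_j$ is at most the sum of its degrees in $G^*_j$ and $\hat G_j$, and therefore at most $2\Delta(n) = O(\Delta(n))$. There is no subtle obstacle here; the only ingredients are the definition of $\tilde G_j$ as a union of (at most) two pruned Algorithm $BasicSp$ outputs and the monotonicity of $\Delta(\cdot)$ stated in Theorem \ref{ourresult}.
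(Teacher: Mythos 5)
Your proof is correct and coincides with the paper's own argument: the paper's one-line justification before Observation \ref{auxdeg} bounds $\Delta(\tilde G_j)$ by $\Delta(|Q_j|) + \Delta(|\hat Q_j|) \le 2\Delta(n)$, using exactly the same decomposition of $\tilde G_j$ into the pruned graphs $G^*_j$ and $\hat G_j$ and the monotonicity of $\Delta(\cdot)$. The only extra point worth keeping in mind is that the representing edges inserted in Part II already lie in $\cE_j \subseteq \tilde E_j$, so they add nothing beyond the two pruned $BasicSp$ outputs, which is why the two-term bound suffices.
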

Observation \ref{auxdeg} implies directly that $\Delta(\tilde G) = O(\log n \cdot \Delta(n) + \rho)$.
Such a bound on the maximum degree can, in fact, be achieved by a much simpler construction.
(See Section 1.3 for its outline.) In this section we show that our much more intricate construction
guarantees $\Delta(\tilde G) = O(\Delta(n) \cdot \log_\rho(t/\eps) + \rho)$. As $t$ is typically a constant,
and $\rho$ and $\eps$ can be set as constants, this would essentially imply that $\Delta(\tilde G) = O(\Delta(n))$.
\begin{definition}
A $j$-level bag $v$ is called \emph{active} if its representative
$r(v)$ is not isolated in $G^*_j$. Otherwise it is called \emph{passive}.
\end{definition}
Note that if $v$ is passive, then its representative $r(v)$ \emph{is not loaded}
during the $j$-level processing, i.e., $load_j(r(v)) = 0$. On the other hand,
for an active bag $v$, its representative \emph{is loaded}, i.e., $load_j(r(v)) = 1$.
Recall that $v$ is a growing bag if $|\chi(v)| \ge 2$. (See Section \ref{sec24}.)

Recall also (see the beginning of Section \ref{section2}) that the relation
step-parent - step-child among bags of $\cF$ defines another forest $\hat \cF$
over the same set of bags.
Specifically, 
a bag $v$ is a parent of $u$ in $\hat \cF$ iff $u \in \chi(v)$, i.e., $u$ is an extended child of $v$ (either
a surviving child or a step-child of $v$). We denote the parent-child
relation in $\hat \cF$ by $\hat \pi(\cdot)$, i.e.,  we write $v = \hat \pi(u)$.
Note that a bag of level $j$ in $\cF$ has level $j$ in $\hat \cF$ as well.

Note that the forests $\cF$ and $\hat \cF$ are very similar.
The only bags $v$ that have step-parents (different from
their parents) are disappearing zombies.
We summarize this observation below.
\begin{observation} \label{notdis}
For a bag $v \in \cF_j, j \in [\ell-1]$,
which is not a disappearing zombie,
$\pi(v) = \hat \pi(v)$.
\end{observation}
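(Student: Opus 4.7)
The plan is to unpack both forest structures and observe that they can only differ at disappearing zombies. Recall that $\hat\pi(v)$ is defined to be the unique $(j+1)$-level bag $u$ such that $v \in \chi(u) = \cS(u) \cup \cJ(u)$. By the definitions in Section \ref{sec21}, either $v$ is a surviving child of $\pi(v)$, i.e., $v \in \cS(\pi(v))$, in which case $\hat\pi(v) = \pi(v)$; or $v$ belongs to $\cJ(u)$ for some $u \in \cF_{j+1}$ with $u \ne \pi(v)$, in which case $\hat\pi(v) = u \ne \pi(v)$. These are the only two possibilities, so it suffices to rule out the second one when $v$ is not a disappearing zombie.

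To do so, I would trace the only mechanism by which a bag is placed into $\cJ(u)$ for a bag $u$ different from its $\cF$-parent: this happens exclusively through the attachment mechanism described in Section \ref{thealg} and executed in Part II of Procedure $Process_{j'}$ (Section \ref{s:jlevel}). Concretely, an attachment $\cA(u^{(0)},v^{(0)})$ at some level $j' = j - (\gamma-1)$ causes the $(\gamma-1)$-th ancestor $v^{(\gamma-1)}$ of $v^{(0)}$ to disintegrate from $\pi(v^{(\gamma-1)})$ and join the actual adopter $u^{(\gamma)}$; by the labeling rules, this $v^{(\gamma-1)}$ is designated a disappearing zombie. No other event in Algorithm $LightSp$ ever changes the parent of a bag from its $\cF$-parent, so any bag that lies in $\cJ(u)$ for some $u \ne \pi(v)$ must be of this form, hence a disappearing zombie.

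Taking the contrapositive, if $v \in \cF_j$, $j \in [\ell-1]$, is not a disappearing zombie, then $v$ must fall into the first case above, so $v \in \cS(\pi(v))$ and therefore $\hat\pi(v) = \pi(v)$. The observation follows.

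There is no real technical obstacle here; the only care required is in making precise the claim that the attachment mechanism is the \emph{sole} source of discrepancy between $\cF$ and $\hat\cF$, which is a direct consequence of the construction in Sections \ref{thealg}--\ref{s:jlevel}.
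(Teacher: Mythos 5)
Your argument is correct and matches the paper's own (implicit) justification: the paper simply notes, in the sentence preceding the observation, that the only bags whose $\hat\cF$-parent differs from their $\cF$-parent are disappearing zombies, which is exactly the dichotomy you make explicit. You spell out the two cases ($v\in\cS(\pi(v))$ versus $v\in\cJ(u)$ for $u\ne\pi(v)$) and trace the second back to the attachment mechanism of Section \ref{thealg}, which is the same observation with the bookkeeping filled in.
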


We say that a bag $w$ is an \emph{$\cF$-descendant}
(respectively, \emph{$\cF$-ancestor}) of the bag $u$ if it is
a descendant (resp., ancestor) of $u$ in $\cF$.
Similarly, we say that a bag $w$ is an \emph{$\hat \cF$-descendant}
(respectively, \emph{$\hat \cF$-ancestor}) of the bag $u$ if it is
a descendant (resp., ancestor) of $u$ in $\hat \cF$.

\begin{definition}
For a positive integer parameter $\beta$,
we say that a bag $v \in \cF_j$ is \emph{$\beta$-prospective}, if one of its
$\beta$ immediate $\hat \cF$-ancestors $\hat v^{(1)} = \hat \pi(v),\hat v^{(2)} = \hat \pi(\hat v^{(1)}),
\ldots,\hat v^{(\beta)} = \hat \pi(\hat v^{(\beta-1)})$ is a growing bag.
For $j > \ell - \beta$, all bags $v \in \cF_j$ are called $\beta$-prospective.
We also use the shortcut \emph{prospective} for $\gamma$-prospective.
\end{definition}

Next, we argue that a small safe bag is necessarily prospective.
Such a bag is either crowded, or a zombie, or an incubator.
We start with the case of a crowded bag.
\begin{lemma} \label{crow}
Let $j\in[\ell]$ be an arbitrary index.
Any crowded bag $v \in \cF_j$ is prospective.
\end{lemma}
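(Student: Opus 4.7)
The plan is to exploit the fact that the cage-ancestor $v^{(\gamma)} \in \cF_{j+\gamma}$ is the common $\cF$-meeting point of $v$ and any other useful bag $w \in \cC(v)$ (guaranteed to exist since $v$ is crowded), and to transfer this structure to $\hat \cF$: if I can show that the first $\gamma$ $\hat \cF$-ancestors of both $v$ and $w$ coincide with their respective $\cF$-ancestors, then the two $\hat \cF$-chains must merge at or below level $j+\gamma$, and the earliest merge-point will have at least two distinct non-empty extended children, hence be growing.

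The main obstacle is proving $\hat v^{(i)} = v^{(i)}$ for $i \in [1,\gamma]$. By Observation \ref{notdis} it suffices to verify that none of $v, v^{(1)}, \ldots, v^{(\gamma-1)}$ is a disappearing zombie. The case $i=0$ is immediate since $v$ is useful. For $i \in [1, \gamma-1]$, suppose for contradiction that $v^{(i)}$ is a disappearing zombie. Lemma \ref{newl} then supplies a unique attached $\cF$-descendant $\tilde v \in \cF_{j+i-\gamma+1}$ of $v^{(i)}$, together with a chain $\tilde v^{(0)} = \tilde v, \tilde v^{(1)}, \ldots, \tilde v^{(\gamma-1)} = v^{(i)}$ of bags all identical to $\tilde v$ in point set, the top $\gamma-1$ of which are zombies. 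Applying Lemma \ref{tat3} at each $\tilde v^{(k)}$ for $k = 2, \ldots, \gamma-1$ (whose $\cF$-child $\tilde v^{(k-1)}$ is a zombie) forces every other $\cF$-child of $\tilde v^{(k)}$ to be empty; at the bottom of the chain, the identity $Q(\tilde v^{(1)}) = Q(\tilde v)$ combined with Lemma \ref{tat3} applied to $\tilde v^{(1)}$ itself rules out any non-empty $\cF$-sibling of $\tilde v$ (it would have to be a disappearing zombie, but then $\tilde v$ would be forced empty). Consequently, the non-empty $\cF$-subtree of $v^{(i)}$ above the attached-bag level is exactly the chain $\tilde v, \tilde v^{(1)}, \ldots, \tilde v^{(\gamma-1)}$, so the only non-empty $\cF$-descendant of $v^{(i)}$ at level $j$ is $\tilde v^{(\gamma-1-i)}$. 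Thus $v = \tilde v^{(\gamma-1-i)}$: if $i = \gamma-1$ then $v = \tilde v$ is attached, contradicting $v$ being crowded (attached bags are lonely); if $i < \gamma-1$, then $v$ is one of the intermediate zombies, contradicting $v$ being useful. This handles the delicate bottom of the zombie chain, which is where Lemma \ref{tat3} alone does not quite suffice.

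The same argument applies verbatim to $w$, which is also crowded (as $\cC(w)=\cC(v)$) and useful, yielding $\hat w^{(i)} = w^{(i)}$ for $i \in [0,\gamma]$. In particular $\hat v^{(\gamma)} = v^{(\gamma)} = w^{(\gamma)} = \hat w^{(\gamma)}$, so the $\hat \cF$-chains of $v$ and $w$ coincide at level $j+\gamma$. Let $k \in [1,\gamma]$ be the smallest index with $\hat v^{(k)} = \hat w^{(k)}$. By minimality $\hat v^{(k-1)} \neq \hat w^{(k-1)}$, and both are non-empty extended children of $\hat v^{(k)}$ in $\hat \cF$, so $|\chi(\hat v^{(k)})| \ge 2$. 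Therefore $\hat v^{(k)}$ is a growing $\hat \cF$-ancestor of $v$ at distance at most $\gamma$, certifying that $v$ is $\gamma$-prospective.
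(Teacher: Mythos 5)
Your proof takes a genuinely different route from the paper's. The paper shows by a short induction that $v$ and its useful cage-mate $u$ remain crowded (hence useful, hence not zombies) at each level up to their least common $\cF$-ancestor $w$, which lies at most $\gamma$ levels above $v$; this immediately gives $|\cS(w)|\ge 2$ and $w$ growing, while also delivering the needed fact that the $\cF$-chain and $\hat\cF$-chain coincide up to $w$. You instead attack the disappearing-zombie condition directly, tracing the identical-bag chain supplied by Lemma \ref{newl} and forcing $v$ to lie on it. Both approaches yield the chain-matching of $\cF$ and $\hat\cF$ and then extract a growing ancestor in essentially the same way.

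There is, however, a gap in your chain-exactness step. You invoke Lemma \ref{tat3} at each $\tilde v^{(k)}$ for $k=2,\ldots,\gamma-1$ (and implicitly at $\tilde v^{(1)}$), but Lemma \ref{tat3} is stated only when the zombie child sits at level at least $\gamma$. Here $\tilde v^{(k-1)}$ sits at level $j+i-\gamma+k$, which can be strictly below $\gamma$ — for instance whenever $\gamma\ge 3$, $j+i$ is near its minimum value $\gamma$, and $k$ is small. As written, that step fails in this range. The gap is repairable: at such low levels a non-empty disappearing-zombie sibling cannot exist anyway (a disappearing zombie's level must be at least $\gamma$, as noted in Lemma \ref{newl}), so the identity $Q(\tilde v^{(k)})=Q(\tilde v^{(k-1)})$ already rules out any other non-empty child. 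A cleaner fix dispenses with chain-exactness altogether: by the first assertion of Lemma \ref{tat}, the useful bag $v$ admits a path of non-zombie bags down to level $1$; this path crosses the attached-bag level $j+i-\gamma+1$ and produces a useful level-$(j+i-\gamma+1)$ $\cF$-descendant of $v^{(i)}$, which by Lemma \ref{newl} must be $\tilde v$. Hence $v$ is an $\cF$-ancestor of $\tilde v$, so it lies on the chain, and your two-case contradiction follows without ever invoking Lemma \ref{tat3}. (You also omit the trivial case $j>\ell-\gamma$, where $v$ is $\gamma$-prospective by definition; this should be stated.)
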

\begin{proof}
The case $j > \ell - \gamma$ is trivial. We henceforth assume that $j \le \ell-\gamma$.
Since $v$ is a crowded $j$-level bag, its cage $\cC(v)$ contains another useful
$j$-level bag $u$.
Note that $u$ is crowded as well,
and thus both $v$ and $u$ are safe.
Let $w \in \cF_k$ be the least common $\cF$-ancestor
of $v$ and $u$. The index $k$ satisfies $j+1 \le k \le j + \gamma$.
Write $v = v^{(0)},u = u^{(0)}$,
and consider the $(k-j)$ immediate $\cF$-ancestors of $v$ and $u$,
$v^{(1)} = \pi(v^{(0)}),\ldots,v^{(k-j)} = \pi(v^{(k-j-1)}) = w$ and
$u^{(1)} = \pi(u^{(0)}),\ldots,u^{(k-j)} = \pi(u^{(k-j-1)}) = w$,
respectively.
By induction on $(k-j)$, it is easy to see that all
these bags, except maybe $w$ itself, are crowded and safe.
Hence none of them is a zombie, and so for each index
$i, 1 \le i \le k-j, v^{(i-1)} \in \cS(v^{(i)}) \subseteq \chi(v^{(i)}),
u^{(i-1)} \in \cS(u^{(i)}) \subseteq \chi(u^{(i)})$.
It follows that $v^{(k-j-1)}$ (respectively, $u^{(k-j-1)}$) is
an $\hat \cF$-ancestor of $v$ (resp., $u$),
and $v^{(k-j-1)},u^{(k-j-1)} \in \cS(w) \subseteq \chi(w)$.
Hence $w$ is the least common $\hat \cF$-ancestor of $v$
and $u$, and $|\chi(w)| \ge |\cS(w)| \ge 2$. Thus $w$ is
a growing bag, and $v$ is a prospective one.
See Figure \ref{fig4} for an illustration.
\begin{figure*}[htp]
\begin{center}
\begin{minipage}{\textwidth} 
\begin{center}
\setlength{\epsfxsize}{2.7in} \epsfbox{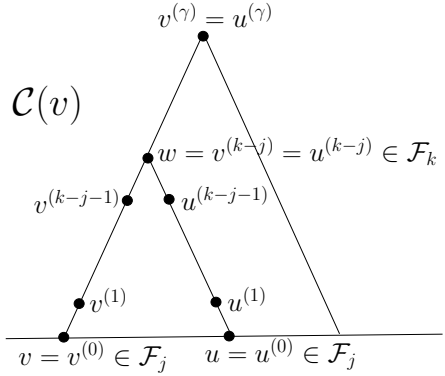}
\end{center}
\end{minipage}
\caption[]{ \label{fig4} \sf The cage $\cC(v)$ of $v$ and $u$.
All vertices $v^{(0)},v^{(1)},\ldots,v^{(k-j-1)},u^{(0)},u^{(1)},\ldots,u^{(k-j-1)}$
are crowded, and thus they are not zombies. Thus $w$ is growing.}
\end{center}
\end{figure*}
\QED
\end{proof}

Next, we consider the case of a zombie bag.
\begin{lemma} \label{zomb}
Let $j \in [\ell]$.  A zombie bag $v \in \cF_j$ is prospective.
\end{lemma}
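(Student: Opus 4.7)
My plan is to reverse-engineer the attachment that labelled $v$ as a zombie and identify the actual adopter of that attachment as a growing $\hat\cF$-ancestor within $\gamma$ hops of $v$. First, I would extract a canonical witness: since $v \in \cF_j$ is a zombie, some $Process_i$ performed an attachment $\cA(u,w)$ in which $w$ is the attached bag and $v$ is one of the $\gamma-1$ immediate $\cF$-ancestors of $w$. Writing $v = w^{(k)}$ with $k = j - i \in [1,\gamma-1]$, the actual adopter is $u' = u^{(\gamma)}$ and the unique disappearing zombie of this attachment is $w^{(\gamma-1)}$.

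Second, I would climb from $v$ in $\hat\cF$ and show that we arrive at $u'$ after exactly $\gamma - k$ hops. For every $k' \in [k,\gamma-2]$, the bag $w^{(k')}$ is a (non-disappearing) zombie, so Observation \ref{notdis} yields $\hat\pi(w^{(k')}) = \pi(w^{(k')}) = w^{(k'+1)}$; iterating walks us through $w^{(k+1)}, w^{(k+2)}, \ldots, w^{(\gamma-1)}$. A final $\hat\cF$-step from the disappearing zombie $w^{(\gamma-1)}$ lands on its step-parent $u'$ by construction. Hence the $(\gamma-k)$-th $\hat\cF$-ancestor of $v$ is $u'$, and $\gamma - k \in \{1,\ldots,\gamma-1\} \subseteq [\gamma]$.

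Third, I would certify that $u'$ is growing, i.e., $|\chi(u')| \ge 2$. The disappearing zombie $w^{(\gamma-1)}$ is by construction a step-child of $u'$, giving $w^{(\gamma-1)} \in \cJ(u')$. The incubator $u^{(\gamma-1)}$ cannot be a zombie by Lemma \ref{tat}(2), so it is not a disappearing zombie; Observation \ref{notdis} then yields $\hat\pi(u^{(\gamma-1)}) = \pi(u^{(\gamma-1)}) = u'$. Since $u^{(\gamma-1)}$ inherits the non-empty point set $Q(u)$ up the incubator chain, it is a non-empty surviving child of $u'$, i.e., $u^{(\gamma-1)} \in \cS(u')$. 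Therefore $|\chi(u')| \ge 2$, so $u'$ is growing and $v$ is prospective. The subtlety I expect to require the most care is the boundary case $k = \gamma - 1$: here $v$ itself is the disappearing zombie, the index range $[k,\gamma-2]$ is empty, and the walk collapses to the single $\hat\cF$-hop $\hat\pi(v) = u'$; one simply checks that the formula $\gamma - k$ still predicts the correct count of $1$. A minor secondary point is that $v$ could in principle acquire its zombie label from more than one attachment, but any such attachment furnishes a growing $\hat\cF$-ancestor within $\gamma$ hops, so the choice is immaterial.
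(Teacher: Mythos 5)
Your proof is correct and follows essentially the same route as the paper's: identify the attachment $\cA(u,w)$ responsible for labeling $v$, walk up from $v$ in $\hat\cF$ through the non-disappearing zombie chain to the disappearing zombie $w^{(\gamma-1)}$ and then to the actual adopter $u'$, and certify $|\chi(u')|\ge 2$ by exhibiting $w^{(\gamma-1)}\in\cJ(u')$ and $u^{(\gamma-1)}\in\cS(u')$ (their distinctness being automatic since $\cJ$ and $\cS$ are disjoint, which is the paper's Lemma \ref{tat} invocation in slightly different clothing). The paper additionally uses Lemma \ref{newl} to emphasize that the bags on the chain are identical, but you correctly observed that this stronger fact is not needed here.
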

\begin{proof}
We only need to prove the assertion for $j \le \ell - \gamma$.
By construction, there exists an attached bag $w = w^{(0)}$, which is an $\cF$-descendant
of $v$. By Lemma \ref{newl}, the bags $w = w^{(0)}, w^{(1)} = \pi(w^{(0)}),
\ldots,w^{(i)} = v,\ldots,w^{(\gamma-1)}$ are identical, where $i \in [\gamma-1]$
is some index. The bag $w^{(\gamma-1)}$ is a disappearing zombie.

Observe that $w = w^{(0)} \in \cF_{j-i}$. There exists a bag $u \in \cF_{j-i}$,
so that the attachment $\cA(u,w)$ took place during the $(j-i)$-level processing.
The bag $u = u^{(0)}$ is the initiator of this attachment.
Denote  by $u^{(1)} = \pi(u^{(0)}),
\ldots,u^{(\gamma)} = \pi(u^{(\gamma-1)})$ the $\gamma$ immediate
$\cF$-ancestors of the initiator $u$. The bags
$u^{(1)},
\ldots,u^{(\gamma-1)}$ are incubators, and $u^{(\gamma)}$ is the
actual adopter. By Lemma \ref{tat}, neither of the incubator bags
$u^{(1)},\ldots,u^{(\gamma-1)}$ is a disappearing zombie.
Hence, by Observation \ref{notdis}, the actual adopter $u^{(\gamma)}$ is an $\hat \cF$-ancestor of
all the bags $u^{(0)},u^{(1)}, \ldots,u^{(\gamma-1)}$.
The initiator bag $u = u^{(0)}$ is non-empty, and thus the
incubators $u^{(1)},\ldots,u^{(\gamma-1)}$ are non-empty as well.
Hence for each index $h \in [0,\gamma-1], u^{(h)} \in \cS(u^{(h+1)}) \subseteq \chi(u^{(h+1)})$.

Moreover, the attached bag $w = w^{(0)}$ is non-empty.
Hence the zombie bags $w^{(1)},w^{(2)},\ldots,w^{(\gamma-1)}$
are non-empty as well, and for each
index $h \in [0,\gamma-2], w^{(h)} \in \cS(w^{(h+1)}) \subseteq \chi(w^{(h+1)})$.
Since the bags $w^{(0)},w^{(1)}, \ldots,w^{(\gamma-2)}$ are not disappearing zombies,
Observation \ref{notdis} implies that the disappearing zombie $w^{(\gamma-1)}$ is an $\hat \cF$-ancestor of
all these bags, 
and in particular,
of $v = w^{(i)}$. 

Finally, the disappearing zombie $w^{(\gamma-1)}$ is a step-child of
the actual adopter $u^{(\gamma)}$, i.e., $w^{(\gamma -1)} \in \cJ(u^{(\gamma)})
\subseteq \chi(u^{(\gamma)})$. Hence $u^{(\gamma)}$ is an $\hat \cF$-ancestor
of $v$. By Lemma \ref{tat}, $u^{(\gamma-1)} \ne w^{(\gamma-1)}$.
Hence $|\chi(u^{(\gamma)})| \ge 2$, i.e., $u^{(\gamma)}$ is a growing bag.
Thus the bag $v$ is prospective. See Figure \ref{fig7}.
\begin{figure*}[htp]
\begin{center}
\begin{minipage}{\textwidth} 
\begin{center}
\setlength{\epsfxsize}{5in} \epsfbox{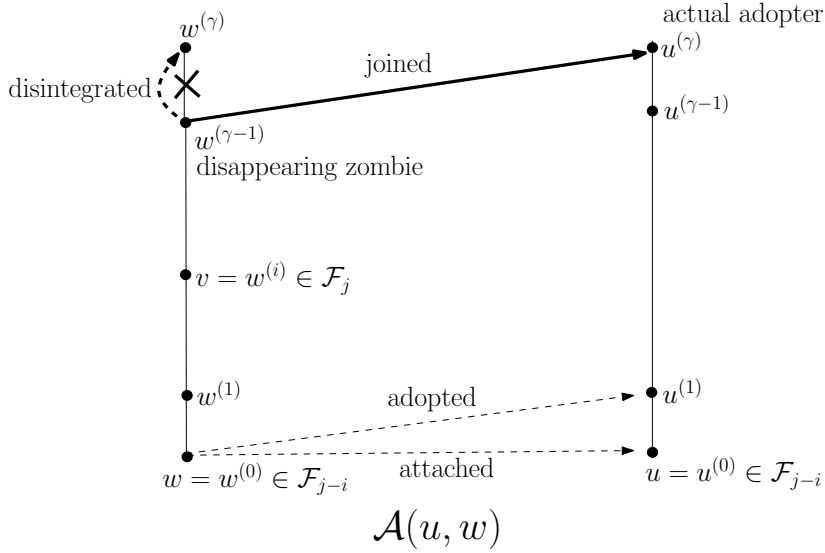}
\end{center}
\end{minipage}
\caption[]{ \label{fig7} \sf The case that $v = w^{(i)}$ is a zombie.
The bags $w^{(\gamma-1)}$ and $u^{(\gamma-1)}$ are two distinct
non-empty extended children of $u^{(\gamma)}$.
Thus $u^{(\gamma)}$ is a growing $\hat \cF$-ancestor of $v = w^{(i)}$.}
\end{center}
\end{figure*}
\QED
\end{proof}
A symmetric argument shows that an incubator bag is prospective as well.
\begin{lemma} \label{inc}
Let $j \in [\ell]$. An incubator bag $v \in \cF_j$ is prospective.
\end{lemma}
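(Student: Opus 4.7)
The plan is to mirror the argument used in Lemma \ref{zomb}, but in the ``symmetric'' direction: rather than chasing an attached descendant, we will chase the attachment initiator whose labelling made $v$ an incubator in the first place. The case $j > \ell - \gamma$ is vacuous by definition of prospective, so I will assume $j \le \ell - \gamma$.

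First I would identify the attachment responsible for the incubator label on $v$. By the labelling rules described in Section \ref{thealg}, the label ``incubator'' is assigned to a bag only as one of the $\gamma-1$ immediate $\cF$-ancestors of some attachment initiator $u^{(0)}$. Hence there exist an index $i \in [\gamma-1]$ and an attachment $\cA(u^{(0)}, w^{(0)})$ carried out at level $k = j-i$, such that $v = u^{(i)}$, where $u^{(h)} = \pi(u^{(h-1)})$ for $h \in [\gamma]$; here $u^{(1)},\ldots,u^{(\gamma-1)}$ were all labelled as incubators during the $k$-level processing, $u^{(\gamma)}$ became the actual adopter, and $w^{(\gamma-1)} = \pi(w^{(\gamma-2)})$ became the disappearing zombie joining $u^{(\gamma)}$.

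Next I would show that $u^{(\gamma)}$ is an $\hat\cF$-ancestor of $v$ at $\hat\cF$-distance $\gamma-i \le \gamma$. The initiator $u^{(0)}$ is non-empty (it takes part in an attachment), and each $u^{(h)}$ with $h \in [\gamma-1]$ is an incubator, hence by Lemma \ref{tat} is not a zombie and in particular not a disappearing zombie. Observation \ref{notdis} therefore yields $\hat\pi(u^{(h)}) = \pi(u^{(h)}) = u^{(h+1)}$ for every $h \in [0,\gamma-1]$. Iterating this equality from $h = i$ up to $h = \gamma-1$ places $u^{(\gamma)}$ among the first $\gamma$ immediate $\hat\cF$-ancestors of $v = u^{(i)}$.

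Finally I would verify that $u^{(\gamma)}$ is a growing bag, i.e., $|\chi(u^{(\gamma)})| \ge 2$. The incubator $u^{(\gamma-1)}$ is a non-empty, non-zombie $\cF$-child of $u^{(\gamma)}$, so $u^{(\gamma-1)} \in \cS(u^{(\gamma)}) \subseteq \chi(u^{(\gamma)})$; the disappearing zombie $w^{(\gamma-1)}$ is by definition a step-child of the actual adopter $u^{(\gamma)}$, so $w^{(\gamma-1)} \in \cJ(u^{(\gamma)}) \subseteq \chi(u^{(\gamma)})$; and Corollary \ref{tat4} ensures $\pi(w^{(\gamma-1)}) \ne u^{(\gamma)}$, so $w^{(\gamma-1)} \ne u^{(\gamma-1)}$. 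Hence $u^{(\gamma)}$ is growing, which makes $v$ prospective and completes the proof. The only mildly subtle point is the second step: one has to apply Observation \ref{notdis} along the entire chain $u^{(i)}, u^{(i+1)}, \ldots, u^{(\gamma-1)}$, and this rests precisely on the second assertion of Lemma \ref{tat}, which rules out that any of these incubators is simultaneously a (disappearing) zombie.
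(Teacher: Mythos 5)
Your proof is correct and is precisely the "symmetric argument" to Lemma \ref{zomb} that the paper alludes to but does not write out. You correctly trace the incubator label on $v$ back to an attachment $\cA(u^{(0)}, w^{(0)})$ at level $j-i$ with $v = u^{(i)}$, climb the incubator chain via Observation \ref{notdis} (applicable by the second assertion of Lemma \ref{tat}) to reach $u^{(\gamma)}$, and verify $u^{(\gamma)}$ is growing because it has both a surviving child $u^{(\gamma-1)}$ and a joining step-child $w^{(\gamma-1)}$, which are distinct. One small remark: you invoke Corollary \ref{tat4} to separate $w^{(\gamma-1)}$ from $u^{(\gamma-1)}$, which works, but it is slightly more direct to note that $u^{(\gamma-1)}$ is an incubator while $w^{(\gamma-1)}$ is a zombie, and no bag carries both labels by Lemma \ref{tat}; this is the distinctness argument the paper itself uses at the end of the proof of Lemma \ref{zomb}.
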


Lemmas \ref{crow}, \ref{zomb} and \ref{inc}
imply the following statement.
\begin{corollary} \label{safepros}
Let $j \in [\ell]$. A small safe bag $v \in \cF_j$ is prospective.
\end{corollary}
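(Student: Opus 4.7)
The plan is to prove Corollary \ref{safepros} by a direct case analysis on the three conditions that make a bag safe, invoking the three preceding lemmas. Recall that a non-empty bag $v$ is safe iff it satisfies at least one of: (1) $v$ is large, (2) $v$ is crowded, or (3) $v$ is an incubator or a zombie. Since the corollary hypothesizes that $v$ is small, condition (1) is ruled out from the outset, so $v$ must fall under (2) or (3).

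First I would dispose of the crowded case by citing Lemma \ref{crow} verbatim: any crowded bag (small or not) is prospective. Next, in the case that $v$ is labeled as a zombie, Lemma \ref{zomb} directly yields that $v$ is prospective. The remaining case is $v$ labeled as an incubator, which is handled by Lemma \ref{inc}. These three sub-cases exhaust the possibilities given by the definition of safe together with the smallness assumption, and in each sub-case the conclusion is exactly that $v$ is prospective.

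The only subtle point to check is that these cases are indeed exhaustive for a non-empty small safe bag, i.e., that there is no fourth way for a bag to be safe. Inspecting the definition of safe right before Lemma \ref{crow}, safety is defined through the explicit disjunction of conditions (1)--(3), so once (1) is excluded by smallness, one of (2) or (3) must hold. Moreover, by Lemma \ref{tat} the labels ``zombie'' and ``incubator'' are mutually exclusive, so the case split in (3) is unambiguous. Hence the three lemmas cover every small safe bag, and the corollary follows immediately; there is no real obstacle beyond recognizing that this is a pure case analysis with no interaction between the cases.
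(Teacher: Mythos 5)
Your proof is correct and takes exactly the same approach as the paper: the paper's entire proof is the one-line remark that Lemmas \ref{crow}, \ref{zomb} and \ref{inc} imply the corollary, and your case analysis (rule out largeness from smallness, then apply the appropriate one of the three lemmas to the remaining crowded/zombie/incubator cases) is simply the explicit unpacking of that implication.
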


For a positive integer parameter $\beta$, we say that a bag $v$
is \emph{$\beta$-safe-prospective} if either $v$ or one of
its $\beta$ immediate $\hat \cF$-ancestors is safe.
(We remark that for $v$ to be $\beta$-prospective, one of its $\beta$ immediate
ancestors must be growing, i.e., it is not enough for $v$ to be a growing bag.
This is not the case for a $\beta$-safe-prospective bag. That is, if $v$ is safe,
then it is $\beta$-safe-prospective.)

Denote $\kappa = \lceil \log_\rho t \rceil$ and $\eta = 2\kappa + 3$.
Recall that $\gamma = c_0 \cdot (\kappa + \lceil \log_\rho c \rceil + 1)$.
Next, we argue that any active small bag is either $\eta$-prospective
or $\eta$-safe-prospective.
Before proving it we shortly outline the main idea of our degree analysis.
Intuitively, large bags are easy to handle because they contain enough points to share the load.
As a result of this load-sharing, no point in a large bag ever becomes overloaded.
To handle small bags we show that once a small bag becomes active (i.e., its points start being loaded),
it will soon get merged into a larger bag. These merges will allow for a more uniform load-sharing,
resulting in a small (constant) load for all points in $Q$.

\begin{lemma} \label{sp}
Let $j \le \ell - \eta$, and $u \in \cF_j$ be an active small bag
that is not $\eta$-prospective. Then $u$ is $\eta$-safe-prospective.
\end{lemma}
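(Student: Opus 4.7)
I would proceed by contradiction and suppose $u$ is not $\eta$-safe-prospective; hence each of $u, \hat u^{(1)}, \ldots, \hat u^{(\eta)}$ is risky --- small, lonely, and neither an incubator nor a zombie. Combining this with the hypothesis that $u$ is not $\eta$-prospective, each $\hat u^{(i)}$ with $i \in [1,\eta]$ is also stagnating, so by the observation between Lemma \ref{tat} and Claim \ref{tate1} (a stagnating non-zombie has $|\cS(\cdot)|=1$ and $\cJ(\cdot)=\emptyset$), the $\hat\cF$-chain above $u$ coincides with the $\cF$-chain and the point set, kernel, and representative all propagate unchanged: $Q(\hat u^{(i)})=Q(u)$, $K(\hat u^{(i)})=K(u)$, and $r(\hat u^{(i)})=r(u)$ for every $i \in [0,\eta]$. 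My plan is then to produce an index $i \in [1,\eta-1]$ at which this common representative $r(u)$ is non-isolated in the attachment graph $G_{j+i}$; since $\hat u^{(i)}$ is risky, Procedure $Attach$ must cover $r(u)$ by a star of $\Gamma_{j+i}$, and the label-assignment rules then relabel $\hat u^{(i+1)}$ as a zombie or an incubator, contradicting its assumed riskiness.

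The easy case is at level $j$ itself: activity of $u$ yields a $G^*_j$-neighbor $r(w)$ with $\delta(r(u),r(w)) \le \tau_j$, and if $w$ is useful then $r(u),r(w)\in\hat Q_j$ and the edge already lies in $G_j = \tilde E_j(\hat Q_j)$, so Procedure $Attach$ handles $r(u)$ already at level $j$. The hard case is when every $G^*_j$-neighbor of $r(u)$ is a zombie representative. For such a $w$, Lemma \ref{newl} produces a unique attached bag $\tilde w \in \cF_{j'}$ with $j' \in [j-\gamma+1,j-1]$ identical to $w$, an initiator $s \in \cF_{j'}$ for which the attachment $\cA(s,\tilde w)$ supplied the representing edge $(r(w),r(s))$ of weight at most $\tau_{j'} \le \tau_j$, and an actual adopter $s^{(\gamma)}$ at level $j'+\gamma$. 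I would then ascend to level $j+i$ with $i=\kappa+2$ and exhibit a useful witness bag $V \in \cF_{j+i}$ whose representative lies within $2\tau_j+2\mu_{j+i}$ of $r(u)$: a case split on $j-j'$ delivers $V$ as the $\cF$-ancestor $s^{(i+(j-j'))}$ of $s$ (an incubator whose representative is still $r(s)$, giving $\delta(r(u),r(V)) \le 2\tau_j$) when $i+(j-j') \le \gamma$, or as the $\hat\cF$-ancestor of $\tilde w$ at level $j+i$ (which has been absorbed into $s^{(\gamma)}$ or one of its $\hat\cF$-ancestors, with representative displaced from $r(w)$ by at most $2\mu_{j+i}$ via Corollary \ref{key2}) when $i+(j-j') \ge \gamma$.

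The decisive calculation is $t \cdot (2\tau_j+2\mu_{j+i}) \le \tau_{j+i}$; dividing by $\tau_j$ and using $\tau_j = 2\mu_j\rho t(c+1)$, this reduces to $2t + \rho^{i-1}/(c+1) \le \rho^i$, which holds for every $i \ge \kappa+2$ because $\rho^\kappa \ge t$ and $\rho \ge 2$. With this in hand, both $r(u)$ and $r(V)$ lie in $\hat Q_{j+i}$ and the $t$-spanner $\hat G_{j+i}$ on $\hat Q_{j+i}$ supplies a path between them of total weight at most $\tau_{j+i}$, so every edge survives pruning and $r(u)$ is non-isolated in $\hat G_{j+i} \subseteq G_{j+i}$. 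Procedure $Attach$ then covers $r(u)$ by some star of $\Gamma_{j+i}$, and --- whether $r(u)$ is the leaf or the center --- $\hat u^{(i+1)}$ must be labeled as a zombie or an incubator; since $i+1 \le \kappa+3 \le 2\kappa+3 = \eta$, this contradicts the assumed riskiness of $\hat u^{(i+1)}$. The main obstacle I expect is the case analysis on $j-j'$ needed to pin down the nearby useful witness $V$ at the chosen intermediate level, together with the arithmetic verifying the distance bound; once $V$ is in hand the argument closes uniformly via Procedure $Attach$.
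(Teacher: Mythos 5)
Your overall plan is sound and follows the same arc as the paper's proof: assume all of $\hat u^{(0)},\ldots,\hat u^{(\eta)}$ are risky, observe they are stagnating with a common representative $r(u)$, conclude $r(u)$ is isolated in each attachment graph $G_h$, then locate a nearby useful representative at some level $j+i$ with $i+1\le\eta$, whence a $t$-spanner path of length $\le\tau_{j+i}$ makes $r(u)$ non-isolated in $\hat G_{j+i}\subseteq G_{j+i}$, a contradiction. Your threshold arithmetic $t\cdot(2\tau_j+2\mu_{j+i})\le\tau_{j+i}$ for $i\ge\kappa+2$ is correct. However, the second branch of your case split has a genuine gap, and there is a lesser justification error in the first branch.

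The gap: in the branch $i+(j-j')\ge\gamma$, you take $V$ to be the $\hat\cF$-ancestor of $\tilde w$ at level $j+i$, which (as you note) coincides with $s^{(\gamma)}$ or one of its $\hat\cF$-ancestors. You never verify that $V$ is useful, and indeed it need not be: the actual adopter $s^{(\gamma)}\in\cF_{j'+\gamma}$ may itself become an attached bag (or one of its nearby $\hat\cF$-ancestors may), in which case the $\cF$-ancestors at levels $j'+\gamma+1,\ldots,j'+2\gamma-1$ are labeled as zombies, and your $V$ at level $j+i=j+\kappa+2$ (which lies at most $\kappa+1$ levels above $s^{(\gamma)}$) can land squarely in that zombie band. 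In that situation $r(V)\notin\hat Q_{j+i}$ and your closing step ``both $r(u)$ and $r(V)$ lie in $\hat Q_{j+i}$'' fails. This is precisely the scenario that forces the paper's Case 2, which must climb an additional $\kappa+1$ levels, to $m=j+2\kappa+2=j+\eta-1$, in order to find a still-alive incubator of the \emph{second} attachment; the value $\eta=2\kappa+3$ is chosen exactly to leave room for this. Your proof, stopping at $i=\kappa+2$, cannot absorb the second attachment, so the argument does not close.

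A smaller inaccuracy: in the branch $i+(j-j')\le\gamma$ you assert that the incubator $V=s^{(i+(j-j'))}$ has representative $r(V)=r(s)$. Incubator bags are not generally stagnating and do not inherit the representative of the initiator $s$; the right argument (and the one your ``decisive calculation'' already implicitly uses) is that $r(s)\in Q(s)\subseteq Q(V)$ and $r(V)\in Q(V)$, so Corollary~\ref{key2} gives $\delta(r(s),r(V))\le 2\mu_{j+i}$. Also, at $i+(j-j')=\gamma$ the bag $V=s^{(\gamma)}$ is the actual adopter, not an incubator, though it is still useful. These points do not break that branch, but the claim as written is incorrect.
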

\begin{proof}
The bag $u$ is active, and thus non-empty. Since $u$ is
not $\eta$-prospective, it follows that the $\eta$
immediate $\hat \cF$-ancestors of
$u = \hat u^{(0)}$, namely, $\hat u^{(1)}= \hat \pi(\hat u^{(0)}),\ldots,\hat u^{(\eta)} = \hat \pi(\hat u^{(\eta-1)})$
are stagnating bags. Hence all these bags are identical to $u$, and moreover,
they have the same representative as $u$, i.e., $r(u) = r(\hat u^{(0)}) = r(\hat u^{(1)})
= \ldots = r(\hat u^{(\eta)})$.
(See Section \ref{sec24}.)

Suppose for contradiction that all these bags $\hat u^{(0)},\hat u^{(1)},\ldots,\hat u^{(\eta)}$
are risky. Note that for each index $i \in [0,\eta]$, if
$\hat u^{(i)}$ is a zombie, then it must be safe.
Hence the bags $\hat u^{(0)},\hat u^{(1)},\ldots,\hat u^{(\eta)}$
are not zombies, and thus useful.
By Observation \ref{notdis},
$\hat u^{(1)} = u^{(1)},\ldots,
\hat u^{(\eta)} = u^{(\eta)}$, i.e., the $\eta$ immediate $\hat \cF$-ancestors
of $u$ are its $\eta$ immediate $\cF$-ancestors.

Consider the $j$-level processing. (It is described in Section \ref{s:jlevel}.) Since $u$ is active, the representative
$r(u)$ of $u$ is not isolated in $G^*_j = (Q_j,E^*_j)$. Hence $r(u) \in Q^*_j$.
Moreover, the bag
$u$ is useful, hence $r(u) \in \hat Q_j$.
If $r(u)$ is not isolated in $E^*_j(\hat Q_j)$, then it is not isolated in the
$j$-level attachment graph $G_j = (\hat Q_j,\cE_j),\cE_j = E^*_j(\hat Q_j) \cup
\hat E_j$. However, in this case $r(u)$ belongs to a star $S \in \Gamma_j$ in
the star forest $\Gamma_j$ formed by Procedure $Attach$ (within Procedure
$Process_j$). As a result the bag $u$ becomes either an attachment initiator
or an attached bag, and its parent $u^{(1)} = \hat u^{(1)}$
becomes an incubator or a zombie, respectively. In either case it becomes
safe, a contradiction.

Hence $r(u)$ is isolated in $E^*_j(\hat Q_j)$.
Recall that $\hat Q_j$ is the subset of $Q^*_j$ which contains only non-zombie representatives.
Since $r(u)$ is not isolated in $G^*_j = (Q_j,E^*_j)$, there must exist a zombie
$z$, such that $r(z) \in Q_j \setminus \hat Q_j$ and the edge $(r(u),r(z)) \in E^*_j$.
It follows that
\begin{equation} \label{mark}
\delta(r(u),r(z)) ~\le~ \tau_j.
\end{equation}

Also, the same argument applies for every index $h, h \in [j,j+(\eta -1)]$,
and not only for $h=j$.
If $r(u^{(h-j)})$ is not isolated in the $h$-level attachment graph $G_h$,
then the parent $u^{(h-j+1)} = \hat u^{(h-j+1)}$
of $u^{(h-j)}$ is safe.
Hence in this case $u$ is $\eta$-safe-prospective, a contradiction.

Therefore, from now on we assume that for all indices $h, h \in [j,j+(\eta-1)]$,
the representative $r(u) = r(u^{(h-j)}) = r(\hat u^{(h-j)})$ is isolated in $G_h$.

Next, we argue that $r(u)$ is quite far from any useful representative
on levels $j,j+1,\ldots,j+(\eta-1)$.
\begin{claim} \label{didit}
For any index $h, h \in [j,j+(\eta-1)]$, and any useful bag $w \in \cF_h,
w \ne u^{(h-j)}$, it holds that
$\delta(r(u),r(w)) > \frac{\tau_h}{t}$.
\end{claim}
\begin{proof}
Suppose for contradiction that for some index $h \in [j,j+(\eta-1)]$ and a bag
$w$ as above, it 
holds that $\delta(r(u),r(w)) \le \frac{\tau_h}{t}$.
It follows that $r(u) = r(u^{(h-j)})$ and $r(w)$ are not isolated in $G^*_h$.
Moreover, since $u^{(h-j)}$ and $w$ are useful bags,
their representatives belong to $\hat Q_h$, 
i.e., $r(u) = r(u^{(h-j)}),r(w) \in \hat Q_h$. Part II of
Procedure $Process_j$ constructs a $t$-spanner $\check G_h = (\hat Q_h,\check E_h)$
for the metric $M[\hat Q_h]$ induced by $\hat Q_h$.
Hence there exists
a $t$-spanner path $\Pi = \Pi(r(u),r(w))$ in $\check G_h$ between
$r(u)$ and $r(w)$. Since $\delta(r(u),r(w)) \le \frac{\tau_h}{t}$,
it follows that $\omega(\Pi) \le \tau_h$.
Therefore all edges of $\Pi$ also have weight at most $\tau_h$. Hence
the path $\Pi$ is contained in the pruned graph $\hat G_h = (\hat Q_h,\hat E_h)$.
Moreover, $\hat E_h \subseteq \cE_h$, where $\cE_h$ is the edge set of the
$h$-level attachment graph $G_h = (\hat Q_h,\cE_h)$.
Hence $\Pi \subseteq \cE_h$ as well.
Therefore $r(u) = r(u^{(h-j)})$ is not isolated in $G_h$, a contradiction.
\QED
\end{proof}
Now we continue to prove Lemma \ref{sp}. Intuitively, we will show that $r(u)$ cannot be close to a zombie
representative $r(z)$ (in the sense of Equation (\ref{mark})), but far from any useful representative $r(w)$ in all levels
$h \in [j,j+(\gamma-1)]$ (see Claim \ref{didit}). This would lead to a contradiction.

Consider again the zombie $z \in \cF_j$, such that $\delta(r(u),r(z)) \le \tau_j$.
Let $i, j- (\gamma-1) \le i < j$, be the index such that an identical descendant
$y$ of $z$ became an attached bag during the $i$-level processing.
More specifically, during the $i$-level processing the bag $y$ was attached to another
$i$-level bag $v$ by an attachment $\cA(v,y)$. As a result of this attachment,
the $(\gamma-1)$ immediate $\cF$-ancestors of $y = y^{(0)}$, i.e.,
$y^{(1)} = \pi(y^{(0)}),y^{(2)} = \pi(y^{(1)}),\ldots,y^{(j-i)} = z,
\ldots,y^{(\gamma-1)} = \pi(y^{(\gamma-2)})$, are labeled as
zombies. Since none of these bags except $y^{(\gamma-1)}$
are disappearing zombies, Observation \ref{notdis} implies that
$y^{(1)} = \hat \pi(y^{(0)}) = \hat y^{(1)},
\ldots,y^{(\gamma-1)} = \hat \pi(\hat y^{(\gamma-2)})
= \hat y^{(\gamma-1)}$. Moreover, all these bags have the same
representative $r(y) = r(z) = r(y^{(1)}) = \ldots = r(y^{(\gamma-1)})$.
The bag $v \in \cF_i$ is the initiator of the attachment $\cA(v,y)$.
The $(\gamma-1)$ immediate $\cF$-ancestors $v^{(1)},v^{(2)},\ldots,v^{(\gamma-1)}$
of  $v= v^{(0)}$
are labeled as a result of the attachment $\cA(v,y)$ as incubators.
By Lemma \ref{tat}, none of them is a zombie, and, in particular, none
of them is a disappearing zombie. Thus, again by Observation \ref{notdis},
$v^{(1)} = \hat \pi(v^{(0)}) = \hat v^{(1)},
\ldots,v^{(\gamma-1)} = \hat \pi(\hat v^{(\gamma-2)})
= \hat v^{(\gamma-1)}$.

Denote $x = v^{(j-i)} \in \cF_j$. The representing edge of the attachment
$\cA(v,y)$ is the edge $(r(v),r(y))$. Hence $\delta(r(v),r(y)) \le \tau_i$.
The bag $x \in \cF_j$ is an incubator. Hence it is safe.
On the other hand, the bag $u \in \cF_j$ is risky. Hence $u \ne x$.
Therefore, $Q(u) \cap Q(x) = \emptyset$. (Recall that point sets of two
distinct $j$-level bags are disjoint.)

Denote $k = j+ \kappa + 1$. Let $x' = \hat x^{(k-j)}$ denote the $k$-level
$\hat \cF$-ancestor of the bag $x$. By construction, $Q(x) \subseteq Q(x')$.
Since $\kappa+1\le \eta -1$, the bag $u' = \hat u^{(k-j)}$ is identical
to $u$. (Since the $\eta -1$ immediate $\cF$-ancestors, and $\hat \cF$-ancestors,
are all identical to $u$.)
Both bags $u'$ and $x'$ are non-empty $k$-level bags, and $Q(u')= Q(u), Q(x') \supseteq Q(x)$,
and $Q(u) \cap Q(x) = \emptyset$.
Hence $Q(u') \ne Q(x')$, and thus $Q(u') \cap Q(x') = \emptyset$, and
$u'$ and $x'$ are distinct bags. (See Figure \ref{fig1} for an illustration.)

\begin{figure*}[htp]
\begin{center}
\begin{minipage}{\textwidth} 
\begin{center}
\setlength{\epsfxsize}{5in} \epsfbox{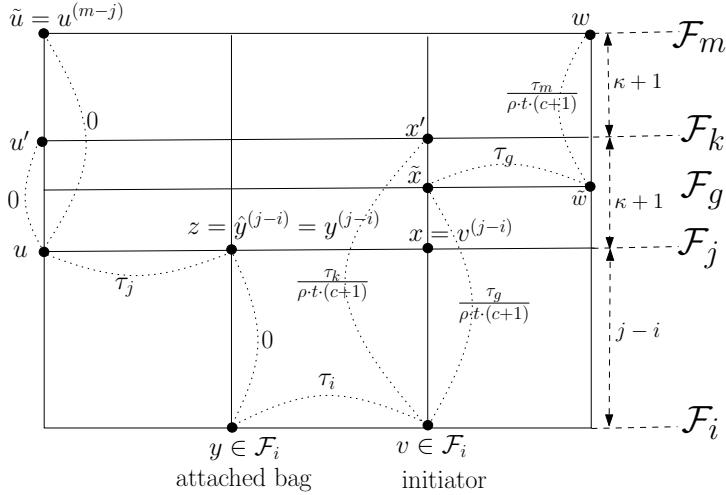}
\end{center}
\end{minipage}
\caption[]{ \label{fig1} \sf A schematic illustration for the proof of Lemma \ref{sp}.
Expressions that appear next to dotted lines
reflect upper bounds on distances between the
representatives of their endpoints.
For example, $r(\tilde u) = r(u)$, and thus 0 appears next
to the dotted line that connects $\tilde u$ and $u$.
Similarly, $\delta(r(v),r(x')) \le \frac{\tau_k}{\rho \cdot t \cdot (c+1)}$,
and thus $\frac{\tau_k}{\rho \cdot t \cdot (c+1)}$ appears next to the
dotted line that connects $v$ and $x'$.}
\end{center}
\end{figure*}

The analysis splits into two cases now, depending on whether the bag $x'  = \hat x^{(k-j)}$
is a zombie or not. (Note that if $k-i \le \gamma-1$ then $x'$ is an incubator, and not a zombie.
But $k-i$ may be larger than $\gamma-1$.)
We start with the case that it is not a zombie.
(It may be an attached bag.) By definition, $x'$ is useful.
We will show that $r(x')$ is prohibitively close to $r(u)$, and this would yield a contradiction.
\\The representative $r(v)$ of  $v$ belongs to $Q(v) \subseteq Q(x) \subseteq Q(x')$.
Hence $r(v),r(x') \in Q(x')$. By Corollary \ref{key2},
$\delta(r(v),r(x')) ~\le~ 2 \cdot \mu_k ~=~ \frac{\tau_k}{\rho \cdot t \cdot (c+1)}.$
Recall that $r(u') = r(u)$ and $r(z) = r(y)$. Hence, by Equation (\ref{mark}), $\delta(r(u'),r(y)) ~=~ \delta(r(u),r(z))
~\le~ \tau_j.$ By the triangle inequality,
\begin{eqnarray*}
\delta(r(u'),r(x')) &\le& \delta(r(u'),r(y)) + \delta(r(y),r(v)) + \delta(r(v),r(x'))
\\ &\le& \tau_j + \tau_i + \frac{\tau_k}{\rho \cdot t \cdot (c+1)}
~=~ \tau_k \cdot \left(\frac{1}{\rho^{k-j}} + \frac{1}{\rho^{k-i}} + \frac{1}{\rho \cdot t \cdot (c+1)}\right).
\end{eqnarray*}
Recall that $k-j = \kappa+1 = \lceil \log_\rho t \rceil + 1$.
Also, $i \le j -1$, and thus $k-i \ge \log_\rho t + 2$.
Since $\rho \ge 2$ and $c \ge 1$, it follows that
\begin{eqnarray*}
\delta(r(u'),r(x')) &\le& \tau_k \cdot \left(\frac{1}{\rho^{\log_\rho t + 1}} + \frac{1}{\rho^{\log_\rho t + 2}} + \frac{1}{\rho \cdot t \cdot (c+1)}\right)
\\ &=& \tau_k \cdot \left(\frac{1}{\rho \cdot t} + \frac{1}{\rho^2 \cdot t} + \frac{1}{\rho \cdot t \cdot (c+1)}\right)
~\le~ \frac{\tau_k}{t} \cdot \left(\frac{1}{2} + \frac{1}{4} + \frac{1}{4}\right) ~=~ \frac{\tau_k}{t}.
\end{eqnarray*}
The bag $x'$ is useful, and $r(u) = r(u')$. Also, $k = j+ (\kappa+1) \in [j,j+(\eta-1)]$,
contradicting Claim \ref{didit}.

Next, we turn to the case that $x'$ is a zombie (but not
an attached bag).
There exists an index $g, j < g < k$,
so that an $\hat \cF$-descendant $\tilde x$ of $x'$
(and an $\hat \cF$-ancestor of $x$ and $v$) became an
attached bag.
Hence there exists an initiator $\tilde w \in \cF_g$, so that the attachment
$\cA(\tilde w, \tilde x)$ occurred during the $g$-level processing.
The representing edge of this attachment is
$(r(\tilde w),r(\tilde x))$. It follows that $\delta(r(\tilde w),r(\tilde x)) \le \tau_g$.
Denote $m = j+2\kappa + 2 = j + (\eta-1)$. Let $w \in \cF_m$ denote the $m$-level
$\cF$-ancestor of $\tilde w$. Observe that $m-g \le m-j = 2\kappa + 2 < \gamma$.
(The constant $c_0$ should be set as $c_0 \ge 3$ for this to hold.)
Hence the bag $w$ is labeled as a result of the attachment $\cA(\tilde w,\tilde x)$
as an incubator. We will show that $r(w)$ is prohibitively close to $r(u)$, yielding a contradiction.
All the $(\gamma-1)$ immediate $\cF$-ancestors of $\tilde w$
are incubators, and thus, by Lemma \ref{tat}, they are not zombies.
In particular, none of them is a disappearing  zombie. Hence, by Observation
\ref{notdis}, for each index $h, g < h \le m$, the $h$-level $\hat \cF$-ancestor
of $\tilde w$ is the same bag as the $h$-level $\cF$-ancestor of $\tilde w$.

Hence the bag $w$ is safe.
On the other hand the $m$-level ancestor $u^{(m-j)} = u^{(\eta-1)}$ of $u$ is risky,
and so $u^{(m-j)} \ne w$. Denote $\tilde u = u^{(m-j)}$.
Since $\tilde x$ is an $g$-level $\hat \cF$-ancestor of $v$, it follows that
$r(v),r(\tilde x) \in Q(\tilde x)$. Hence, by Corollary \ref{key2},
$\delta(r(v),r(\tilde x)) \le \frac{\tau_g}{\rho \cdot t \cdot (c+1)}$.
Similarly, as $w$ is an $\hat \cF$-ancestor of $\tilde w$, and $w \in \cF_m$,
it follows that $\delta(r(w),r(\tilde w)) \le \frac{\tau_m}{\rho \cdot t \cdot (c+1)}$.
Also, $r(\tilde u) = r(u)$, and $r(z) = r(y)$.
Hence $\delta(r(\tilde u),r(y)) ~=~ \delta(r(u),r(z)) ~\le~ \tau_j.$
By the triangle inequality,
\begin{eqnarray*}
\delta(r(\tilde u),r(w)) &\le&
\delta(r(\tilde u),r(y)) + \delta(r(y),r(v)) + \delta(r(v),r(\tilde x))
+ \delta(r(\tilde x),r(\tilde w)) + \delta(r(\tilde w),r(w))
\\ &\le& \tau_j + \tau_i + \frac{\tau_g}{\rho \cdot t \cdot (c+1)}
+ \tau_g + \frac{\tau_m}{\rho \cdot t \cdot (c+1)}
\\ &=& \tau_m \cdot \left(\frac{1}{\rho^{m-j}} + \frac{1}{\rho^{m-i}} + \frac{1}{\rho^{m-g} \cdot \rho \cdot t \cdot (c+1)}
+ \frac{1}{\rho^{m-g}} + \frac{1}{\rho \cdot t \cdot (c+1)} \right)
\\ &\le& \tau_m \cdot \left(\frac{1}{\rho^{2\cdot \log_\rho t + 2}} + \frac{1}{\rho^{2\cdot \log_\rho t + 3}} + \frac{1}{\rho^{\log_\rho t + 2} \cdot \rho \cdot t \cdot (c+1)}
+ \frac{1}{\rho^{\log_\rho t + 2}} + \frac{1}{{\rho \cdot t \cdot (c+1)}}\right)
\\ &=& \tau_m \cdot \left(\frac{1}{\rho^2 \cdot t^2} + \frac{1}{\rho^3 \cdot t^2} + \frac{1}{\rho^3\cdot t^2 \cdot (c+1)} +
\frac{1}{\rho^2 \cdot t} +
\frac{1}{\rho \cdot t \cdot (c+1)}\right) \\ &\le& \frac{\tau_m}{t} \cdot \left(\frac{1}{4} + \frac{1}{8} + \frac{1}{16} + \frac{1}{4} + \frac{1}{4}\right)
~<~ \frac{\tau_m}{t}.
\end{eqnarray*}
Therefore, $\delta(r(\tilde u),r(w)) < \frac{\tau_m}{t}$.
The bag $w$ is useful and $r(\tilde u) = r(u)$.
Also, $\tilde u \in \cF_m$, and $m  = j+ (\eta -1) \in [j,j+(\eta -1)]$.
Hence this is also a contradiction to Claim \ref{didit}.

It follows that at least one of the bags $\hat u^{(0)},\hat u^{(1)},\ldots,\hat u^{(\eta)}$
is safe, and thus $u$ is $\eta$-safe-prospective.
This completes the proof of Lemma \ref{sp}.
\QED
\end{proof}
Next, we combine Corollary \ref{safepros} and Lemma \ref{sp}
to conclude that any active small bag is $(\gamma + \eta)$-prospective.
\begin{lemma} \label{prosper}
Let $j \in [\ell]$.
Any active small bag $v \in \cF_j$ is $(\gamma+\eta)$-prospective.
\end{lemma}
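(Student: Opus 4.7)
The plan is to prove this lemma by a short case analysis that combines Corollary \ref{safepros} with Lemma \ref{sp}. I expect this to be a fairly direct corollary of the two previous results, and not to require substantial new work; the main thing to keep honest is the bookkeeping of indices in $\hat \cF$.

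First I would dispose of the trivial case: if $v$ is already $\eta$-prospective, then since the first $\eta$ immediate $\hat \cF$-ancestors of $v$ are a subset of its first $\gamma+\eta$ immediate $\hat \cF$-ancestors, $v$ is automatically $(\gamma+\eta)$-prospective (and similarly if $j>\ell-\eta$, when $v$ is $\eta$-prospective by fiat). So from now on I would assume $j\le\ell-\eta$ and $v$ is not $\eta$-prospective, which is the setting in which Lemma \ref{sp} applies.

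By Lemma \ref{sp}, $v$ is then $\eta$-safe-prospective, so there exists an index $i\in[0,\eta]$ such that $w := \hat v^{(i)}$ is safe. Because $v$ is not $\eta$-prospective, each of $\hat v^{(1)},\ldots,\hat v^{(\eta)}$ is a stagnating bag; recalling that a stagnating bag $u$ has $|\cS(u)|=1$, $\cJ(u)=\emptyset$, and hence satisfies $Q(u)=\bigcup_{z\in\chi(u)}Q(z)=Q(u')$, where $u'$ is its unique extended child, a straightforward induction on $i$ shows that the bags $v=\hat v^{(0)},\hat v^{(1)},\ldots,\hat v^{(\eta)}$ are pairwise identical. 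In particular, $w=\hat v^{(i)}$ is small (since $v$ is small).

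Hence $w$ is a small safe bag, and Corollary \ref{safepros} applies to it: $w$ is $\gamma$-prospective. Thus one of its $\gamma$ immediate $\hat \cF$-ancestors, say $\hat w^{(k)}$ with $k\in[1,\gamma]$, is a growing bag. But $\hat w^{(k)}=\hat v^{(i+k)}$, with $i+k\le \eta+\gamma$, so this growing bag is among the first $\gamma+\eta$ immediate $\hat \cF$-ancestors of $v$. Therefore $v$ is $(\gamma+\eta)$-prospective, completing the proof.
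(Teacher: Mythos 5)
Your proof follows the same route as the paper's: combine Lemma \ref{sp} with Corollary \ref{safepros} via the observation that, in the absence of growing bags among $\hat v^{(1)},\ldots,\hat v^{(\eta)}$, the chain $v=\hat v^{(0)},\ldots,\hat v^{(\eta)}$ consists of identical (hence small) bags. However, there is a small but genuine gap in your initial case disposal. You reduce to the regime $j\le\ell-\eta$ (and $v$ not $\eta$-prospective), which suffices to invoke Lemma \ref{sp}, but is not enough for the final step. The bag $w=\hat v^{(i)}$ lives at level $j+i$ with $i\in[0,\eta]$, so all you can conclude is $j+i\le\ell$. If $j+i>\ell-\gamma$, then Corollary \ref{safepros} tells you $w$ is $\gamma$-prospective only through the definitional exception (``for levels above $\ell-\gamma$, all bags are $\gamma$-prospective''), and there need not exist any growing bag among the first $\gamma$ $\hat\cF$-ancestors of $w$ --- so the sentence ``Thus one of its $\gamma$ immediate $\hat \cF$-ancestors $\ldots$ is a growing bag'' is unjustified in that range.

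The fix is exactly what the paper does: dispose of the trivial case $j>\ell-(\gamma+\eta)$ at the very start (there $v$ is $(\gamma+\eta)$-prospective by fiat), then work under the assumption $j\le\ell-(\gamma+\eta)$. That stronger assumption guarantees $j+i\le\ell-\gamma$ for every $i\le\eta$, so $w$ is never in the exceptional range and Corollary \ref{safepros} does yield an actual growing ancestor among $\hat v^{(i+1)},\ldots,\hat v^{(i+\gamma)}$, all of which lie within the first $\gamma+\eta$ $\hat\cF$-ancestors of $v$. With this one-line adjustment the remainder of your argument is correct.
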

\begin{proof}
If $j > \ell - (\gamma + \eta)$ then the assertion is trivial.
So we henceforth assume that $j \le \ell - (\gamma + \eta)$.
If $v$ is $\eta$-prospective, then we are done. Otherwise, by
Lemma \ref{sp}, it is $\eta$-safe-prospective.
In other words, for some index $i, j \le i \le j + \eta$, the $i$-level
$\hat \cF$-ancestor $\tilde v$ of $v$ is safe. Since $v$ is not $\eta$-prospective,
the bags $v$ and $\tilde v$ are identical, and thus $\tilde v$ is small.
Corollary \ref{safepros} implies that $\tilde v$ is $\gamma$-prospective. It follows that
$v$ is $(\gamma+\eta)$-prospective.\QED
\end{proof}

Recall that the large (respectively, small) counter of a point
$p \in Q$ grows during the $j$-level processing (for some
index $j \in [\ell]$) if $p$ is a representative of some large (resp., small)
$j$-level bag $v$, and if $p$ is not isolated in the $j$-level auxiliary spanner
$\tilde G_j$. (See Section \ref{sec24} for details.)

\begin{observation} \label{ob:small}
Let $j \in [\ell]$, and $v \in \cF_j$ be a small bag.
Then for any point $p \in Q(v)$, it holds that $CTR_j(p) = 0$.
\end{observation}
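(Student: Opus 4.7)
The plan is to exploit the nested structure of host bags along the $\hat\cF$-chain of $p$, and then just read off the definition of the large counter.

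First I would show that for every point $p \in Q$ and every level $i \in [j-1]$, the $i$-level host bag $v_i(p)$ is an extended child of the $(i+1)$-level host bag $v_{i+1}(p)$, i.e., $v_i(p) \in \chi(v_{i+1}(p))$. This is immediate from the construction in Section \ref{sec21}: since $p \in Q(v_{i+1}(p)) = \bigcup_{z \in \chi(v_{i+1}(p))} Q(z)$, there must exist some $z \in \chi(v_{i+1}(p))$ containing $p$, and uniqueness of the $i$-level host bag forces $z = v_i(p)$. In particular, $Q(v_i(p)) \subseteq Q(v_{i+1}(p))$.

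Chaining this inclusion from level $1$ up to level $j$ yields
\[
Q(v_1(p)) \;\subseteq\; Q(v_2(p)) \;\subseteq\; \cdots \;\subseteq\; Q(v_j(p)) \;=\; Q(v).
\]
Since $v$ is assumed to be small, $|Q(v)| < \ell$, and therefore $|Q(v_i(p))| \le |Q(v)| < \ell$ for every $i \in [j]$. By Definition \ref{smalllarge}, this means that $v_i(p)$ is small for every $i \in [j]$.

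Finally, recall that $CTR_j(p)$ counts precisely those indices $i \in [j]$ for which $p$ is not isolated in $\tilde G_i$ \emph{and} $v_i(p)$ is large. Since we have just shown that no $v_i(p)$ is large for $i \in [j]$, no index contributes to the sum, and $CTR_j(p) = 0$, as required. There is no real obstacle here: the statement is essentially a bookkeeping consequence of the recursive definitions of $Q(\cdot)$ and of the counters.
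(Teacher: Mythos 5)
Your proof is correct and follows essentially the same route as the paper: the paper argues that $v_i(p)$ is an $\hat\cF$-descendant of $v$ and that all $\hat\cF$-descendants of a small bag are small, while you make the same point explicit by chaining the inclusions $Q(v_1(p)) \subseteq \cdots \subseteq Q(v_j(p)) = Q(v)$ and bounding cardinalities. The only difference is that you spell out the step $v_i(p) \in \chi(v_{i+1}(p))$, which the paper treats as implicit.
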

\begin{proof}
All $\hat \cF$-descendants of $v$ are small. Also, for a point $p \in Q(v)$,
and an index $i, 1 \le i \le j$, the $i$-level host bag $v_i(p)$ is an
$\hat \cF$-descendant of $v$. Hence any point $p \in Q(v)$ belongs
only to small $i$-level bags, for all $1 \le i \le j$. Hence $CTR_j(p) = 0$.
\QED
\end{proof}
\begin{observation} \label{ker}
Let $j \in [\ell]$, and $v \in \cF_j$ be a large bag. For
every $\hat \cF$-ancestor $v'$ of $v$, $K(v') \supseteq K(v)$.
\end{observation}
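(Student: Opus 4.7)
The plan is to proceed by induction on the level distance between $v$ and its $\hat\cF$-ancestor $v'$. Clearly it suffices to handle the case where $v'$ is the immediate $\hat\cF$-parent of $v$, i.e., $v' = \hat\pi(v)$; the general case then follows by iterating the claim along the $\hat\cF$-path from $v$ to $v'$, provided we maintain the invariant that every bag on this path is large.

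The key observation is that a large bag cannot be a disappearing zombie. Indeed, by Lemma \ref{newl}, a disappearing zombie $w$ is identical (in terms of its point set) to the attached bag $\tilde w$ it descends from, so $|Q(w)| = |Q(\tilde w)|$. But to become an attached bag, $\tilde w$ must have been risky during its level's processing, and by definition risky bags are small, i.e., $|Q(\tilde w)| < \ell$. Hence $|Q(w)| < \ell$ as well, meaning $w$ is small. Contrapositively, if $v$ is large then $v$ is not a disappearing zombie, and therefore $v \in \cS(v')$ rather than $v \in \cJ(v')$ — that is, $v$ is a surviving child of its $\hat\cF$-parent $v'$.

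Given this, the inclusion is immediate from the definition of the kernel set in Section \ref{sec21}. We have
\[
K'(v') \;=\; \bigcup_{z \in \cS(v')} K(z) \;\supseteq\; K(v),
\]
and by definition $K(v') \supseteq K'(v')$ regardless of whether $|K'(v')| \ge \ell$ (in which case $K(v') = K'(v')$) or $|K'(v')| < \ell$ (in which case $K(v') = K'(v') \cup \bigcup_{z\in\cJ(v')} K(z) \supseteq K'(v')$). Thus $K(v') \supseteq K(v)$.

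To close the induction, note that since $Q(v') \supseteq Q(v)$ (as $v \in \chi(v')$ and $Q(v') = \bigcup_{z\in\chi(v')} Q(z)$), the $\hat\cF$-parent $v'$ satisfies $|Q(v')| \ge |Q(v)| \ge \ell$, hence $v'$ is large as well. Iterating the one-step argument along the $\hat\cF$-ancestors of $v$ yields $K(v'') \supseteq K(v)$ for every $\hat\cF$-ancestor $v''$ of $v$. The only subtle point in the argument is the observation that large bags can never be disappearing zombies; once this is in hand the rest is a direct unfolding of definitions.
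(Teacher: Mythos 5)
Your proof is correct and follows essentially the same route as the paper's: reduce to showing that a large bag cannot be a disappearing zombie, conclude via Observation \ref{notdis} that $v$ is a surviving child of $\hat\pi(v)=\pi(v)$, observe $K(\pi(v)) \supseteq K'(\pi(v)) \supseteq K(v)$ by the definition of the kernel set, and close the induction by noting $\pi(v)$ is also large. The one place you add value is the first step: the paper simply asserts ``only small bags may be labeled as zombies'' without citation, whereas you derive the (weaker but sufficient) fact that large bags cannot be disappearing zombies from Lemma \ref{newl} together with the observation that attached bags are necessarily risky and hence small — making the argument more self-contained.
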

\begin{proof}
Only small bags may be labeled as zombies.
Hence $v$ is useful. Observation \ref{notdis} implies that it will not have a step-parent,
i.e., $\hat \pi(v) = \pi(v)$ and $v \in \cS(\pi(v))$.
Also,
we have by construction $K(\pi(v)) \supseteq K(v)$.
Consider now $\pi(v)$,
and notice that $Q(\pi(v)) \supseteq Q(v)$.
Hence $\pi(v)$ will be large as well, and we can apply this argument to $\pi(v)$.
\QED
\end{proof}

Next, we argue that the large counter of any point $p \in Q$ is at most 1.
\\We say that a large bag $v \in \cF_j$ is \emph{atomically large},
for some index $j \in [\ell]$, if all its extended children (if any) are small.
In particular, all large 1-level bags are atomically large.
We will use this definition in the proof of the following lemma.
\begin{lemma} \label{large:l}
For each point $p \in Q$, $CTR_\ell(p) \le 1$.
\end{lemma}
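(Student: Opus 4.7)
My plan is to prove the stronger statement that each point $p \in Q$ is selected as the representative of at most one large bag during the execution of Algorithm $LightSp$ on which it is loaded; since $CTR_j(p)$ is incremented (by $1$) exactly at levels $j$ where $p = r(v_j(p))$, the host bag $v_j(p)$ is large, and $p$ is loaded in $\tilde G_j$, this immediately gives $CTR_\ell(p) \le 1$.

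I would argue by contradiction: suppose $p = r(v_1) = r(v_2)$ for large bags $v_1 \in \cF_{j_1}$, $v_2 \in \cF_{j_2}$ with $j_1 < j_2$, where $p$ is loaded at level $j_1$, so $CTR_{j_2 - 1}(p) \ge 1$. Since $p \in Q(v_1) \cap Q(v_2)$ and the sets $\{Q(v) : v \in \cF_j\}$ partition $Q$ at each level, $v_2$ must be the $j_2$-level $\hat\cF$-ancestor of $v_1$. Let $\mathcal{T}$ denote the subtree of $\hat\cF$ rooted at $v_2$ consisting of all large $\hat\cF$-descendants of $v_2$. Since $Q(w) \subseteq Q(\hat\pi(w))$ for every $w$, ``large'' is preserved upward along $\hat\cF$-paths, so $\mathcal T$ is indeed a connected subtree, and its leaves are precisely the atomically large $\hat\cF$-descendants of $v_2$. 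Let $\lambda$ denote the number of leaves of $\mathcal T$.

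Two quantitative bounds drive the argument. For the \emph{kernel lower bound}, note that every large bag is useful and is never a disappearing zombie (Lemma~\ref{newl} forces every zombie to be small), so Observation~\ref{notdis} gives $\pi(u) = \hat\pi(u)$ and $u \in \cS(\hat\pi(u))$ for every large $u$; iterating the inclusion $K(u) \subseteq K'(\hat\pi(u)) \subseteq K(\hat\pi(u))$ along the $\hat\cF$-path from each leaf $u$ of $\mathcal T$ up to $v_2$ yields $K(u) \subseteq K(v_2)$. Different leaves of $\mathcal T$ have pairwise disjoint kernels, since otherwise a common point would witness one leaf being an $\hat\cF$-ancestor of the other, contradicting both being leaves. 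Combined with $|K(u)| \ge \ell$ at each leaf (Lemma~\ref{kv}(2)), this gives $|K(v_2)| \ge \lambda \cdot \ell$. For the \emph{tree-size upper bound}, every node of $\mathcal T$ sits at some level in $[1,\ell]$, so each leaf-to-root path in $\mathcal T$ has at most $\ell$ nodes; since $\mathcal T$ is covered by its $\lambda$ leaf-to-root paths, $|\mathcal T| \le \lambda \cdot \ell$.

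To finish, any $q \in K(v_2)$ with $CTR_{j_2 - 1}(q) \ge 1$ must equal the representative of some large $\hat\cF$-descendant of $v_2$ that was loaded at an earlier level, and that descendant is strictly proper (since $v_2$ itself has not yet been processed at the moment $r(v_2)$ is chosen). Hence $|\{q \in K(v_2) : CTR_{j_2 - 1}(q) \ge 1\}| \le |\mathcal T| - 1 \le \lambda\ell - 1 < \lambda\ell \le |K(v_2)|$, so some $q^{*} \in K(v_2)$ satisfies $CTR_{j_2 - 1}(q^{*}) = 0 < 1 \le CTR_{j_2 - 1}(p)$. The large-bag selection rule (pick the point of $K(v)$ with smallest $CTR_{j-1}$) then forces $r(v_2) \ne p$, yielding the desired contradiction. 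The principal obstacle is the kernel-accumulation step: it is essential that for large bags the forests $\cF$ and $\hat\cF$ agree ($\pi = \hat\pi$), since this is what allows the successive $\cS$-inclusions to carry each leaf kernel all the way up to $v_2$ without losing any point.
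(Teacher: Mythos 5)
Your proof is correct, but it takes a genuinely different route from the paper's. The paper localizes the argument: from the assumption $CTR_j(p)=1$, $CTR_{j+1}(p)=2$, it fixes \emph{one} atomically large $\hat\cF$-descendant $v \in \cF_g$ of $v_{j+1}(p)$, observes (via Observation~\ref{ob:small}) that every point of $K(v)$ has zero large counter at level $g-1$, and then exploits the fact that all points of $K(v)$ share the same host bag at every level $i \in [g,j]$, so at most one of them can have its large counter bumped per level; since $|K(v)|\ge\ell > j-g+1$, some $q \in K(v)\subseteq K(v_{j+1}(p))$ still has $CTR_j(q)=0$, contradicting the selection rule. Your argument instead takes a global census over the entire subtree $\mathcal T$ of large $\hat\cF$-descendants of $v_2$: you show that the leaf kernels are pairwise disjoint and each has size $\ge\ell$ (giving $|K(v_2)|\ge\lambda\ell$), that $|\mathcal T|\le\lambda\ell$ by covering $\mathcal T$ with leaf-to-root paths, and then inject the loaded points of $K(v_2)$ into $\mathcal T\setminus\{v_2\}$ via the ``one representative per bag'' map. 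Both proofs hinge on the same two ingredients -- $|K(v)|\ge\ell$ for atomically large $v$ (Lemma~\ref{kv}) and kernel monotonicity for large bags (Observation~\ref{ker}) -- but the paper's version avoids the kernel-disjointness and tree-size lemmas entirely by picking a single witness $v$ and counting levels rather than bags. Your version is a bit longer and needs the extra observations that distinct leaves have disjoint kernels and that $v_1 \in \mathcal T \setminus\{v_2\}$ (so $|\mathcal T|\ge 2$, handling the corner case $\mathcal T = \{v_2\}$), but it is a valid and self-contained alternative.
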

\begin{proof}
Suppose for contradiction that there is a point $p \in Q$, with $CTR_\ell(p) \ge 2$,
and let $j,j \in [\ell-1]$, be the index for which $CTR_{j}(p) = 1, CTR_{j+1}(p) = 2$.
By construction, $p$ is the representative of its $(j+1)$-level host bag $v_{j+1}(p)$.
Moreover, by Observation \ref{ob:small} and by the construction, the $j$-level and $(j+1)$-level host bags $v_j(p)$ and $v_{j+1}(p)$ of $p$,
respectively, are both large.
If $v_j(p)$ is atomically large, set $v = v_j(p)$.
Otherwise, set $v$ to be an arbitrary $\hat \cF$-descendant of $v_j(p)$ that is atomically large.
(Note that $p$ may not belong to $Q(v)$.)
We have $v \in \cF_{g}$, where $1 \le g \le j \le \ell -1$.
By Lemma \ref{kv}, $|K(v)| \ge \ell$.
Write $K(v) = \{q_1,\ldots,q_{k}\}$, where $k \ge \ell$.

Next, we argue that
\begin{equation} \label{ctr}
CTR_{g-1}(q_1) ~=~ CTR_{g-1}(q_2) ~=~ \ldots ~=~ CTR_{g-1}(q_{k}) ~=~ 0.
\end{equation}
Since all counters with index 0 are 0,
Equation (\ref{ctr}) clearly holds if $g =1$.
For $g \ge 2$, all the extended children $z \in \chi(v)$ of $v$ are small by definition.
Also, by construction, $Q(v) = \bigcup_{z \in \chi(v)}Q(z)$.
Therefore, by Observation \ref{ob:small}, we have  $CTR_{g-1}(p) = 0$,  for each point $p \in Q(v)$.
Equation (\ref{ctr}) now follows as $K(v) \subseteq Q(v)$.

Consider the $j-g+1$ immediate $\hat \cF$-ancestors of $v = v^{(0)}$, i.e., $v^{(1)},\ldots,v^{(j-g)} = v_{j}(p),v^{(j-g+1)} = v_{j+1}(p)$.
Observation \ref{ker} implies that for each index $i \in [j-g+1]$,
$K(v^{(i)}) \supseteq K(v) = \{q_1,\ldots,q_{k}\}$.
For each index $i \in [g,j]$, at most one point from $K(v) = \{q_1,\ldots,q_{k}\}$
is appointed as a representative during the $i$-level processing; that point is the only one from $K(v)$
whose large counter increases during the $i$-level processing.
Since $|K(v)| \ge \ell > j -g + 1$,
there must be at least one point $q \in K(v)$,
with $CTR_{j}(q) = 0$. Also, $q \in K(v) \subseteq K(v_{j+1}(p))$, and the point $p$ is  the representative of $v_{j+1}(p)$.
Recall that for any large $(j+1)$-level bag $u$, Algorithm $LightSp$ sets its
representative $r(u)$ to be a point $\tilde p \in K(u)$ with the smallest
large counter $CTR_j(\tilde p)$. Hence
$CTR_{j}(p) \le CTR_j(q) = 0$, a contradiction.
\QED
\end{proof}

Next, we turn to analyzing single counters of points $p \in Q$.
Recall that for a point $p \in Q$ and an index $j \in [\ell]$, $single\_ctr_j(p)$
counts the number of indices $i \in [j]$ such that the point
$p$ is not isolated in the $i$-level auxiliary spanner $\tilde G_i$ \emph{and}
its host bag $v_i(p)$ is a singleton, i.e., $Q(v_i(p)) = \{p\}$.

\begin{lemma} \label{single:l}
For any point $p \in Q$, $single\_ctr_\ell(p) \le \gamma + \eta$.
\end{lemma}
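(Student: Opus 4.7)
The plan is to combine Lemma \ref{prosper} with a monotonicity property of the sequence of host bags of $p$. First I would establish that the set of levels at which $v_j(p)$ is a singleton forms a prefix $[1,j^*]$ of $[\ell]$ (possibly empty). This is because $v_j(p)$ is, by construction, an extended child of its $\hat\cF$-parent, so $Q(v_j(p)) \subseteq Q(\hat\pi(v_j(p)))$; by uniqueness of the host bag at level $j+1$, this forces $\hat\pi(v_j(p)) = v_{j+1}(p)$, and hence $Q(v_{j+1}(p)) \supseteq Q(v_j(p))$. Thus once the host bag of $p$ strictly contains $\{p\}$, it can never shrink back.

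Now let $j_0$ be the smallest level in $[\ell]$ at which $v_{j_0}(p) = \{p\}$ \emph{and} $v_{j_0}(p)$ is active; if no such $j_0$ exists the counter is zero and the bound holds trivially. By the prefix observation above, every level contributing to $single\_ctr_\ell(p)$ must lie in $[j_0, j^*]$, so it suffices to show $j^* - j_0 + 1 \le \gamma+\eta$.

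The core step is to apply Lemma \ref{prosper} to $v_{j_0}(p)$. A singleton bag has $|Q|=1<\ell$, so it is small; and by assumption it is active, so the lemma says it is $(\gamma+\eta)$-prospective. Assume first $j_0 \le \ell-(\gamma+\eta)$, so the definition of prospective is non-trivial: one of the $\gamma+\eta$ immediate $\hat\cF$-ancestors $\hat v^{(1)},\dots,\hat v^{(\gamma+\eta)}$ of $v_{j_0}(p)$ is a growing bag, say $\hat v^{(k)}$ with $k \le \gamma+\eta$. From the identification in the first paragraph, $\hat v^{(k)} = v_{j_0+k}(p)$. Since $\hat v^{(k)}$ is growing, $|\chi(\hat v^{(k)})| \ge 2$, and as point sets of distinct bags of the same level are disjoint, $|Q(v_{j_0+k}(p))| \ge 2$. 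Consequently $j^* \le j_0 + k - 1 \le j_0 + \gamma + \eta - 1$, yielding the desired bound. The complementary case $j_0 > \ell-(\gamma+\eta)$ is immediate since at most $\ell - j_0 + 1 \le \gamma+\eta$ levels remain.

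The step I expect to require the most care is the identification $\hat v^{(k)} = v_{j_0+k}(p)$, i.e.\ ensuring that the $\hat\cF$-ancestor chain of $v_{j_0}(p)$ coincides with the host-bag sequence of $p$. This is not a purely formal fact: it rests on the observation that if $p \in Q(v_j(p))$ and $v_j(p) \in \chi(\hat\pi(v_j(p)))$, then $p \in Q(\hat\pi(v_j(p)))$, and the host bag of $p$ at level $j+1$ is unique. Once this identification is clear, the rest of the argument is a short counting step.
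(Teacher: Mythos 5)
Your proof is correct and takes essentially the same route as the paper's: fix the first level $j_0$ at which the singleton host bag of $p$ is active, invoke Lemma \ref{prosper} to find a growing $\hat\cF$-ancestor within $\gamma+\eta$ levels, and conclude that from that point on the host bag contains a second point so $single\_ctr$ stops incrementing. The only (welcome) difference is that you make explicit the identification $\hat v^{(k)} = v_{j_0+k}(p)$ via the nesting $Q(v_j(p)) \subseteq Q(v_{j+1}(p))$, which the paper uses implicitly.
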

\begin{proof}
For a point $p \in Q$, let $i \in [\ell]$ be the smallest index
such that $p$ is not isolated in the $i$-level auxiliary spanner $\tilde G_i$,
\emph{and} the host bag $v_i(p)$ of $p$ satisfies $Q(v_i(p)) = \{p\}$.
If such an index does not exist, then obviously $single\_ctr_\ell(p) = 0$.
We henceforth assume that the index $i$ exists, and write $v = v_i(p)$.
Notice that $ctr_{1}(p) = \ldots = ctr_{i-1}(p) = 0$, and so
$single\_ctr_{1}(p) = \ldots = single\_ctr_{i-1}(p) = 0$.
By definition, the bag $v$ is active.
Thus, Lemma \ref{prosper} implies that $v$ is $(\gamma+\eta)$-prospective.
It follows that there is an index $k, 1 \le k \le (\gamma+\eta)$, such that
the $(i+k)$-level $\hat \cF$-ancestor $\hat v^{(k)}$ of $v$ is a growing bag.
Therefore, the bag $\hat v^{(k)}$ contains at least one point, in addition
to $p$. Moreover,
each $\hat \cF$-ancestor
of $\hat v^{(k)}$ also contains at least one point, in addition
to $p$. Hence $single\_ctr_{i+k}(p) = single\_ctr_{i+k+1}(p) = \ldots = single\_ctr_\ell(p)$.
In other words, the single counter of $p$ may be incremented
only during the $h$-level processing, for $h = i,i+1,\ldots,i+(k-1)$, i.e., for
at most $k \le \gamma + \eta$ times. Therefore $single\_ctr_\ell(p) \le \gamma + \eta$.
\QED
\end{proof}

Next, we argue that $plain\_ctr_\ell(p)$ is small as well.
Recall that for a point $p \in Q$ and an index $j \in [\ell]$,
$plain\_ctr_j(p)$ is the number of indices $i \in [j]$,  such that $p$ serves as a
representative of an $i$-level small bag $v$ with $|Q(v)| \ge 2$, \emph{and} $p$ is not isolated
in $\tilde G_i$.
\begin{lemma} \label{leasttwo}
Let $v \in \cF_j$ be a growing small bag, for some index $j \in [2,\ell]$.
Then the kernel set $K(v)$ of $v$ contains at least two points $p,q$
with $plain\_ctr_{j-1}(p) = plain\_ctr_{j-1}(q) = 0$.
\end{lemma}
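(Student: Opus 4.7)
The plan is to prove the lemma by strengthening the statement and inducting on the level. Consider the invariant $(\star)$: \emph{for every non-empty small bag $z \in \cF_i$, if $|Q(z)| = 1$ then the unique point of $K(z)$ satisfies $plain\_ctr_i = 0$, and if $|Q(z)| \ge 2$ then $K(z) \setminus \{r(z)\}$ contains a point $p$ with $plain\_ctr_i(p) = 0$.} The reason to demand a witness strictly outside $\{r(z)\}$ is that along a long stagnating chain the representative is inherited unchanged from child to parent and may accumulate a plain-counter increment at every level; any zero-counter witness that happens to coincide with the representative at some intermediate level would then be destroyed, so the invariant must provably carry a witness that is distinct from $r(z)$.

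Granting $(\star)$, the lemma follows at once. Since $v$ is small, Lemma \ref{kv} gives $K(v) = Q(v) = \bigcup_{z \in \chi(v)} K(z)$, and each extended child $z \in \chi(v)$ is itself small because $|Q(z)| \le |Q(v)| < \ell$. Applying $(\star)$ at level $j-1$ to each non-empty $z \in \chi(v)$ yields a point of $K(z) \subseteq K(v)$ with $plain\_ctr_{j-1} = 0$, and since $|\chi(v)| \ge 2$ and the $K(z)$ are pairwise disjoint (being subsets of the disjoint $Q(z)$), this gives the two required zero-counter points in $K(v)$.

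I would prove $(\star)$ by induction on $i$. The base $i = 1$ is immediate: counters with index $0$ are zero, $K(z) = Q(z)$, and the only way a point's plain counter can change at level $1$ is if it is the representative of a non-singleton bag, so the non-representative points in $K(z)$ (resp.\ the single point of a singleton $K(z)$) witness $(\star)$. For the step I would split according to whether the small $z \in \cF_i$ is a singleton, small growing, or small stagnating. The singleton case reduces to $(\star)$ at the unique (singleton) extended child, since processing a singleton never increments $plain\_ctr$. In the growing case all extended children are small (as $|Q(z')| \le |Q(z)| < \ell$) and $|\chi(z)| \ge 2$, so $(\star)$ furnishes at least two zero-counter witnesses in disjoint kernels $K(z')$; the representative $r(z)$, chosen by the minimum-$plain\_ctr_{i-1}$ rule from $K(z)$, then has $plain\_ctr_{i-1}(r(z)) = 0$ and may rise to at most $1$ at level $i$, while a second witness lying in a different $K(z')$ survives unchanged. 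The stagnating case is the main obstacle, and the reason $(\star)$ is stronger than ``some point of $K(z)$ has counter $0$'': here $\chi(z) = \{z'\}$, $K(z) = K(z')$ and $r(z) = r(z')$, and the strengthened form of $(\star)$ at $z'$ supplies a zero-counter witness in $K(z') \setminus \{r(z')\}$, which by construction differs from $r(z)$ and is therefore untouched by the level-$i$ update.
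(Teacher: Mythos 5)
Your proposal is correct, and the stronger invariant $(\star)$ you introduce is a clean reformulation of what the paper's argument establishes implicitly. Both rely on the same elementary fact (a point that is never a representative never accumulates plain-counter), but the decompositions differ. The paper inducts directly on the lemma statement — i.e., only over growing small bags — anchoring the base at the minimum level $i$ that contains a growing small bag. In the inductive step it traces each extended child $u$ of $v$ back along its stagnating/identical chain to its nearest growing (or level-$1$) ancestor $\tilde u$, invokes the lemma at $\tilde u$ to obtain two zero-counter witnesses, and argues that at most one of them can be consumed as the shared representative along the chain up to level $j-1$. Your version instead pushes a strengthened invariant level by level over \emph{all} non-empty small bags, with base at level $1$ and a short case split (singleton / stagnating / growing). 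The key strengthening — that the zero-counter witness can always be chosen outside $\{r(z)\}$ — is exactly what survives a stagnating chain without re-deriving it from scratch. Once $(\star)$ is in hand, the lemma is a one-line consequence: apply $(\star)$ at level $j-1$ to two distinct extended children $u,w \in \chi(v)$ and use disjointness of $K(u)$ and $K(w)$. Your approach makes the invariant explicit and avoids the paper's "minimal growing level" bookkeeping, at the cost of stating and carrying a slightly stronger statement than the lemma itself; the paper's approach is more self-contained but has to repeat the chain-tracing argument inside the induction step. Both are sound.
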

\begin{proof}
Let $i$ be the minimum index such that there exists an $i$-level growing small
bag $v$. By definition, $i \ge 2$. Also, there are no growing small
$h$-level bags, for any index $1\le h \le i-1$. \\The proof is by induction on $j$.
\\\emph{Basis: $j=i$.}
Consider a growing small bag $v \in \cF_i$. Since it is growing, we have
$|\chi(v)| \ge 2$. Let $u,w \in \chi(v)$ be two distinct
extended children of $v$. Since $v$ is small, both $u$ and $w$
are small too. As there are no growing small bags of level $h, 1 \le h \le i-1$,
it follows that there exist 1-level bags $u'$ and $w'$ such that $u'$
is identical to $u$ (and thus an $\cF$-descendant of it) and $w'$ is identical to $w$ (and an $\cF$-descendant of it).
Moreover, all bags on the path in $\cF$ that connects $u'$ to $u$
(respectively, $w'$ to $w$) are identical to both of them and have the
same representative $r(u)$ (resp., $r(w)$).

If the point set $Q(u)$ of $u$ contains just one single point, i.e.,
$Q(u) = \{r(u)\}$, then $plain\_ctr_{i-1}(r(u)) = 0$. (Its single counter
$single\_ctr_{i-1}(r(u))$ might be larger, but it was taken care of separately. See Lemma \ref{single:l}.)
Otherwise, $|Q(u)| \ge 2$. Hence $Q(u) \setminus \{r(u)\}$ contains at
least one additional point $p(u), p(u) \ne r(u)$. This point satisfies $plain\_ctr_{i-1}(p(u))=0$.
In either case the bag $u$ contains a point $q(u) \in Q(u)$, such that
$plain\_ctr_{i-1}(q(u)) = 0$. The same is true for $w$.
Moreover, $Q(u),Q(w) \subseteq Q(v)$ and $Q(u) \cap Q(w) = \emptyset$, and so
$q(u)$ and $q(w)$ are two distinct points  in $Q(v)$.
Hence $Q(v)$ contains two distinct points $q(u),q(w)$ such that $plain\_ctr_{i-1}(q(u)) = plain\_ctr_{i-1}(q(w)) = 0$.
By Lemma \ref{kv}, since $v$ is a small bag, $K(v) = Q(v)$, and we are done.
\\\emph{Induction Step: Assume the correctness of the statement for
all smaller values of $j, j \ge i+1$, and prove it for $j$.}
For a growing small bag $v \in \cF_j$, there exist two distinct small
extended children $u,w \in \chi(v) \subseteq \cF_{j-1}$.
Either $u$ is growing, or there exists an extended child $u^{(-1)}$
of $u = u^{(0)}$, which is identical to $u$.
The same argument applies to $u^{(-1)}$.
Hence, there is a sequence of bags $u = u^{(0)},u^{(-1)}, \ldots,u^{(-h)}$,
for some index $h \in [0,j-2]$, with $u^{(-k+1)} = \hat \pi(u^{(-k)})$,
for each $k \in [h]$. The bag $\tilde u = u^{(-h)} \in \cF_{j-h-1}$ is either growing
or belongs to $\cF_1$. Moreover, all bags $u = u^{(0)},u^{(-1)}, \ldots,\tilde u = u^{(-h)}$
are identical.

If the point set $Q(u)$ of $u$ contains just one single point $r(u)$,
then $plain\_ctr_{j-1}(r(u)) = 0$. (Even though its single counter may be larger.)

Otherwise $Q(u) \setminus \{r(u)\}$ contains at least one additional point $p(u),p(u) \ne r(u)$.
If $\tilde u \in \cF_1$, then $plain\_ctr_{j-1}(p(u)) = 0$.
Otherwise $j-h-1  \ge 2$ and $\tilde u$ is a growing bag. By the induction hypothesis,
$\tilde u$ contains at least two points $p_1(u),p_2(u)$ with $plain\_ctr_{j-h-2}(p_1(u))
= plain\_ctr_{j-h-2}(p_2(u)) = 0$. One of these points may become the representative
of $\tilde u$ (and, consequently, of all the $h$ bags $u^{(-h+1)},u^{(-h+2)},\ldots,u^{(0)} = u$
that are identical to $\tilde u$), and, as a result its $(j-1)$-level plain counter
may become positive. However, the other one will have plain counter equal
to 0 on all levels $j-h-1,j-h,\ldots,j-1$. Thus
either $plain\_ctr_{j-1}(p_1(u)) = 0$ or $plain\_ctr_{j-1}(p_2(u)) = 0$ must hold.
Hence in both cases $Q(u) \setminus \{r(u)\}$ contains at least one point $q(u)$
with $plain\_ctr_{j-1}(q(u)) = 0$.

We showed that in all cases $Q(u)$ contains at least one point $q(u)$
with $plain\_ctr_{j-1}(q(u)) = 0$.
Similarly, the bag $w$ also contains a point
$q(w) \in Q(w)$ with $plain\_ctr_{j-1}(q(w)) = 0$. Since $u,w \in \chi(v)$,
it follows that $q(u),q(w) \in Q(v)$. Moreover, $Q(u) \cap Q(w) = \emptyset$, and so
$q(u)$ and $q(w)$ are distinct.   
By Lemma \ref{kv}, since $v$ is a small bag,
$K(v) = Q(v)$, which completes the proof.
\QED
\end{proof}

Next, we provide an upper bound for plain counters of points in $Q$.
\begin{lemma} \label{toim:l}
For any point $p \in Q$, $plain\_ctr_\ell(q) \le \gamma + \eta$.
\end{lemma}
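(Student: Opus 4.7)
The plan is to let $j^{*}$ denote the smallest level at which $plain\_ctr(\cdot)$ of $p$ increases (the case $plain\_ctr_\ell(p) = 0$ is trivial), and then to show that all plain-counter increments for $p$ lie in the window $\{j^{*}, j^{*}+1, \ldots, j^{*}+h-1\}$ for some $h \le \gamma + \eta$. By definition of $j^{*}$, the host bag $v_{j^{*}}(p)$ is small, active, non-singleton, with $p = r(v_{j^{*}}(p))$, and $plain\_ctr_{j^{*}-1}(p) = 0$ while $plain\_ctr_{j^{*}}(p) = 1$. If $j^{*} > \ell - (\gamma + \eta)$, there are at most $\gamma + \eta$ remaining levels and we are done; otherwise, by Lemma \ref{prosper}, $v_{j^{*}}(p)$ is $(\gamma+\eta)$-prospective, so one can take $h \in [1, \gamma + \eta]$ to be the smallest index with $\hat v^{(h)}$ growing, where $\hat v^{(k)}$ denotes the $k$-th $\hat\cF$-ancestor of $v_{j^{*}}(p)$.

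For $i \in [0, h-1]$ the bag $\hat v^{(i)}$ is stagnating (by minimality of $h$, for $i \ge 1$), and a stagnating bag has $|\chi(\cdot)| = 1$, inheriting both its point set and its representative from the unique surviving child. Hence $Q(\hat v^{(i)}) = Q(v_{j^{*}}(p))$ and $r(\hat v^{(i)}) = p$ for every $i \in [0, h-1]$; in particular all these bags are small, non-singleton, with $p$ as representative. Thus the plain counter of $p$ can increase at most once per such level, giving at most $h \le \gamma + \eta$ increments across $\{j^{*}, \ldots, j^{*}+h-1\}$.

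The key remaining step is to show that $plain\_ctr(p)$ does not increase for any $j \ge j^{*} + h$. I plan an induction on $k \ge h$ showing that either $\hat v^{(k)}$ is large or $r(\hat v^{(k)}) \ne p$. The base case $k = h$ splits by whether $\hat v^{(h)}$ is large (nothing to prove for plain counter) or small; in the latter, Lemma \ref{leasttwo} provides two points of $K(\hat v^{(h)})$ with $plain\_ctr_{j^{*}+h-1} = 0$, while $plain\_ctr_{j^{*}+h-1}(p) \ge 1$, so the selection rule of Section \ref{sec24} excludes $p$. For the inductive step, since $Q(\hat v^{(k-1)}) \subseteq Q(\hat v^{(k)})$, a small $\hat v^{(k)}$ forces $\hat v^{(k-1)}$ to be small; if $\hat v^{(k)}$ is stagnating, $r(\hat v^{(k)}) = r(\hat v^{(k-1)}) \ne p$ by the inductive hypothesis, and if it is growing small, Lemma \ref{leasttwo} applies exactly as in the base case.

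The main obstacle I anticipate is the bookkeeping in this induction, specifically ensuring that the two favorable regimes (``large'' and ``not $p$'s representative'') propagate correctly as one ascends the $\hat\cF$-chain, which requires carefully exploiting both the nesting $Q(\hat v^{(k-1)}) \subseteq Q(\hat v^{(k)})$ (to rule out a small $\hat v^{(k)}$ sitting atop a large $\hat v^{(k-1)}$) and the strict separation between $p$'s plain counter and the zero-plain-counter points guaranteed by Lemma \ref{leasttwo}. Combining the at-most-$h$ increments in $[j^{*}, j^{*}+h-1]$ with the no-further-increments conclusion yields $plain\_ctr_\ell(p) \le h \le \gamma + \eta$.
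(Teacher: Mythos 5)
Your proof is correct and takes essentially the same approach as the paper: you identify the first level $j^*$ at which $plain\_ctr(p)$ increments, invoke Lemma~\ref{prosper} to find a growing $\hat\cF$-ancestor within $\gamma+\eta$ levels, and then combine Lemma~\ref{leasttwo} with the representative-selection rule and an induction up the $\hat\cF$-chain to show the plain counter never increments again. The paper phrases the induction as Claim~\ref{toim} on the host bags $v_k(q)$ for $k\ge j$, which is the same object as your $\hat v^{(k)}$ chain; the minor difference of tracking $h$ rather than $j$ is purely notational.
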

\begin{proof}
Consider a point $q \in Q$,
and suppose that $plain\_ctr_\ell(q) > 0$. Let $i \in [\ell]$ be the smallest
index such that the plain counter of $q$ is incremented during
the $i$-level processing, i.e., $plain\_ctr_{i-1}(q) = 0, plain\_ctr_i(q) = 1$.
It follows that the $i$-level host bag $v = v_i(p)$ is active and small, and
also $q = r(v)$. Moreover $|Q(v)| \ge 2$. Denote $\beta = \gamma + \eta$.
If $i > \ell - \beta$ then the plain counter of $q$ is incremented at most
$\beta$ times, on levels $i,i+1,\ldots,\ell$. Hence in this case $plain\_ctr_\ell(q) \le \beta$,
as required. Otherwise, let $j$ denote the smallest level of an $\hat \cF$-ancestor
$u$ of $v$ such that $u$ is a growing bag.
By Lemma \ref{prosper}, $j$ is well-defined, with $i < j \le i + \beta \le \ell$.
Consider the $j-i$ immediate $\hat \cF$-ancestors of $v = \hat v^{(0)}$, i.e.,
$\hat v^{(1)}= \hat \pi(\hat v^{(0)}),\ldots,\hat v^{(j-i)} = u = \hat \pi(\hat v^{(j-i-1)})$.
The bags $\hat v^{(1)},\ldots,\hat v^{(j-i-1)}$ are identical to $v$, and
have the same representative $r(v) = q$. If $u$ is a large bag then all
its $\hat \cF$-ancestors are large as well. Also, $p \in Q(u)$, and for all
indices $k \ge j$, $p$ belongs to the point set of the $k$-level $\hat \cF$-ancestor
of $u$. Hence the plain counter of $p$ is not incremented during the $k$-level
processing, for all $k \ge j$.

Suppose now that $u$ is small. Since it is growing, by Lemma \ref{leasttwo},
its kernel set $K(u)$ contains at least two points $p,p'$ with plain counter
zero, i.e., $plain\_ctr_{j-1}(p) = plain\_ctr_{j-1}(p') = 0$.
On the other hand, $plain\_ctr_{j-1}(q) \ge plain\_ctr_i(q) = 1$.
Hence $q$ is not the representative of $u$. More generally, we have the following claim.
\begin{claim} \label{toim}
Let $w = v_k(q)$ be the $k$-level host bag of $q$, for some index $k \ge j$.
If $w$ is small then $q$ is not the representative of $w$.
\end{claim}
\begin{proof}
The proof is by induction on $k$. The basis $k=j$ was already proved.
\\\emph{Induction Step: Assume the correctness of the statement
for all smaller values of $k, k \ge j+1$, and prove it for $k$.}
If $w$ is not growing, then it is identical to an $\hat \cF$-descendant
$w' \in \cF_{k'}$, for some $k' < k$. Hence $r(w) = r(w')$.
By the induction hypothesis, $r(w') \ne q$, and so $r(w) \ne q$ as well.

Otherwise, $w$ is growing. By Lemma \ref{leasttwo}, its kernel set $K(w)$
contains at least two points $p,p'$ with plain counter
zero, i.e., $plain\_ctr_{k-1}(p) = plain\_ctr_{k-1}(p') = 0$.
On the other hand, $plain\_ctr_{k-1}(q) \ge plain\_ctr_i(q) = 1$.
Hence $q$ is not the representative of $w$.
Claim \ref{toim} follows.\QED
\end{proof}

We now continue proving Lemma \ref{toim:l}.
\\By Claim \ref{toim}, if $v_k(q)$ is small then the plain counter
of $q$ is not incremented during the $k$-level processing, for all $k \ge j$.
If $v_k(q)$ is large, then obviously, it is not incremented either.
Hence, for any $k \ge j$, the plain counter of $q$ is not incremented
during the $k$-level processing, and so $plain\_ctr_\ell(q) = plain\_ctr_j(q)$.
Thus the plain counter of $q$ may grow only during the $k$-level processing,
for $i \le k < j$.
It follows that $plain\_ctr_\ell(q) \le j-i \le \beta = \gamma + \eta$.
\QED
\end{proof}

Recall that for any $q \in Q$, $load\_ctr_\ell(q) = CTR_\ell(q) + ctr_\ell(q)
= CTR_\ell(q) + single\_ctr_\ell(q) + plain\_ctr_\ell(q).$
Hence, Lemmas \ref{large:l}, \ref{single:l} and \ref{toim:l} imply the following corollary.
\begin{corollary} \label{deg:c}
For any point $q \in Q$, $load\_ctr_\ell(q) \le 2 \cdot (\gamma + \eta) + 1$.
\end{corollary}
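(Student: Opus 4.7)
The plan is to observe that this corollary is essentially a direct bookkeeping step that combines the three preceding lemmas via the additive decomposition of the load counter. Specifically, by the definitions introduced in Section \ref{sec24}, for every point $q \in Q$ and every index $j \in [\ell]$ we have $load\_ctr_j(q) = CTR_j(q) + ctr_j(q)$, and moreover $ctr_j(q) = single\_ctr_j(q) + plain\_ctr_j(q)$ (since every increment to the small counter either comes from a singleton host bag or from a non-singleton one). Applying this at $j = \ell$ gives
\[
load\_ctr_\ell(q) \;=\; CTR_\ell(q) + single\_ctr_\ell(q) + plain\_ctr_\ell(q).
\]

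I would then plug in the three bounds already established. Lemma \ref{large:l} gives $CTR_\ell(q) \le 1$, capturing the fact that after $q$ has served as representative of one atomically large bag, the subsequent choices of representative for its large ancestors can always route the load to some other kernel point. Lemma \ref{single:l} gives $single\_ctr_\ell(q) \le \gamma+\eta$, since once the host bag becomes a singleton and active, Lemma \ref{prosper} guarantees a growing $\hat \cF$-ancestor within $\gamma+\eta$ levels, after which $q$ no longer lives alone. Lemma \ref{toim:l} gives $plain\_ctr_\ell(q) \le \gamma+\eta$ by the same prospectiveness argument combined with Lemma \ref{leasttwo}, which ensures at least two plain-counter-zero candidates in the kernel of any growing small bag, so that $q$ stops being selected as representative.

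Adding the three bounds yields $load\_ctr_\ell(q) \le 1 + (\gamma+\eta) + (\gamma+\eta) = 2(\gamma+\eta)+1$, which is exactly the claim. There is no genuine obstacle here since all the work was done in Lemmas \ref{large:l}, \ref{single:l}, and \ref{toim:l}; the only thing to verify carefully is that the decomposition $ctr_j = single\_ctr_j + plain\_ctr_j$ is indeed consistent with how the three counters were defined (which is immediate from Section \ref{sec24}, where $plain\_ctr_j$ was defined precisely as $ctr_j - single\_ctr_j$). Thus the proof is a one-line combination of the three preceding lemmas.
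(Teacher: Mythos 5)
Your proposal is correct and matches the paper's argument exactly: the paper also decomposes $load\_ctr_\ell(q) = CTR_\ell(q) + single\_ctr_\ell(q) + plain\_ctr_\ell(q)$ (using $plain\_ctr_j = ctr_j - single\_ctr_j$ from Section \ref{sec24}) and then sums the bounds of Lemmas \ref{large:l}, \ref{single:l}, and \ref{toim:l}.
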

Observe that each time that the load counter of a point $q$ is incremented,
its degree in the constructed spanner  grows by at most $O(\Delta(n))$.
(This is because the maximum degree of the $j$-level auxiliary spanner
$\tilde G_j$ is $O(\Delta(n))$, for each $j \in [\ell]$; see Observation \ref{auxdeg}.)
Hence, Corollary \ref{deg:c} implies that the maximum degree of any
$q \in Q$ in the graph $\tilde G_1 \cup \ldots \cup \tilde G_\ell$
is $O(\Delta(n) \cdot (\gamma + \eta)) = O(\Delta(n) \cdot \gamma)$.
The 0-level auxiliary spanner $\tilde G_0$ contributes at most
$O(\Delta(n))$ to the maximum degree of the final spanner $\tilde G$;
also, the path-spanner $H$ has maximum degree $O(\rho)$,
and the base edge set $\cB$ contributes an additive term of $O(1)$
to $\Delta(\tilde G$). (See the beginning of this section.)
We summarize the degree analysis with the next statement.
\begin{lemma} \label{degr2}
$\Delta(\tilde G) = O(\Delta(n) \cdot \gamma + \rho) = O(\Delta(n) \cdot \log_\rho (t/\eps) + \rho)$.
\end{lemma}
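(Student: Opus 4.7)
The edge set of $\tilde G$ decomposes as $\tilde E = E_H \cup \cB \cup \bigcup_{j=0}^\ell \tilde E_j$, so $\Delta(\tilde G) \le \Delta(H) + \Delta(\cB) + \Delta(\tilde G_0) + \sum_{j=1}^\ell \Delta_j(q)$, where $\Delta_j(q)$ denotes the degree of a point $q$ in the $j$-level auxiliary spanner $\tilde G_j$. The plan is to bound each of these four contributions separately and to exploit the fact, built into the load accounting of Section~\ref{sec24}, that a point $q$ can acquire edges in $\tilde G_j$ only when $load_j(q)=1$, i.e.\ only on the levels counted by $load\_ctr_\ell(q)$.

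For the three ``simple'' contributions: the path-spanner $H$ is built via Theorem~\ref{1span}, hence $\Delta(H) = O(\rho)$; Corollary~\ref{degbase} gives $\Delta(\cB) \le 2$; and for the auxiliary spanner $\tilde G_0 = G^*_0$, pruning can only decrease degree, so Observation~\ref{auxdeg} (applied at level $0$ via the same argument, i.e.\ the union of $G^*_0$ with no $\hat G_0$) yields $\Delta(\tilde G_0) = O(\Delta(n))$. These three terms together contribute an additive $O(\Delta(n) + \rho)$, which is absorbed in the final bound.

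The heart of the argument is bounding $\sum_{j=1}^\ell \Delta_j(q)$ for a fixed point $q \in Q$. For each index $j \in [\ell]$, if $q$ is isolated in $\tilde G_j$ then it contributes nothing; otherwise $load_j(q) = 1$. Whenever $load_j(q) = 1$, Observation~\ref{auxdeg} gives $\Delta_j(q) = O(\Delta(n))$. Therefore
\[
\sum_{j=1}^\ell \Delta_j(q) \;\le\; load\_ctr_\ell(q) \cdot O(\Delta(n)).
\]
By Corollary~\ref{deg:c}, $load\_ctr_\ell(q) \le 2(\gamma+\eta)+1 = O(\gamma)$ (recall $\eta = 2\kappa+3 = O(\log_\rho t) = O(\gamma)$). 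Combining this with the bounds above,
\[
\Delta(\tilde G) \;\le\; \Delta(H) + \Delta(\cB) + \Delta(\tilde G_0) + \max_{q \in Q} \sum_{j=1}^\ell \Delta_j(q) \;=\; O(\Delta(n) \cdot \gamma + \rho).
\]
Finally, substituting $\gamma = c_0 \cdot (\lceil \log_\rho t\rceil + \lceil \log_\rho c\rceil + 1) = \Theta(\log_\rho(t/\eps))$ (as set at the start of Section~\ref{section3}) yields $\Delta(\tilde G) = O(\Delta(n) \cdot \log_\rho(t/\eps) + \rho)$. The only nontrivial input here is the load-counter bound of Corollary~\ref{deg:c}; everything else is bookkeeping.
\QED
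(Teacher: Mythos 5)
Your proof is correct and follows essentially the same route as the paper: decompose $\tilde E$ into the path-spanner $E_H$, the base edge set $\cB$, the $0$-level spanner $\tilde E_0$, and the level-$j$ auxiliary spanners; bound the three simple pieces by $O(\Delta(n)+\rho)$; and, for the remaining levels, observe that a point accrues degree in $\tilde G_j$ only when $load_j(q)=1$, so its total degree across levels $1,\ldots,\ell$ is at most $load\_ctr_\ell(q)\cdot O(\Delta(n))$ by Observation~\ref{auxdeg}, which Corollary~\ref{deg:c} bounds by $O(\gamma)\cdot O(\Delta(n))$. Your closing remark that the only nontrivial ingredient is Corollary~\ref{deg:c} matches the paper's emphasis exactly.
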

{\bf Deriving Theorem \ref{ourresult}.}
Lemmas \ref{edgebound2}, \ref{time2}, \ref{lmstretch2} and \ref{degr2},
and Corollary \ref{wtcorver2}, imply Theorem \ref{ourresult}.

In other words, we devised a transformation that, given a construction of $t$-spanners with $SpSz(n)$
edges, degree $\Delta(n)$ and diameter $\Lambda(n)$ which requires $SpTm(n)$ time, and given parameters $\rho \ge 2$
and $\eps > 0$, provides a construction of $(t+\eps)$-spanners with $O(SpSz(n) \cdot \log_\rho (t/\eps))$ edges,
degree $O(\Delta(n) \cdot \log_\rho (t/\eps) + \rho)$, diameter $O(\Lambda(n) + \log_\rho n + \alpha(\rho))$, and lightness
$O(\frac{SpSz(n)}{n} \cdot \rho \cdot \log_\rho n \cdot (t^3/\eps))$. The latter construction requires
$O(SpTm(n) \cdot \log_\rho (t/\eps) + n \cdot \log n)$ time.

Substitute into this transformation a construction of $(1+\eps)$-spanners with $O(n)$ edges, degree $O(\rho)$
and diameter $O(\log_\rho n +\alpha(\rho))$, which runs within $O(n \cdot \log n)$ time \cite{ADMSS95,GR082,SE10}.
We obtain a construction of $(1+2\eps)$-spanners with $O(n)$ edges, degree $O(\rho)$, diameter $O(\log_\rho n +\alpha(\rho))$
and lightness $O(\rho \cdot \log_\rho n)$, which requires $O(n \cdot \log n)$ time as well.
(Observe that $t = 1+\eps$, and so $\log_\rho (t/\eps) = \frac{\log(1+\frac{1}{\eps})}{\log \rho} = O(1)$.
Also, we can rescale $2\eps = \eps'$.)
For $\rho = O(1)$ this proves Conjecture 1. Moreover, due to lower bounds of \cite{CG06,DES08}, this result is tight up
to constant factors in the entire range of the parameter $\rho$.
\vspace{0.25in}
\\
{\bf Acknowledgments.}
~~The second-named author is indebted to Michiel Smid, for many helpful and timely comments, and for
his constant support and willingness to help. Also, we wish to thank Adi Gottlieb, for referring us to \cite{GR082},
and for many helpful discussions.

\ignore{
\section{INTUITIVE ESSAY (last page of FOCS submission)}
To analyze the number of edges and weight in $\tilde G = (Q,\tilde E)$,
note that, roughly speaking, $\tilde G$ is a union of $\ell+1$ auxiliary
spanners $\tilde G_j = (Q_j,\tilde E_j), j \in [0,\ell]$.
(It also contains the base edge set $\cB$ and the path-spanner $H$.
However, their contribution can be neglected.)
For the case of Euclidean and doubling metrics, $|\tilde E_j| = O(|Q_j|)$,
for each $j \in [0,\ell]$. Hence $|\tilde E| \approx \sum_{j=0}^\ell |\tilde E_j|
= O(\sum_{j=0}^\ell |Q_j|)$. Since the sequence
$|Q| = |Q_0|, |Q_1|,\ldots,|Q_\ell|$ decays geometrically,
we have $|\tilde E| = O(|Q_0|) = O(n)$.
(Formally, the sequence starts to decay from the
$O(\log_\rho (t/\eps))$th element.)
For the weight analysis, recall that each auxiliary spanner
$\tilde G_j$ is pruned according to the weight threshold $\tau_j$.
These thresholds grow geometrically,
at the same rate as the cardinalities of the sets $Q_j$ decay.
That is, $|Q_j| \le \frac{O(|Q_0|)}{\rho^{j-1}}$,
and $\tau_j = \rho^j \cdot \tau_0$. Hence, roughly speaking,
$\omega(\tilde E) \approx \sum_{j=0}^\ell \omega(\tilde E_j) = \sum_{j=0}^\ell O(|Q_j| \cdot \tau_j)
= O(\rho \cdot \log_\rho n) \cdot \omega(MST(M[Q]))$.
For general metrics the analysis is very similar to the above.
The analysis of the running time is similar to the analysis of $|\tilde E|$.
See Section \ref{app:rig} for a rigorous analysis of the number of edges,
weight and running time.

The degree analysis, however, is far more involved.
Together with the analysis of the stretch and the (hop-)diameter of $\tilde G$
which was provided above, the degree analysis constitutes the most complex
part of this paper. We next sketch the intuition behind it.
}






\ignore{
The next theorem summarizes our main result. (See also Appendix \ref{appA}.)
[[S: Let's remove the proof of this theorem and the paragraph that follows it, OK?]]
\begin{theorem}  \label{EucCom2}
For any $n$-point doubling metric $M$, any $\eps>0$ and any parameter $\rho \ge 2$, there exists a
$(1+\eps)$-spanner with $O(n)$ edges, degree $O(\rho)$, diameter $O(\log_\rho n + \alpha(\rho))$ and lightness $O(\rho \cdot
\log_\rho n)$.
The running time of this construction is $O(n \cdot \log n)$.
\end{theorem}
\begin{proof}
By Theorem \ref{twoprop}, there exists an algorithm (henceforth, Algorithm $BasicSp$) which builds, for any $n$-point doubling metric $M$,
any $t > 1$ and a parameter $\rho \ge 2$, a $t$-spanner for $M$
with at most $SpSz(n)$ edges, degree at most $\Delta(n)$ and diameter at most $\Lambda(n)$,
where $SpSz(n) = O(n), \Delta(n) = O(\rho), \Lambda(n) = O(\log_\rho n + \alpha(\rho))$.
Moreover, Algorithm $BasicSp$ requires
$SpTm(n) = O(n \cdot \log n)$ time.
By Theorem \ref{ourresult},
Algorithm $LightSp$ builds, for any $n$-point doubling metric $M$ and any $\eps > 0$, a $(t+\eps)$-spanner for $M$
with $O(SpSz(n) \cdot \log_\rho(t/\eps)) = O(n \log_\rho(t/\eps))$ edges,
degree $O(\Delta(n) \cdot \log_\rho(t/\eps)+\rho) = O(\rho \cdot \log_\rho(t/\eps))$, diameter $O(\Lambda(n)+\log_\rho n + \alpha(\rho)) = O(\log_\rho n + \alpha(\rho))$
and weight $O\left(\frac{SpSz(n)}{n} \cdot \rho \cdot \log_\rho n \cdot t^2/\eps \right) \cdot \omega(MST(M)) =
O(\rho \cdot \log_\rho n \cdot t^2/\eps) \cdot \omega(MST(M))$.
The running time of Algorithm $LightSp$ is $O(SpTm(n) \cdot \log_\rho (t/\eps) + n \cdot \log n) = O(n \cdot \log n  \cdot \log_\rho (t/\eps))$.
Set $t = 1+\eps$, and denote $\eps' = 2\eps$.
We obtain a $(1+\eps')$-spanner with $O(n \log_\rho(1/\eps'))$ edges, degree
$O(\rho \cdot \log_\rho(1/\eps'))$, diameter $O(\log_\rho n + \alpha(\rho))$,
and weight $O(\rho \cdot \log_\rho n \cdot 1/\eps') \cdot \omega(MST(M))$.
The running time is $O(n \cdot \log n \cdot \log_\rho (1/\eps'))$.
\QED
\end{proof}
Theorem \ref{EucCom2} provides a unified $O(n \cdot \log n)$ time construction of spanners for doubling metrics
(in particular, for constant-dimensional Euclidean metrics) that achieves the optimal
tradeoff between the degree, diameter and weight. For $\rho= O(1)$ this gives rise to constant degree, and logarithmic diameter and lightness.
Thus we settle the conjecture of Arya et  al.\ \cite{ADMSS95} in the affirmative.

[[S: If we remove the proof of this theorem, let's give the following paragraph:]]
As  mentioned in Section 1.6, this theorem follows from Theorem \ref{ourresult}
by instantiating the algorithm from Theorem \ref{twoprop} as Algorithm $BasicSp$.
Theorem \ref{EucCom2} implies Conjecture 1 of Arya et  al.\ \cite{ADMSS95}
by setting $\rho = O(1)$ and observing that any Euclidean
metric of constant dimension is a doubling metric.
}

\clearpage

\clearpage
\pagenumbering{roman}
\appendix
\centerline{\LARGE\bf Appendix}

\section{The General Result} \label{appA}
Next we explicate the dependence on $\eps$ and the doubling dimension in our main result.   
\begin{theorem}  \label{Gen}
For any $n$-point metric $M$ with an arbitrary (not necessarily constant) doubling dimension $dim(M)$, any $\eps>0$ and any parameter $\rho \ge 2$, there exists a
$(1+\eps)$-spanner with $n \cdot \eps^{-O(dim(M))}$ edges, degree $\rho \cdot \eps^{-O(dim(M))}$, diameter $O(\log_\rho n + \alpha(\rho))$ and lightness $(\rho \cdot
\log_\rho n) \cdot \eps^{-O(dim(M))}$.
The running time of this construction is $(n \cdot \log n) \cdot \eps^{-O(dim(M))}$.
\end{theorem}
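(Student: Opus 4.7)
The plan is to derive Theorem \ref{Gen} as an immediate corollary of the transformation Theorem \ref{ourresult}, by instantiating Algorithm $BasicSp$ with the construction of Theorem \ref{twoprop}, but this time carefully tracking the dependence on $\eps$ and on the doubling dimension $d := dim(M)$ through both ingredients.

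First I would restate Theorem \ref{twoprop} in its explicit quantitative form (this is essentially the content of Appendix \ref{appB} with the constants in the exponent of $\eps^{-1}$ kept as $O(d)$ rather than absorbed into ``$O(1)$''). The standard net-tree / $\Theta$-graph style construction in a metric of doubling dimension $d$ yields a $(1+\eps)$-spanner with $SpSz(n) = n \cdot \eps^{-O(d)}$ edges, degree $\Delta(n) = \rho \cdot \eps^{-O(d)}$, diameter $\Lambda(n) = O(\log_\rho n + \alpha(\rho))$, constructible in $SpTm(n) = (n \log n) \cdot \eps^{-O(d)}$ time. All four functions are monotone non-decreasing, convex, and vanish at zero, so the hypotheses of Theorem \ref{ourresult} are satisfied.

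Second, I would apply Theorem \ref{ourresult} with stretch parameter $t = 1+\eps/2$ and accuracy parameter $\eps/2$ in place of $\eps$, so that the output stretch is $(t + \eps/2) = 1+\eps$. Plugging the bounds of $BasicSp$ into the conclusion of Theorem \ref{ourresult} yields
\[
|H'| = O\!\left(n \cdot \eps^{-O(d)} \cdot \log_\rho(1/\eps)\right),\
\Delta(H') = O\!\left(\rho \cdot \eps^{-O(d)} \cdot \log_\rho(1/\eps) + \rho\right),
\]
diameter $O(\log_\rho n + \alpha(\rho))$, lightness $O(\eps^{-O(d)} \cdot \rho \cdot \log_\rho n \cdot (1/\eps))$, and running time $O((n \log n) \cdot \eps^{-O(d)} \cdot \log_\rho(1/\eps) + n \log n)$.

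Finally, I would observe that the only additional overhead introduced by the transformation is a $\log_\rho(t/\eps) = O(\log(1/\eps))$ factor on the edge count, degree and running time, and an additional $O(1/\eps)$ factor on the lightness. Since $\log(1/\eps) \le \eps^{-1} \le \eps^{-O(d)}$ (using $d \ge 1$), both of these overheads are absorbed into the $\eps^{-O(d)}$ term, yielding exactly the bounds in the statement of Theorem \ref{Gen}. There is no real obstacle here: the proof is a bookkeeping exercise that tracks the $\eps$ and $d$ dependencies explicitly through the black-box composition; the only point requiring any care is to verify that $\log_\rho(1/\eps)$ is indeed dominated by $\eps^{-O(d)}$, which holds uniformly in $\rho \ge 2$ and $d \ge 1$.
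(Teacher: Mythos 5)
Your proposal is correct and follows essentially the same route the paper intends: the paper gives no explicit proof of Theorem~\ref{Gen} (it only states it in Appendix~\ref{appA} with the remark that it ``explicates the dependencies''), and the implied argument is precisely yours — plug the doubling-metric base construction of Theorem~\ref{twoprop} into the transformation of Theorem~\ref{ourresult} and track $\eps$ and $dim(M)$ through the bounds, then absorb the $\log_\rho(1/\eps)$ and $1/\eps$ factors into $\eps^{-O(dim(M))}$. One small imprecision: for doubling metrics the relevant $BasicSp$ construction (Appendix~\ref{appB}) is the Gottlieb--Roditty spanner of \cite{GR082} combined with the tree-shortcutting of \cite{SE10}, not a ``$\Theta$-graph style'' construction — $\Theta$-graphs are Euclidean-specific — but this does not affect the bookkeeping since that construction also has $\eps^{-O(dim(M))}$ edges and degree overhead, and the rest of your calculation goes through unchanged.
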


\section{Proof of Theorem \ref{twoprop}} \label{appB}
This appendix is devoted to the proof of Theorem \ref{twoprop}.
For Euclidean metrics Arya et al.\ \cite{ADMSS95}
proved this theorem for the case $\rho = 2$, and
the authors of the current paper generalized it in
\cite{SE10} to the entire range of the degree parameter $\rho$.
For doubling metrics the proof of this theorem is based on
the works of \cite{GR082} and \cite{SE10}.
We provide it here for the sake of completeness.

Let $M = (P,\delta)$ be an $n$-point doubling metric.
A $(1+\eps)$-spanner $H$ for $M$
is called a \emph{tree-like spanner}, if it contains a tree $T$
that satisfies the following conditions:
\begin{enumerate}
\item Each vertex $v$ of $T$ is assigned a representative point $r(v) \in P$.
\item
There is a 1-1 correspondence between the points of $P$ and the representatives of the leaves of $T$.
\item
Each internal vertex is assigned a unique representative.
(Thus, each point of $P$
will be the representative of at most two vertices of $T$.)   
In particular, there are at most $2n$ vertices in $T$.
\item For any two points $p,q \in P$, there is a $(1+\eps)$-spanner path in $H$ between $p$ and $q$
that is composed of three consecutive parts: (a) a path ascending the edges of $T$,
(b) a single edge, and (c) a path descending the edges of $T$. (Each edge $e = (u,v)$ in $T$
is translated into an edge $(r(u),r(v))$ in $H$.)
\end{enumerate}
We say that such a tree $T$ is a \emph{tree-skeleton} of the spanner $H$.

Gottlieb and Roditty \cite{GR082} proved the following theorem.  
(See also \cite{GGN04,CGMZ05,CoG06,Rod07,GR081}
for a number of earlier related works.)
\begin{theorem} [\cite{GR082}] \label{GR}
For any $n$-point doubling metric $M = (P,\delta)$ and any $\eps > 0$, one can
build in $O(n \cdot \log n)$ time a $(1+\eps)$-spanner $H$
and a tree-skeleton $T$ for $H$, such that  both $H$ and $T$ have constant degree.
\end{theorem}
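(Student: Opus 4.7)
\textbf{Proof plan for Theorem \ref{GR}.} The plan is to build the spanner and tree-skeleton from a hierarchical net decomposition, using the doubling dimension to control degrees.

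First, I would construct a hierarchy of nets $N_0 \supseteq N_1 \supseteq \cdots \supseteq N_\ell$, where $N_0 = P$ and $N_i$ is a $2^i$-net of $N_{i-1}$ (i.e., points in $N_i$ are pairwise at distance $> 2^i$, and every point of $N_{i-1}$ is within distance $2^i$ of some point in $N_i$). Here $\ell = O(\log \Delta)$ where $\Delta$ is the aspect ratio. Using standard net-tree technology (e.g.\ Har-Peled and Mendel \cite{HPM06}, or the analogous cover-tree variant of Gottlieb and Roditty \cite{GR082}), this hierarchy and a pointer structure linking each point of $N_{i-1}$ to a nearest point in $N_i$ can be computed in $O(n \log n)$ time for constant doubling dimension.

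Next, I would form the tree $T$. Each net-point $p \in N_i$ contributes a node at level $i$, whose children are the nodes at level $i-1$ assigned to $p$; the natural representative of such a node is $p$ itself. To guarantee $|V(T)| = O(n)$ and constant degree in $T$, I would apply the standard path-compression: a chain of nodes at consecutive levels all representing the same net-point and having only one child is collapsed into a single edge. After compression, every internal node has at least two children (so the number of leaves is $n$ and $|V(T)| \le 2n$), and by the packing property of doubling metrics each node has at most $2^{O(d)}$ children, so $T$ has constant degree. Each point of $P$ serves as a representative of at most two nodes of $T$ (its leaf and, possibly, one internal ``root of its chain'' after compression), giving property~3 of a tree-skeleton.

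Then I would add the spanner edges. For each level $i$ and each node $v$ of $T$ at level $i$, I add an edge $(r(v), r(u))$ for every other level-$i$ node $u$ with $\delta(r(v), r(u)) \le (c/\eps)\cdot 2^i$, where $c$ is a suitably large constant. The packing lemma for doubling metrics bounds the number of such $u$ by $(c/\eps)^{O(d)} = O_\eps(1)$, so each level contributes constant degree to each vertex of $T$. Since each point of $P$ represents only $O(1)$ nodes of $T$ after compression, the total degree in $H$ is constant.

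The verification of the spanner property is the main remaining step and follows the standard three-part-path argument: for $p,q \in P$ with $2^i \le \delta(p,q) \le 2^{i+1}$, let $v_p, v_q$ be the ancestors of the leaves for $p,q$ at level roughly $i - \log_2(1/\eps)$; their representatives are within $O(\eps \cdot \delta(p,q))$ of $p$ and $q$ respectively, and within distance at most $(c/\eps)\cdot 2^{i-\log(1/\eps)}$ of each other, so the construction includes a direct edge between them. Concatenating the ascending tree-path from $p$ to $r(v_p)$, this cross edge, and the descending tree-path from $r(v_q)$ to $q$ yields a path of length $(1+O(\eps))\cdot \delta(p,q)$; rescaling $\eps$ gives the claimed stretch. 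The main obstacle is calibrating the level offset and the constant $c$ so that (i) the cross edge exists, and (ii) the detour through ancestor representatives costs only an $\eps$-fraction of $\delta(p,q)$; both are handled by choosing the offset as $\Theta(\log(1/\eps))$ and $c$ sufficiently large relative to this offset, exactly as in \cite{GR082,SE10}.
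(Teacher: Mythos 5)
The paper does not prove Theorem~\ref{GR}; it is quoted as a black-box result of Gottlieb and Roditty \cite{GR082}, so there is no internal proof to compare your sketch against. Your plan follows the standard net-hierarchy paradigm that underlies \cite{GR082} (and much of the doubling-metric spanner literature): a compressed hierarchy of $2^i$-nets, cross edges at scale $\Theta(2^i/\eps)$ to bound degree via the packing lemma, and the ascend/cross/descend argument for the $(1+\eps)$ stretch. That is the right route, and the calibration of the level offset $\Theta(\log(1/\eps))$ against the cross-edge radius is exactly the standard one.

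One concrete gap. You assert that after path-compressing unary chains each point of $P$ is the representative of at most two nodes of $T$ (its leaf and ``one internal root of its chain''), and hence property~3 of a tree-skeleton holds. That is not correct as stated: a net point $p$ typically survives in $N_i$ for many consecutive scales $i$, and compression removes only the degree-one links of $p$'s chain; one node per \emph{branch point} of the chain survives. If $p$ acquires new children at several scales (which readily happens, e.g.\ in a point set that splits dyadically), then $p$ still represents several internal nodes after compression, and the internal representatives are not unique. The fix is standard but needs to be said: reassign internal representatives bottom-up. After compression every internal node has at least two children, so it receives at least two distinct candidate representatives from its subtrees; it claims one as $r(v)$ and passes a different one upward, so each leaf label is consumed as an internal representative at most once. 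Since the reassigned $r(v)$ is still a leaf of $v$'s subtree, the $O(2^i)$-proximity of $r(v)$ to every point below $v$ — which your stretch argument and the edge-inclusion criterion both rely on — is preserved. With that amendment your sketch is sound and consistent with the construction cited in the paper.
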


The spanner of Gottlieb and Roditty \cite{GR082} may have a large diameter.
To reduce the diameter, we employ the following tree-shortcutting theorem from \cite{SE10}.
%
\begin{theorem} [Theorem 3 in \cite{SE10}] \label{SE}
Let $T$ be an arbitrary $n$-vertex tree, and denote by $M_T$ the tree metric induced by $T$.
One can build in
$O(n \log_\rho n)$ time,
for any $\rho \ge 2$, a 1-spanner $G_\rho$ for $M_T$ with $|G_\rho| = O(n)$,
$\Delta(G_\rho) \le \Delta(T) + 2\rho$, and $\Lambda(G_\rho) = O(\log_\rho n + \alpha(\rho))$.
\end{theorem}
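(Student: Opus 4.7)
The plan is to combine the two theorems cited in Appendix \ref{appB} in a black-box fashion. First I would invoke Theorem \ref{GR} of Gottlieb and Roditty to obtain, in $O(n \cdot \log n)$ time, a $(1+\eps/2)$-spanner $H_0$ for $M$ together with a tree-skeleton $T$, where both $H_0$ and $T$ have constant (maximum) degree. At this stage $H_0$ already gives stretch, sparsity, and degree, but its hop-diameter may be as large as the depth of $T$, which is the obstacle I need to remove.

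Next I would apply Theorem \ref{SE} to the tree $T$. Since $T$ has $O(n)$ vertices and constant degree, Theorem \ref{SE} yields, in $O(n \cdot \log_\rho n) = O(n \cdot \log n)$ time, a 1-spanner $G_\rho$ for the tree metric $M_T$ with $|G_\rho| = O(n)$, $\Delta(G_\rho) = O(\rho)$, and $\Lambda(G_\rho) = O(\log_\rho n + \alpha(\rho))$. I would then form the output spanner $H$ as the union of $H_0$ with the translation of $G_\rho$ into the original metric (each tree-metric edge $(r(u),r(v))$ of $G_\rho$ is assigned weight $\delta(r(u),r(v))$, which is no larger than its $M_T$-weight by the triangle inequality, since every $T$-path projects to an $M$-path of equal or greater length via the representative map).

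For stretch and diameter, I would use the tree-skeleton property (item 4 of the definition): for any $p,q \in P$ there is a $(1+\eps/2)$-spanner path in $H_0$ consisting of an ascending $T$-path $\Pi_{\mathrm{up}}$, a single ``middle'' edge $e$, and a descending $T$-path $\Pi_{\mathrm{down}}$. I would replace $\Pi_{\mathrm{up}}$ and $\Pi_{\mathrm{down}}$ by the corresponding shortcut paths guaranteed by the 1-spanner property of $G_\rho$: these shortcut paths have weight at most the weight of $\Pi_{\mathrm{up}}$ and $\Pi_{\mathrm{down}}$ respectively, but each uses only $O(\log_\rho n + \alpha(\rho))$ edges. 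This yields a $(1+\eps/2)$-spanner path between $p$ and $q$ in $H$ with $O(\log_\rho n + \alpha(\rho))$ hops, giving the desired diameter bound (and rescaling $\eps/2 \to \eps$ absorbs the constant factor).

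The remaining parameters are immediate: $|H| \le |H_0| + |G_\rho| = O(n)$, and $\Delta(H) \le \Delta(H_0) + \Delta(G_\rho) = O(1) + O(\rho) = O(\rho)$ (each original point is the representative of at most two vertices of $T$, so degrees add with at most a constant multiplicative blowup). The total running time is $O(n \cdot \log n) + O(n \cdot \log_\rho n) = O(n \cdot \log n)$. The main subtle point, and the only place where care is required, is verifying that replacing tree-metric weights by underlying $M$-distances in $G_\rho$ does not increase any shortcut-path weight above the weight of the $T$-subpath it replaces; this follows from the triangle inequality applied along the corresponding $T$-path of representatives, so no real obstacle arises.
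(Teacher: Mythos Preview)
Your proposal does not prove the stated theorem. The statement in question is Theorem~\ref{SE}, the tree-shortcutting result: given an arbitrary $n$-vertex tree $T$, construct a 1-spanner $G_\rho$ for the induced tree metric $M_T$ with the listed size, degree, and diameter bounds. Your argument, however, \emph{invokes} Theorem~\ref{SE} as a black box in order to derive Theorem~\ref{twoprop} (spanners for doubling metrics with bounded degree and small diameter). You have confused the target statement with the theorem that \emph{uses} it.

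For context: the paper does not itself prove Theorem~\ref{SE}; it is quoted from \cite{SE10} and is combined with Theorem~\ref{GR} in Appendix~\ref{appB} to obtain Theorem~\ref{twoprop}. What you wrote is essentially the paper's derivation of Theorem~\ref{twoprop}, and as a proof of \emph{that} result it is correct and matches the paper's approach. But it is not a proof of Theorem~\ref{SE}. A proof of Theorem~\ref{SE} must construct the shortcut graph $G_\rho$ directly from the tree $T$ (for example, via a recursive centroid-type decomposition of $T$ combined with the 1-dimensional 1-spanner of Theorem~\ref{1span} applied along root-to-leaf paths), and cannot assume its own conclusion.
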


Next, we describe a spanner construction $H^*$ that satisfies all conditions of Theorem \ref{twoprop}.

We start by building the spanner $H$ and its tree-skeleton $T$ that are guaranteed by Theorem \ref{GR}.
Note that $T$ contains at most $2n = O(n)$ vertices. Next,
we build the 1-spanner $G_\rho$ for the tree metric $M_T = (P,\delta_T)$ induced by $T$ that
is guaranteed by Theorem \ref{SE}.
Notice that the edge weights of $G_\rho$
are assigned according to the distance function $\delta_T$ of the tree metric $M_T$.
The 1-spanner $G_\rho$ is converted into a graph $G^*_\rho$
over the point set $P$ in the following way.
Each edge $(u,v)$ of $G_\rho$, for a pair $u,v$ of vertices in $T$,
is translated into the edge $(r(u),r(v))$
between their respective representatives.
Finally, let $H^*$ be the spanner obtained from
the union of the graphs $H$ and $G^*_\rho$.

It is easy to see
that the graph $H^*$ satisfies all conditions of Theorem \ref{twoprop}.

\end {document}